\theoremstyle{plain}
\newtheorem{thm}{Theorem}
\crefname{thm}{theorem}{theorems}
\Crefname{thm}{Theorem}{Theorems}
\newtheorem{lem}{Lemma}
\crefname{lem}{lemma}{lemmas}
\Crefname{lem}{Lemma}{Lemmas}
\newcommand{\sq}[2]{\filldraw[fill=gray!30,draw=black] (#1,#2) rectangle ++(1,1);}
\newcommand{\edgeSet}{\mathcal{E}}
\newcommand{\plaquetteSet}{\mathcal{P}}
\newcommand{\vertexSet}{\mathcal{V}}
\newcommand{\pdagger}{\phantom{\dagger}}
\newcommand{\aqa}{$\langle aQa ^L\rangle $ Applied Quantum Algorithms, Universiteit Leiden}
\newcommand{\lorentz}{Instituut-Lorentz, Universiteit Leiden, Niels Bohrweg 2, 2333 CA Leiden, Netherlands}
\newcommand{\princetonphysics}{Department of Physics, Princeton University, Princeton, NJ 08544, USA}
\newcommand{\princetontheory}{Department of Electrical and Computer Engineering, Princeton University, Princeton, NJ 08544, USA}
\newcommand{\ulm}{Institute for Complex Quantum Systems, Ulm University, 89069 Ulm, Germany}
\newcommand{\iqst}{Center for Integrated Quantum Science and Technology (IQST), Ulm-Stuttgart, Germany}
\begin{document}
\preprint{APS/123-QED}

\title{Tailoring Bell inequalities to the qudit toric code and self testing}

\author{Elo\"ic Vall\'ee\,\orcidlink{0009-0005-2513-152X}}
\email{vallee@lorentz.leidenuniv.nl}
\affiliation{\aqa}
\affiliation{\lorentz}
 
\author{Owidiusz Makuta\,\orcidlink{0000-0002-0070-8709}}
\affiliation{\aqa}
\affiliation{\lorentz}

\author{Patrick Emonts\,\orcidlink{0000-0002-7274-4071}}
\affiliation{\aqa}
\affiliation{\lorentz}
\affiliation{\ulm}
\affiliation{\iqst}

\author{Rhine Samajdar\,\orcidlink{0000-0001-5171-7798}}
\affiliation{\princetonphysics}
\affiliation{\princetontheory}

\author{Jordi Tura\,\orcidlink{ 0000-0002-6123-1422}}
\affiliation{\aqa}
\affiliation{\lorentz}

\date{\today}

\begin{abstract}
Bell nonlocality provides a robust scalable route to the efficient certification of quantum states.
Here, we introduce a general framework for constructing Bell inequalities tailored to the $\mathbb{Z}_d$ toric code for odd prime local dimensions.
Selecting a suitable subset of stabilizer operators and mapping them to generalized measurement observables, we compute multipartite Bell expressions whose quantum maxima admit a sum-of-squares decomposition.
We show that these inequalities are maximally violated by all states in the ground-state manifold of the $\mathbb{Z}_d$ toric code, and determine their classical (local) bounds through a combination of combinatorial tiling arguments and explicit optimization. 
As a concrete application, we analyze the case of $d=3$ and demonstrate that the maximal violation self-tests the full qutrit toric-code subspace, up to local isometries and complex conjugation. This constitutes, to our knowledge, the first-ever example of self-testing a qutrit subspace. 
Extending these constructions, we further present schemes to enhance the ratio of classical--quantum bounds and thus improve robustness to experimental imperfections. 
Our results establish a pathway toward device-independent certification of highly entangled topological quantum matter and provide new tools for validating qudit states in error-correcting codes and quantum simulation platforms.
\end{abstract}

\maketitle

\section{Introduction} \label{sec:introduction}

A central challenge in quantum information science lies in \textit{verifying} whether an experimental device faithfully prepares an intended quantum state~\cite{carrasco2021theoretical,huang2025certifying}. 
Traditional characterization techniques in this regard typically depend on detailed knowledge of the internal mechanisms of the device. 
Unfortunately, such approaches quickly become infeasible as systems scale up and physical Hilbert spaces deviate from idealized theoretical models giving rise to errors such as qubit leakage in superconducting platforms~\cite{blais_circuit_2021}.

Bell nonlocality~\cite{brunner_bell_2014} provides a natural framework for addressing this problem. 
Beyond its foundational significance, it underpins the concept of self-testing, a device-independent approach that certifies quantum states and operations without requiring assumptions about their internal implementation~\cite{supic_self-testing_2020}. 
By offering a practical, assumption-free certification tool, self-testing enables the reliable validation of quantum devices, even in complex regimes where device-dependent methods can fail.

Early work on self-testing primarily focused on two-qubit states~\cite{mayers_self_2004,mckague_robust_2012}.
With the rapid growth of the field, methods for self-testing bipartite high-dimensional entangled states have been developed~\cite{kaniewski_maximal_2019,sarkar_self-testing_2021,meyer_robustly_2025}.
More recent developments have extended these methods to encompass graph states~\cite{baccari_scalable_2020}, including approaches applicable to systems of any prime local dimension~\cite{santos_scalable_2023}, as well as to the stabilizer subspaces of qubits~\cite{makuta_self-testing_2021,guo_certification_2025}.
Experimental advances have also kept pace: the two-qubit maximally entangled state has been self-tested in experiments closing all loopholes on different platforms~\cite{hensen_loophole-free_2015,giustina_significant-loophole-free_2015,shalm_strong_2015}. 
Despite this progress, advancing self-testing more generally remains challenging. The algebraic complexity of self-testing constructions increases rapidly with both the number of parties and the local dimension, and no general method currently exists for designing Bell inequalities tailored to arbitrary systems. 

In particular, identifying efficient certification techniques is an especially relevant but largely unexplored direction for two broad classes of systems.
First, in setups such as circuit QED~\cite{blais_circuit_2021} or bosonic codes~\cite{michael_new_2016}, qubits are often encoded within \textit{subspaces} of larger Hilbert spaces. 
From an experimental standpoint, it is therefore crucial to develop self-testing methods capable of certifying such encoded subspaces, including higher-dimensional encodings such as qutrits. 
Second, relatively little attention has been devoted to the certification of \emph{topological} quantum states using nonlocality-based approaches.

One of the most extensively studied models of topological quantum matter is the celebrated toric code~\cite{kitaev_fault-tolerant_2003}. 
This exactly solvable model is known to realize an exotic  topologically ordered phase, characterized by long-range entanglement and fractionalized anyonic excitations~\cite{sachdev_kagome-_1992,wen_quantum_2002}.
Experimentally, such toric code states have been realized in both analog and digital quantum simulators, using neutral atoms~\cite{semeghini_probing_2021} and superconducting qubits~\cite{satzinger_realizing_2021}, respectively. 
Beyond its significance as a model of $\mathbb{Z}_2$ topological order, however, the toric code is also one of the simplest examples of a quantum error-correcting code. 
On a topologically nontrivial manifold with genus $g$, the toric code exhibits a protected ground-state degeneracy of $2^{2g}$, which allows it to encode $2g$ logical qubits.
Simplified variants of this logical encoding, such as the surface code, are among the leading paradigms today for fault-tolerant quantum computation and form the architectural basis for many contemporary quantum processors~\cite{acharya_quantum_2025}.

Notably, the toric code can be generalized beyond its qubit implementation---that is, beyond the case of two-level systems per site---to \emph{qudit} systems. 
In this work, we focus on constructing Bell inequalities for the $\mathbb{Z}_d$ toric code with prime $d \geq 3$, which defines a topologically ordered subspace of the full Hilbert space. 
Specifically, we design Bell inequalities that are \emph{maximally violated} by the $\mathbb{Z}_d$ toric code, thereby enabling a fully device-independent certification of the corresponding quantum states.

As a concrete demonstration, we explicitly analyze the self-testing of the $\mathbb{Z}_3$ toric code, establishing a direct connection to earlier work on the $\mathbb{Z}_2$ case~\cite{baccari_device-independent_2020}. 
A key distinction between our approach and that of prior studies such as Ref.~\onlinecite{santos_scalable_2023} lies in the construction method: while the latter relies on the graph-state representation of the system, our framework builds the Bell inequality from a carefully selected subset of stabilizer operators intrinsic to the toric code. 
To the best of our knowledge, this constitutes the first device-independent self-test of a \emph{qutrit subspace}, highlighting the potential of our formalism for certifying higher-dimensional and non-symmetry-breaking quantum states.
Our results have direct implications for both quantum error correction and quantum simulation, as they provide a principled means to certify the correct preparation of topological states---key resources for these applications.

The remainder of this paper is organized as follows. 
In \Cref{sec:preliminaries}, we review the theoretical background on the toric code, Bell inequalities, self-testing, and Bell nonlocality. 
\Cref{sec:bell_inequality_for_toric_code} introduces our method for constructing a Bell inequality tailored to the $\mathbb{Z}_d$ toric code and outlines the procedure for determining its quantum and classical bounds. 
In \Cref{sec:self-testing_the_toric_code}, we present our self-testing results, including an explicit construction for the case of $d=3$. 
\Cref{sec:multiple_special_sites} discusses connections to previous work and possible extensions that yield Bell inequalities with improved robustness to experimental imperfections. 
Finally, in \Cref{sec:conclusion}, we summarize our findings and outline directions for future research.

\section{Preliminaries} \label{sec:preliminaries}
In this section, we introduce the key definitions and background concepts central to our discussion of the toric code, Bell inequalities, and self-testing. 
These preliminaries establish a common framework for the arguments that follow. 
Given the dense notation, \Cref{app:glossary} summarizes all the main symbols used in this manuscript.

\subsection{The toric code} \label{subsec:toric_code}

The $\mathbb{Z}_2$ toric code~\cite{kitaev_fault-tolerant_2003} is a paradigmatic example of a stabilizer-based topological quantum error-correcting code.
Its central operating principle is to encode information in the degenerate ground-state subspace of a Hamiltonian defined on a lattice of $S$\,$=$\,$1/2$ spins, or qubits, on a torus.
Each edge of the lattice hosts a spin, which implies that the total number of spins $N$ must be even under periodic boundary conditions. 
We label qubit sites, vertices, and plaquettes using a double index $(i,j)$, as illustrated in \Cref{fig:toric_code_setup}. 
The sets of all sites, plaquettes, and vertices are denoted by $\edgeSet$, $\plaquetteSet$, and $\vertexSet$, respectively, with cardinalities $|\edgeSet| = N$ and $|\plaquetteSet| = |\vertexSet| = N/2$. 

The Hamiltonian of the toric code  is defined as
\begin{equation}
    H = - \sum_{(i,j) \in \vertexSet} V^{\pdagger}_{(i,j)} - \sum_{(i,j) \in \plaquetteSet} P^{\pdagger}_{(i,j)},
\end{equation}
where the so-called star and plaquette terms, the stabilizer operators, are given by
\begin{alignat}{2}
\nonumber
V^{\pdagger}_{(i,j)} &\equiv X^{\pdagger}_{(i-1,j)} X^{\pdagger}_{(i,j-1)} X^{\pdagger}_{(i,j+1)} X^{\pdagger}_{(i+1,j)}, \quad &&(i,j) \in \vertexSet,\\
P^{\pdagger}_{(i,j)} &\equiv Z^{\pdagger}_{(i-1,j)} Z^{\pdagger}_{(i,j-1)} Z^{\pdagger}_{(i,j+1)} Z^{\pdagger}_{(i+1,j)}, \quad &&(i,j) \in \plaquetteSet,
\end{alignat}
respectively, and $X$ and $Z$ denote the Pauli operators.
Note that, according to our convention, not every pair $(i,j)$ defines both a star and a plaquette operator.

This construction can be generalized to spins of arbitrary on-site dimension (qudits).
Specifically, we consider a directed lattice (as shown in \Cref{fig:toric_code_setup}) and define the $d$-dimensional Pauli operators, also known as the \emph{shift} and \emph{clock} operators, as
\begin{align}\label{eq:shift_and_clock_operators}
\begin{aligned}
X &\equiv \sum_{k=0}^{d-1} \ketbra{(k+1) \bmod d}{k}, \\
Z &\equiv \sum_{k=0}^{d-1} \omega^k \ketbra{k},
\end{aligned}
\end{align}
where $\omega \equiv \exp(2\pi i / d)$ is the $d$th root of unity. 
Here, $X$ cyclically shifts computational basis states, while $Z$ applies a phase proportional to the state index.
The generalized star and plaquette operators for the $\mathbb{Z}_d$ toric code then read
\begin{alignat}{2} 
\nonumber
V^{\pdagger}_{(i,j)} &\equiv X^{\dagger}_{(i-1,j)} X^{\dagger}_{(i,j-1)} X^{\pdagger}_{(i,j+1)} X^{\pdagger}_{(i+1,j)}, \quad &&(i,j) \in \vertexSet,\\
P^{\pdagger}_{(i,j)} &\equiv Z^{\pdagger}_{(i-1,j)} Z^{\dagger}_{(i,j-1)} Z^{\pdagger}_{(i,j+1)} Z^{\dagger}_{(i+1,j)}, \quad &&(i,j) \in \plaquetteSet.
\label{eq:plaquette_and_vertex_operators}
\end{alignat}
In this definition, the application of the adjoint operator depends on the lattice orientation: for star operators $V$, sites living on links that are oriented toward a vertex are acted upon with $X^{\dagger}$, whereas for plaquette operators $P$, sites on anticlockwise-oriented edges are acted upon with $Z^{\dagger}$.

For convenience, we also introduce a set of additional operators
\begin{equation} \label{eq:extra_operators}
\begin{split}
    E^{\pdagger}_{(i,j)}(x) \equiv &\, V_{(i,j-1)}^{1-x} P_{(i+1,j)}^x \\
    = &\, X^{x-1}_{(i-1,j-1)} X^{x-1}_{(i,j-2)} [X^{1-x}Z^x]^{\pdagger}_{(i,j)} \\
    & [X^{1-x}Z^{-x}]^{\pdagger}_{(i+1,j-1)} Z^{x}_{(i+1,j+1)} Z^{-x}_{(i+2,j)},
\end{split}
\end{equation}
where $x \in \{2,\dots,d-1\}$ and $(i,j)$ corresponds to a site on a vertical edge.
A visualization of the operators $V$, $P$, and $E(x)$ is shown in \Cref{fig:toric_code_setup}.

\begin{figure}
    \centering
    \includegraphics[width=\columnwidth]{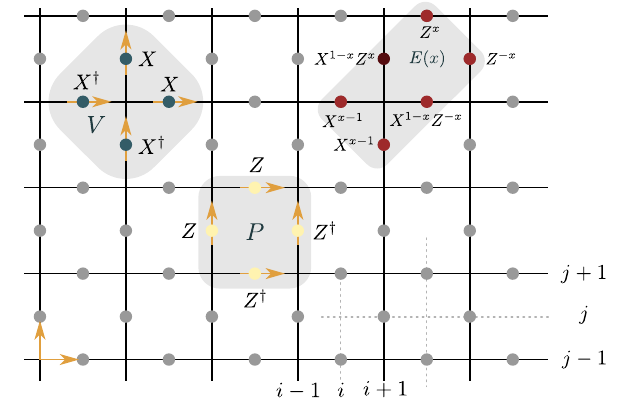}
    \caption{Schematic illustration of the $\mathbb{Z}_d$ toric code on a directed toroidal lattice.
        Orange arrows indicate the directionality of the links.
        The two main types of stabilizer operators, $V$ and $P$, are marked in blue and yellow, respectively.
        An additional operator, $E(x)$, is shown in red and indexed by the \enquote{special site} (dark red); this operator plays an important role in the construction of the Bell inequality.
        The indexing convention is depicted in the bottom right corner with dashed lines.
    }
    \label{fig:toric_code_setup}
\end{figure}

The ground-state manifold of the toric code forms a stabilizer subspace of dimension $d^2$~\cite{gottesman_stabilizer_1997,kitaev_fault-tolerant_2003}.
The set of stabilizing operators is generated by 
\begin{equation} \label{eq:toric_code_generators}
    \left\{V^{\pdagger}_{(i,j)} \,|\, (i,j) \in \vertexSet \right\} \cup  \left\{P^{\pdagger}_{(i,j)} \,|\, (i,j) \in \plaquetteSet \right\}.
\end{equation}
This generating set is not minimal, since $\prod_{(i,j) \in \vertexSet} V_{(i,j)} = \prod_{(i,j) \in \plaquetteSet} P_{(i,j)} = \mathbb{I}$, and this redundancy is precisely what leads to the topologically protected $d^2$-fold ground-state degeneracy.
The additional operators $E_{(i,j)}(x)$ are also stabilizing operators, as they correspond to products of vertex and plaquette operators.

\subsection{Bell inequalities} \label{subsec:bell_inequalities}

To self-test the toric code, one must first construct a suitable Bell inequality. 
In the following, we provide a brief overview of Bell inequalities and the concept of nonlocality.

Nonlocality stipulates that some correlations in nature cannot be described by classical physics~\cite{bell_einstein_1964,brunner_bell_2014}.
It can be revealed through the violation of a Bell inequality.
More precisely, consider a system divided into $N$ subsystems (for instance, a system of $N$ spins), where each subsystem allows for a set of possible measurements (e.g., measurements of spin along the $\hat{x}$ or $\hat{z}$ axes).
For simplicity, we assume that all measurements have the same number of outcomes $d$.
A measurement choice (also called an \emph{input}) is denoted by a vector $\boldsymbol{x}$, where the $i$th entry specifies the measurement setting for the $i$th subsystem, and the measurement outcome is represented by the vector $\boldsymbol{a}$.
We denote by $P(\boldsymbol{a} \vert \boldsymbol{x})$ the probability of obtaining outcome $\boldsymbol{a}$ given input $\boldsymbol{x}$.
The correlations among subsystems are fully characterized by the set of probabilities over all possible inputs and outcomes, $\{ P(\boldsymbol{a} \vert \boldsymbol{x}) \}_{\boldsymbol{a},\boldsymbol{x}}$.

A Bell expression takes the form $\sum_{\boldsymbol{a},\boldsymbol{x}} \alpha_{\boldsymbol{a},\boldsymbol{x}} P(\boldsymbol{a} \vert \boldsymbol{x})$, with $\alpha_{\boldsymbol{a},\boldsymbol{x}} \in \mathbb{R}$.
The local (classical) bounds are denoted by $\beta^{\max}_L$ and $\beta^{\min}_L$, corresponding to the maximal and minimal values achievable using only local correlations.
The quantum bound $\beta^{\max}_Q$ represents the maximal value achievable with quantum correlations.
A convenient metric for comparing Bell expressions is the ratio between quantum and local bounds, defined as
\begin{equation} \label{eq:ratio}
    \Lambda = \frac{\beta^{\max}_Q}{\beta^{\max}_L} \;.
\end{equation}
Bell expressions with larger ratios are generally more effective for witnessing nonlocality, as they offer greater robustness to experimental imperfections in the measurement of correlations.

Bell inequalities can also be expressed in terms of random variables $\{A^{(i)}_{x,k}\}_{i,x,k}$ with $d$ possible outcomes $\{1,\omega, \omega^2, \dots, \omega^{d-1}\}$.
The expected values of these random variables are related to the measurement probabilities via the discrete Fourier transform:
\begin{equation} \label{eq:expval_generalized_observables}
    \left\langle{A^{(1)}_{x_1,k_1}\dots A^{(N)}_{x_N,k_N}} \right\rangle = \sum_{\boldsymbol{a}} \omega^{\boldsymbol{a}\cdot\boldsymbol{k}} P(\boldsymbol{a}|\boldsymbol{x}). 
\end{equation} 
As before, $\omega$ denotes the $d$th root of unity, and $\boldsymbol{k}$ indexes the (Fourier) modes.
Using this formalism, the Bell inequality can be written as
\begin{equation}
\sum_{\boldsymbol{k},\boldsymbol{x}} \tilde{\alpha}_{\boldsymbol{k},\boldsymbol{x}} \left\langle{A^{(1)}_{x_1,k_1}\dots A^{(N)}_{x_N,k_N}} \right\rangle \leq \beta,
\end{equation}
where $\tilde{\alpha}_{\boldsymbol{k},\boldsymbol{x}} = \frac{1}{d}\sum_{\boldsymbol{a}} \omega^{-\boldsymbol{a} \cdot \boldsymbol{k}} \alpha_{\boldsymbol{a},\boldsymbol{x}}$.

As a well-known example, the Clauser-Horne-Shimony-Holt (CHSH) inequality~\cite{clauser_proposed_1969} can be expressed in this language as
\begin{equation}
\expval*{A^{(1)}_{0,1} A^{(2)}_{0,1}} + \expval*{A^{(1)}_{0,1} A^{(2)}_{1,1}} + \expval*{A^{(1)}_{1,1} A^{(2)}_{0,1}} - \expval*{A^{(1)}_{1,1} A^{(2)}_{1,1}} \leq \beta,
\end{equation}
with $\beta_L^{\min} = -2$, $\beta^{\max}_L = 2$, and $\beta^{\max}_Q = 2 \sqrt{2}$.

\subsection{Self-testing} \label{subsec:self-testing}

Beyond distinguishing between local and nonlocal correlations, Bell inequalities can also be used to \emph{self-test} quantum states. 
Self-testing is a method for inferring the underlying quantum state (and measurements) from observed statistics under minimal assumptions~\cite{mayers_self_2004,supic_self-testing_2020}. 
In particular, if a quantum system achieves the maximal violation of a Bell inequality, one can conclude that the system’s state is essentially equivalent to a specific ``ideal'' target state.

Formally, let $\ket{\psi_{\text{ideal}}} \in \mathcal{H}_{\text{ideal}}$ denote the state achieving the maximal violation of a given Bell inequality. 
Consider also a state $\ket{\psi_{\text{max}}} \in \mathcal{H}$ that achieves the same maximal violation. 
If for any such $\ket{\psi_{\text{max}}}$ there exists a local unitary operator $U = U_1 \otimes \dots \otimes U_N$ such that $\mathcal{H} = \mathcal{H}_{\text{ideal}} \otimes \mathcal{H}'$ and
\begin{equation}
U \ket{\psi_{\text{max}}} = \ket{\psi_{\text{ideal}}} \otimes \ket{\psi_{\text{aux}}},
\end{equation}
where $\ket{\psi_{\text{aux}}} \in \mathcal{H}'$ is an auxiliary state, then the Bell inequality is said to \emph{self-test} the state $\ket{\psi_{\text{ideal}}}$. 

More generally, the concept of self-testing can be extended to the verification of subspaces~\cite{baccari_device-independent_2020,makuta_self-testing_2021}. 
In this case, a maximal violation must imply that the state is equivalent, up to local unitaries, to a superposition of states within the target subspace, i.e.,
\begin{equation}
U \ket{\psi_{\max}} = \sum_k \alpha_k \ket*{\psi_{\text{ideal}}^{k}} \otimes \ket*{\psi_{\text{aux}}^{k}},
\end{equation}
where $\sum_k |\alpha_k|^2 = 1$ and $\{\ket*{\psi_{\text{ideal}}^{k}}\}_k$ are orthogonal vectors spanning the subspace to be self-tested.

In cases where the states involve complex phases, i.e., $\ket{\psi^{k}_{\text{ideal}}} \neq \ket{\psi_{\text{ideal}}^{k\, *}}$, self-testing can only be performed up to complex conjugation. 
This is because the correlations obtained from measuring $\ket{\psi^{k}_{\text{ideal}}}$ are identical to those obtained from its complex conjugate $\ket{\psi_{\text{ideal}}^{k\, *}}$. 
This issue, known as the \emph{complex conjugation problem}~\cite{mckague_generalized_2011}, motivates the common practice of extending the definition of self-testing to include equivalence up to complex conjugation. 
Formally, this means showing that there exists a local unitary such that $U \ket{\psi_{\max}}$ can be written as
\begin{equation}
\sum_k \left(\alpha_{1,k} \ket*{\psi_{\text{ideal}}^{k}} \ket*{\psi_{\text{aux}}^{1,k}}
+ \alpha_{2,k} \ket*{\psi_{\text{ideal}}^{k\, *}} \ket*{\psi_{\text{aux}}^{2,k}}\right),  
\end{equation}
where $\sum_{i,k} |\alpha_{i,k}|^2 = 1$, the set of states $\{\ket*{\psi_{\text{ideal}}^{k}} \}_k \cup \{\ket*{\psi_{\text{ideal}}^{k\, *}}\}_k$ spans the ideal subspace (including complex-conjugated states), and $\ket*{\psi_{\text{aux}}^{i,k}}$ are auxiliary states that do not affect observable correlations.

In this manuscript, we adopt this extended definition of self-testing, meaning that states can be certified up to complex conjugation.
The core idea, however, remains unchanged, namely, that self-testing enables the certification of a system without making assumptions about its internal structure. 
Being inherently device-independent, self-testing provides a robust and versatile framework for validating quantum devices.

\section{Bell inequality for the toric code} \label{sec:bell_inequality_for_toric_code}

In this section, we construct a Bell inequality that is maximally violated by all states within the toric code's ground  subspace. 
To this end, we make use of a generating set of stabilizer operators, $\mathcal{S}$, which provides a natural and compact representation of the code space.
Each stabilizer generator $s \in \mathcal{S}$ is associated with a corresponding function of classical random variables, denoted by $\tilde{s}$. 
The Bell inequality is then expressed as a linear combination of these functions:
\begin{equation}
    \sum_{s \in \mathcal{S}} \alpha_{s} \expval*{\tilde{s}} + \text{c.c.} \leq \beta.
\end{equation}
The addition of the complex conjugate is unnecessary when the expression is already real. 

Before deriving the Bell inequality for the $\mathbb{Z}_d$ toric code, it is instructive to review a few simpler examples.

\paragraph{CHSH inequality:} A typical example of a Bell inequality built from stabilizer operators is the CHSH inequality~\cite{clauser_proposed_1969}.
This inequality is maximally violated by the stabilizer state $(\ket{00} + \ket{11}) / \sqrt{2}$, whose generators are $\mathcal{S} = \{X_{(1)}X_{(2)} , Z_{(1)}Z_{(2)} \}$.
Following the construction in~\cite{baccari_device-independent_2020, makuta_self-testing_2021}, we associate to each Pauli operator a linear combination of random variables: $X_1 \rightarrow  (A^{(1)}_{0,1}+A^{(1)}_{1,1})/\sqrt{2}$, $Z_1 \rightarrow (A^{(1)}_{0,1}-A^{(1)}_{1,1})/\sqrt{2}$, $X_2 \rightarrow A^{(2)}_{0,1}$ and $Z_2 \rightarrow A^{(2)}_{1,1}$.
The resultant Bell expression is given by the sum of the stabilizing operators: 
\begin{alignat}{1}
\nonumber
        & X_1X_2 + Z_1Z_2 \rightarrow \expval*{\tilde{s}_1} + \expval*{\tilde{s}_2} \\
        \nonumber
        & = \frac{\expval*{A^{(1)}_{0,1}A^{(2)}_{0,1}} + \expval*{A^{(1)}_{1,1}A^{(2)}_{0,1}} + \expval*{A^{(1)}_{0,1}A^{(2)}_{1,1}} - \expval*{A^{(1)}_{1,1}A^{(2)}_{1,1}} }{ \sqrt{2} } \\
        &\leq \beta^{\max}.
\end{alignat}
This method recovers the standard CHSH expression up to an overall factor.

\paragraph{$\mathbb{Z}_2$ toric code:} A similar approach to that of the previous example, employing analogous substitutions, has been used to derive a Bell inequality for the $\mathbb{Z}_2$ toric code~\cite{baccari_device-independent_2020}.
Again, the Bell expression is obtained by summing over the stabilizing operators:
\begin{equation} \label{eq:bell_inequality_Z2_toric_code}
    \sum_{(i,j) \in \vertexSet} \expval*{\tilde{V}_{(i,j)}} + \sum_{(i,j) \in \plaquetteSet} \expval*{\tilde{P}_{(i,j)}} + \text{c.c.} \leq \beta^{\max}.
\end{equation}
The quantum bound is given by $\beta_Q^{\max} = 2(N-4) + 4\sqrt{2}$, while the classical (local) bound is $\beta_L^{\max} = 2N$. 
Note that the conventional Bell expression does not include the complex conjugate. 
Here, we explicitly define the inequality with the complex conjugate, resulting in a doubling of both the quantum and classical bounds with respect to the original formulation. 
This choice will be important when comparing with the results derived in the subsequent sections.

\subsection{Bell expression}\label{subsec:bell_expression}
In the case of the $\mathbb{Z}_d$ toric code, the construction is a bit more involved than for $\mathbb{Z}_2$, since the number of possible outcomes is now $d$.

Taking inspiration from Refs.~\cite{kaniewski_maximal_2019,santos_scalable_2023}, we consider the set of stabilizing operators
\begin{equation} \label{eq:custom_stabilizing_operators}
\begin{split}
    \mathcal{S} = & \{ V_{(i,j)} : (i,j) \in \vertexSet \} 
    \cup \{ P_{(i,j)} : (i,j) \in \plaquetteSet \} \\
    & \cup \{ E_{(i^*,j^*)}(x) : x = 2,\dots,d-1\},
\end{split}
\end{equation}
where $V_{(i,j)}$, $P_{(i,j)}$, and $E_{(i^*,j^*)}(x)$ are defined in 
\Cref{eq:plaquette_and_vertex_operators,eq:extra_operators}, and $(i^*,j^*)$ denotes a site on a vertical edge. 
We refer to $(i^*,j^*)$ as the \emph{special site}. 
The choice of a vertical edge is arbitrary, as an analogous construction can be performed for a horizontal edge. 
In this work, nonetheless, we fix the convention that the special site lies on a vertical edge. 
For convenience, we define the sets of vertices and plaquettes excluding those adjacent to the special site as 
$\bar{\vertexSet} \equiv \vertexSet \setminus \{(i^*,j^*\pm1)\}$ and $\bar{\plaquetteSet} \equiv \plaquetteSet \setminus \{(i^*\pm1,j^*)\}$, respectively.

To construct the Bell expression, we substitute each stabilizing operator with an appropriate linear combination of random variables. 
Three types of substitutions are required, depending on the site on which the operator acts. 
On the special site $(i^*,j^*)$, we define
\begin{equation} \label{eq:substitutions}
    (X^{1-x}Z^x)^k \rightarrow 
    \bar{A}^{(i^*,j^*)}_{x,k} \equiv 
    \frac{\omega^{-kx(x+1)}}{\sqrt{d}\lambda_{k}} 
    \sum_{y = 0}^{d-1} \omega^{-kxy} A^{(i^*,j^*)}_{y,k},
\end{equation}
where $\lambda_{k}$ is a complex coefficient explained in \Cref{app:complex_coefficients} and first introduced in Ref.~\cite{kaniewski_maximal_2019}. 
At the adjacent site $(i^*+1,j^*-1)$, the substitutions are given by 
\begin{align}\label{eq:substitutions_2}
\begin{aligned}
(X^{1-x}Z^{-x})^k & \rightarrow A^{(i^*+1,j^*-1)}_{x,k} , \quad x \neq 1, \\
    Z^k & \rightarrow A^{(i^*+1,j^*-1)}_{1,k}, \quad x =1. 
\end{aligned}
\end{align}
Finally, for all other sites $(i,j)$, the relevant substitutions are $X^{k} \rightarrow A^{(i,j)}_{0,k}$ and $Z^{k} \rightarrow A^{(i,j)}_{1,k}$. 
As described earlier in \Cref{subsec:bell_inequalities}, $x \in \{0,\dots,d-1\}$ denotes the measurement inputs and $k \in \{1,\dots,d-1\}$ the Fourier modes.
These substitutions are invertible whenever the dimension $d$ is an odd prime number~\cite{kaniewski_maximal_2019}, a property that is crucial for constructing the quantum operators leading to the violation.

Our Bell inequality is then constructed as a linear combination of the stabilizing operators defined in \Cref{eq:custom_stabilizing_operators}, 
where each local operator is replaced by its corresponding generalized observable according to the substitutions in \Cref{eq:substitutions,eq:substitutions_2}. 
The coefficients of the inequality are chosen such that the resulting expression admits a simple sum-of-squares decomposition for the quantum bound (see \Cref{subsec:quantum_bound}).

This procedure yields an $N$-partite Bell inequality. 
The parties corresponding to the sites $(i^*,j^*)$ and $(i^*+1,j^*-1)$ each have $d$ measurement inputs, 
while the remaining $N-2$ parties each have two measurement inputs. 
All parties have $d$ possible outcomes. 
The Bell expression takes the form
\begin{equation} \label{eq:bell_inequality}
{\small
\begin{split}
    \Bigg[ 
    \sum_{(i,j) \in \vertexSet} \expval*{\tilde{V}_{(i,j)}}
    + \sum_{(i,j) \in \plaquetteSet} \expval*{\tilde{P}_{(i,j)}}
    + 2 \sum_{x=2}^{d-1} \expval*{\tilde{E}_{(i^*,j^*)}(x)} 
    \Bigg]
    + \text{c.c.},
\end{split}
}
\end{equation} 
where $\tilde{V}_{(i,j)}$, $\tilde{P}_{(i,j)}$, and $\tilde{E}_{(i^*,j^*)}(x)$ denote the stabilizing operators defined in 
\Cref{eq:plaquette_and_vertex_operators,eq:extra_operators}, with the substitutions from \Cref{eq:substitutions,eq:substitutions_2} applied. 
The precise forms of these terms are detailed in \Cref{app:explicit_bell_expression}.

\subsection{Quantum bound} \label{subsec:quantum_bound}

In the quantum picture, it is convenient to represent the random variables $A_{x,k}$ as quantum operators. 
Following the construction introduced in \Cref{eq:expval_generalized_observables}, we define the \textit{generalized quantum observables}, which may have complex eigenvalues but can be directly related to Hermitian measurement operators via the discrete Fourier transform
\begin{equation}
    A_{x,k} = \sum_{a=0}^{d-1} \omega^{ak} F_{x,a},
\end{equation}
where $\boldsymbol{F}_x \equiv \{F_{x,a}\}_a$ denotes a generalized measurement (a positive operator-valued measure, or POVM).
These observables satisfy the relations $A_{x,0} = \mathbb{I}$, $(A_{x,k})^{\dagger} = A_{x,-k}$, and $A_{x,k+d} = A_{x,k}$ for all $k \in \mathbb{Z}$.

In the special case of $\boldsymbol{F}_x$ corresponding to a projection-valued measurement (PVM), the generalized observables further obey $A_{x,k} = (A_{x,1})^k$. 
The operators then become unitary:
\begin{equation}
    (A_{x,k})^{\dagger}A_{x,k} = (A_{x,1})^{-k}(A_{x,1})^{k} = \mathbb{I}.
\end{equation}
Since no restrictions are imposed on the Hilbert space dimension, Naimark’s dilation theorem~\cite{holevo_statistical_2001} guarantees that one can always construct a PVM and an associated pure quantum state that reproduce the observed correlations. 
Hence, without loss of generality, we may assume that the operators $A_{x,k}$ are unitary.

This assumption greatly simplifies the analysis, allowing the Bell inequality to be expressed in a sum-of-squares (SOS) form, which in turn facilitates the derivation of the quantum bound.

\begin{thm} \label{thm:quantum_bound}
For the Bell expression in \Cref{eq:bell_inequality}, the maximum over all quantum correlations satisfies
\begin{equation}
    \beta^{\max}_Q = 2N + 4d - 8.
\end{equation}
\end{thm}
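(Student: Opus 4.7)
The plan is a two-step attack: an operator inequality $(2N+4d-8)\mathbb{I}\succeq B_{\mathrm{op}}$ established via a sum-of-squares (SOS) decomposition of the Bell operator $B_{\mathrm{op}}$, followed by a matching quantum realization that saturates the bound on the toric-code ground subspace.

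By Naimark's dilation, as invoked in the text preceding the theorem, I may assume each $A_{x,k}$ is unitary with $A_{x,k}=A_{x,1}^{k}$. For every $(i,j)\in\bar{\vertexSet}$ the term $\tilde{V}_{(i,j)}$ is then a tensor product of unitaries acting on disjoint parties---none of which is the special site---so $\tilde{V}_{(i,j)}$ is itself unitary, and the one-line identity
\begin{equation*}
(\mathbb{I}-\tilde{V}_{(i,j)})^{\dagger}(\mathbb{I}-\tilde{V}_{(i,j)})=2\mathbb{I}-\tilde{V}_{(i,j)}-\tilde{V}_{(i,j)}^{\dagger}\succeq 0
\end{equation*}
yields $\tilde{V}_{(i,j)}+\tilde{V}_{(i,j)}^{\dagger}\preceq 2\mathbb{I}$. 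The analogous statement holds for every $\tilde{P}_{(i,j)}$ with $(i,j)\in\bar{\plaquetteSet}$. Summing these $N-4$ pointwise bounds accounts for exactly $2(N-4)$ of the target maximum.

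The hard part is the special-site block: the four star/plaquette operators adjacent to $(i^{*},j^{*})$ together with the $d-2$ extras $\tilde{E}_{(i^{*},j^{*})}(x)$. These do not individually factor as products of unitaries, since $\bar{A}^{(i^{*},j^{*})}_{x,k}$ in \eqref{eq:substitutions} is a Fourier-weighted combination of the $A_{y,k}$. I would treat the whole block jointly: define polynomials $R_{x}=\mathbb{I}-\tilde{O}_{x}$ for $x\in\{0,1,\dots,d-1\}$, where $\tilde{O}_{0},\tilde{O}_{1}$ are the special-site $\tilde{V},\tilde{P}$ terms and $\tilde{O}_{x}=\tilde{E}_{(i^{*},j^{*})}(x)$ for $x\ge 2$, and form a weighted sum $\sum_{x,k}\mu_{x,k}R_{x}^{\dagger}R_{x}$ whose cross terms collapse via the discrete Fourier identity $\sum_{x=0}^{d-1}\omega^{-kx(y-y')}=d\,\delta_{y,y'}$. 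The coefficients $\lambda_{k}$ of \eqref{eq:substitutions} together with the multiplicative factor $2$ in front of $\tilde{E}$ in \eqref{eq:bell_inequality} are engineered precisely so that, after the collapse, the residual scalar multiple of $\mathbb{I}$ on the right-hand side equals $8+(4d-8)$, matching the missing special-site contribution. Combining the easy and hard blocks gives $2(N-4)+8+(4d-8)=2N+4d-8$.

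For achievability, take $\ket{\psi}$ to be any state in the toric-code ground-state manifold and choose the canonical measurements: $A^{(i,j)}_{0,k}=X^{k}_{(i,j)}$ and $A^{(i,j)}_{1,k}=Z^{k}_{(i,j)}$ at every non-special site; $A^{(i^{*}+1,j^{*}-1)}_{x,k}=(X^{1-x}Z^{-x})^{k}$ for $x\ne 1$ with $A^{(i^{*}+1,j^{*}-1)}_{1,k}=Z^{k}$; and at $(i^{*},j^{*})$ the unique family $\{A_{y,k}\}_{y}$ obtained by inverting \eqref{eq:substitutions}, which is well defined thanks to invertibility for odd prime $d$. Under this choice each $\tilde{V}$, $\tilde{P}$, and $\tilde{E}$ reduces to the genuine stabilizer $V$, $P$, or $E$, each with eigenvalue $+1$ on $\ket{\psi}$, so the Bell expression attains exactly $2N+4d-8$. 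The principal obstacle is the construction of the special-site SOS in the previous paragraph: the $\lambda_{k}$ bookkeeping---whose multipartite analogue of the bipartite argument of Ref.~\cite{kaniewski_maximal_2019} dictates both the coefficient $2$ in \eqref{eq:bell_inequality} and the numerical value of the residual constant---is the step where the argument is most delicate.
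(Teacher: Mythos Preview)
Your approach is essentially the paper's: the same $|\mathbb{I}-\tilde{O}|^2$ SOS terms, the same Fourier collapse $\sum_x \bar{A}_{x,1}^{\dagger}\bar{A}_{x,1}=d\,\mathbb{I}$, and the same achievability via the inverse substitutions on the toric-code ground space. Two points in your special-site block are off and would block the computation as written.

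First, there are \emph{four} special-site star/plaquette operators, not two: $\tilde{V}_{(i^{*},j^{*}-1)}$, $\tilde{V}_{(i^{*},j^{*}+1)}$, $\tilde{P}_{(i^{*}-1,j^{*})}$, $\tilde{P}_{(i^{*}+1,j^{*})}$. Each of these factors as (unitary on neighbors)$\,\times\,\bar{A}^{(i^{*},j^{*})}_{x,1}$ with $x=0$ for the two $\tilde{V}$'s and $x=1$ for the two $\tilde{P}$'s, so that $\tilde{O}^{\dagger}\tilde{O}=\bar{A}_{x,1}^{\dagger}\bar{A}_{x,1}$. It is precisely this doubling at $x=0,1$ that matches the explicit factor $2$ in front of $\tilde{E}(x)$ for $x\ge 2$, yielding the uniform weight $2\sum_{x=0}^{d-1}\bar{A}_{x,1}^{\dagger}\bar{A}_{x,1}=2d\,\mathbb{I}$. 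With only one $\tilde{V}$ and one $\tilde{P}$, the weights on $x=0,1$ would be $1$ and the Fourier identity would not close.

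Second, the role of $\lambda_k$ is simpler than you suggest: the only fact used in the SOS is $|\lambda_k|=1$, so that the prefactor in \eqref{eq:substitutions} has unit modulus and $\sum_x\bar{A}_{x,1}^{\dagger}\bar{A}_{x,1}=d\,\mathbb{I}$ drops out immediately. The detailed phases of $\lambda_k$ matter only for the \emph{inverse} map (achievability), not for the upper bound. There is also no sum over $k$ in the SOS: the Bell operator only involves $k=1$ (plus the Hermitian conjugate), so each square is simply $|\mathbb{I}-\tilde{O}|^2$ with no extra index.
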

The proof follows from the following SOS decomposition:
\begin{widetext}
\begin{alignat}{1} \label{eq:sum-of-squares}
\begin{split}
(2N + 4d - 8) 
&- \left[ 
\left( 
\sum_{(i,j) \in \vertexSet} \tilde{V}^{\pdagger}_{(i,j)} 
+ \sum_{(i,j) \in \plaquetteSet} \tilde{P}^{\pdagger}_{(i,j)} 
+ 2 \sum_{x = 2}^{d-1} \tilde{E}^{\pdagger}_{(i^*,j^*)}(x) 
\right) 
+ \text{h.c.} 
\right] \\
&= \left |\mathbb{I} - \tilde{V}^{\pdagger}_{(i^*,j^*-1)}\right|^2 
+ \left|\mathbb{I} - \tilde{V}_{(i^*,j^*+1)}^{\dagger}\right|^2 
+ \left|\mathbb{I} - \tilde{P}_{(i^*-1,j^*)}^{\dagger}\right|^2 
+ \left|\mathbb{I} - \tilde{P}^{\pdagger}_{(i^*+1,j^*)}\right|^2 \\
&+ 2 \sum_{x=2}^{d-1} \left|\mathbb{I} - \tilde{E}^{\pdagger}_{(i^*,j^*)}(x)\right|^2 
+ \sum_{(i,j) \in \bar{\vertexSet}} \left|\mathbb{I} - \tilde{V}^{\pdagger}_{(i,j)}\right|^2 
+ \sum_{(i,j) \in \bar{\plaquetteSet}} \left|\mathbb{I} - \tilde{P}^{\pdagger}_{(i,j)}\right|^2 
\quad \geq 0,
\end{split}
\end{alignat}
\end{widetext}
where $|M|^2 = M^{\dagger}M$. 
The details of the derivation are provided in \Cref{app:quantum_bound}.
Although the substitutions differ slightly, the quantum bound for prime $d \geq 3$ coincides with that of the $d = 2$ case in \Cref{eq:bell_inequality_Z2_toric_code}.

From the SOS form in \Cref{eq:sum-of-squares}, it follows that the bound is saturated precisely when each term in the sum vanishes. 
This condition holds when measuring any state within the ground-state subspace of the toric code using the following generalized observables:
\begin{equation}\label{eq:inverse_substitutions}
\begin{aligned}
    A_{x,k}^{(i^*,j^*)} &= 
        \frac{\lambda_{k}}{\sqrt{d}}
        \sum_{y=0}^{d-1} 
        \omega^{kxy} \omega^{ky(y+1)} (X^{1-y}Z^{y})^k, \\[6pt]
    A^{(i^*+1,j^*-1)}_{x,k} &= 
        \begin{cases}
            (X^{1-x}Z^{-x})^k, & x \neq 1,\\
            Z^k, & x = 1,
        \end{cases} \\[6pt]
    A^{(i,j)}_{0,k} &= X^k, \qquad
    A^{(i,j)}_{1,k} = Z^k.
\end{aligned}
\end{equation}
This construction corresponds to the inverse map of \Cref{eq:substitutions} and is well-defined only when the dimension $d$ is a prime number.

\subsection{Local bound} \label{subsec:local_bound}

The standard approach to computing the local bound of a linear Bell expression involves performing a discrete optimization over all \textit{local deterministic strategies} (LDS).
For correlations arising from LDS, the expected value of the random variables satisfies
\begin{equation}\label{eq:lds_assumption}
    \langle{A^{(1)}_{x_1,k_1} \dots A^{(N)}_{x_N,k_N}}\rangle 
    =\langle{A^{(1)}_{x_1,1}}^{k_1}\rangle  \cdots \langle{A^{(N)}_{x_N,1}}^{k_N}\rangle 
    \,\, \text{(LDS)},
\end{equation}
and the optimization is performed over independent variables 
$\{\expval*{A^{(i)}_{x,1}}\}_{i,x}$, each taking a value in 
$\{1, \omega, \dots, \omega^{d-1}\}$.

For the Bell inequality in \Cref{eq:bell_inequality}, the number of independent variables is $2d + 2(N-2)$ (two parties with $d$ inputs and $N-2$ parties with two inputs). 
Thus, the optimization spans $d^{2N + 2d - 4}$ possible configurations. 
While this is manageable for small $N$ and $d$, the exponential scaling quickly renders it intractable for larger systems. 
To address this, we partition the global optimization into multiple smaller optimizations and verify that the individually optimal solutions are mutually compatible.

Each plaquette and vertex term is optimized separately, except for the terms involving the special site $(i^*,j^*)$, which are optimized jointly. 
This tiling procedure is illustrated in \Cref{fig:toric-code_tiling}(a). 
There are three types of tiles: individual plaquettes far from the special site (gray), single vertices far from the special site (white), and the \textit{special tile} containing all terms associated with the special site (colored region).

\begin{figure}
    \centering
    \includegraphics[width=\columnwidth]{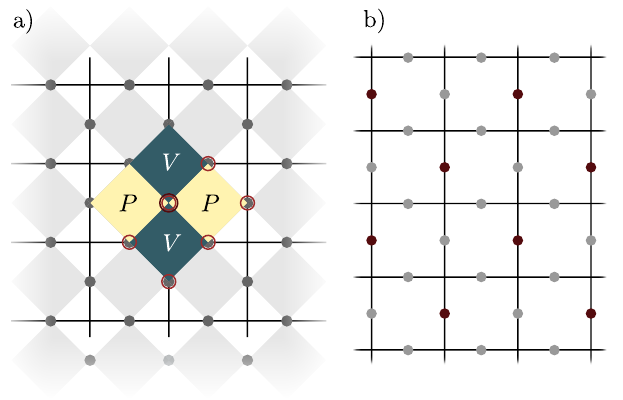}
    \caption{(a) 
    To compute the local bound, the global optimization routine is subdivided into local optimizations. 
    Terms in \Cref{eq:bell_inequality} that reside outside the colored area  are optimized individually, while those inside are jointly optimized.
    The special site is marked by the dark red ring at the center. 
    The $V$ and $P$ operators are represented by colored squares, while the $E$ operator is indicated by bright red rings. 
    (b) 
    Example configuration with multiple special sites, shown in red. 
    All stabilizing operators appearing in the Bell expression in \Cref{eq:bell_inequality_general} include at most one special site.}
    \label{fig:toric-code_tiling}
\end{figure}

The Bell expression restricted to the special tile is given by $\expval{T_{(i_*,j_*)}} + \text{c.c.}$, where 
\begin{alignat}{1} 
\nonumber
    \expval*{T_{(i_*,j_*)}} \equiv &\Bigg[ 
    \expval*{\tilde{V}_{(i^*,j^*-1)}} 
    + \expval*{\tilde{V}_{(i^*,j^*+1)}} 
    + \expval*{\tilde{P}_{(i^*-1,j^*)}} \\
    &+ \expval*{\tilde{P}_{(i^*+1,j^*)}} 
    + 2 \sum_{x=2}^{d-1} \expval*{\tilde{E}_{(i^*,j^*)}(x)} 
    \Bigg] \;.
    \label{eq:special_tile}
\end{alignat}
The whole Bell expression in \Cref{eq:bell_inequality} can then be written as 
\begin{equation}
    \left[ \expval*{T_{(i_*,j_*)}} + \sum_{(i,j) \in \bar{\vertexSet}} \expval*{\tilde{V}_{(i,j)}} +\sum_{(i,j) \in \bar{\plaquetteSet}} \expval*{\tilde{P}_{(i,j)}} \right] + \text{c.c.},
\end{equation}
where $\bar{\plaquetteSet}$ and $\bar{\vertexSet}$ are the sets of plaquettes and vertices far from the special site, as defined in the beginning of \Cref{subsec:bell_expression}.

The optimization of a single vertex (or plaquette) far from the special site involves only four variables, e.g.
\begin{alignat*}{1}
    &\expval*{\tilde{V}_{(i,j)}} + \text{c.c.} \\
    &= 2 \Re\!\left(\!
        \expval*{A^{(i-1,j)}_{0,1}}^{d-1} 
        \expval*{A^{(i,j-1)}_{0,1}}^{d-1}
        \expval*{A^{(i,j+1)}_{0,1}} 
        \expval*{A^{(i+1,j)}_{0,1}}\!
    \right),
\end{alignat*}
where each variable takes a value in $\{1, \omega, \dots, \omega^{d-1}\}$.
Hence, for an individual plaquette or vertex, the maximum value is $2$, while the minimum value is $2 \Re(\omega^{(d-1)/2})$.
Since there are $N-4$ such terms far from $(i^*,j^*)$, we obtain the following result.

\begin{thm} \label{thm:local_bound}
For the Bell expression in \Cref{eq:bell_inequality}, the maximum and minimum over all local correlations satisfy
\begin{equation} \label{eq:local_bound}
\begin{split}
    \beta^{\max}_L & \leq 2(N - 4) + \beta^{\max}_{*}, \\
    \beta^{\min}_L & \geq 2(N - 4)\Re(\omega^{\frac{d-1}{2}}) + \beta^{\min}_{*},
\end{split}
\end{equation}
where $\beta^{\max}_{*}$ ($\beta^{\min}_{*}$) denotes the maximum (minimum) value of the expression $\expval*{T_{(i_*,j_*)}} + \mathrm{c.c.}$, given in \Cref{eq:special_tile}, over all possible local deterministic strategies.

In the special case $d = 3$, these bounds are tight.
\end{thm}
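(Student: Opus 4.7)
The plan leverages the tile decomposition shown in Fig.~\ref{fig:toric-code_tiling}(a). Under a local deterministic strategy, Eq.~(\ref{eq:lds_assumption}) forces every expectation value to factorize into a product of independent site-level expectations $\expval*{A^{(i)}_{x,1}} \in \{1,\omega,\ldots,\omega^{d-1}\}$. The full Bell expression in Eq.~(\ref{eq:bell_inequality}) thus splits into the special tile $\expval*{T_{(i_*,j_*)}} + \text{c.c.}$ plus $N-4$ independent outside tiles, each a real number of the form $2\Re(z_1^{a_1}z_2^{a_2}z_3^{a_3}z_4^{a_4})$ with exponents $a_k \in \{1,d-1\}$ and $z_k$ a $d$-th root of unity.

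Since $\gcd(d-1,d)=1$ for prime $d$, varying any single $z_k$ cycles the monomial through every $d$-th root of unity. Each outside tile therefore attains maximum $2$ when the monomial equals $1$ and, for odd $d$, minimum $2\Re(\omega^{(d-1)/2})$ at the root closest to $-1$. Applying the elementary inequalities $\max\sum \leq \sum\max$ and $\min\sum \geq \sum\min$ to the $N-4$ outside tiles together with the special tile immediately yields both bounds in Eq.~(\ref{eq:local_bound}).

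For tightness at $d=3$, I would exhibit explicit LDS saturating each bound. Fix any LDS realizing $\beta_*^{\max}$ (respectively $\beta_*^{\min}$) on the special tile, and extend this assignment site-by-site by sweeping the outside tiles in order of increasing graph distance from the special site. At each step, the current tile still has at least one unassigned variable, and since the exponents $1$ and $2$ are both invertible modulo $3$, this free variable can be set so that the tile's monomial equals the optimal value ($1$ for the maximum, $\omega$ for the minimum). Summing the individual contributions then reproduces $2(N-4) + \beta_*^{\max}$ and $-(N-4) + \beta_*^{\min}$, matching Eq.~(\ref{eq:local_bound}) for $d=3$.

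The main obstacle is verifying that the sweep always leaves a free variable in each outside tile---in particular, for those tiles that share boundary sites with the already-fixed special tile. This amounts to a small combinatorial check on the tile adjacency graph in the immediate neighborhood of the special site, which can be performed by direct inspection of Fig.~\ref{fig:toric-code_tiling}(a). Away from this region, the regular $4$-valent structure of the lattice and the alternation of vertex and plaquette tiles guarantee by induction on distance that every new tile introduces at least one previously unassigned site, so the extension completes globally on the torus.
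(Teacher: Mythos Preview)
Your argument for the bounds in Eq.~\eqref{eq:local_bound} is correct and coincides with the paper's: under an LDS each outside vertex or plaquette term reduces to $2\Re(\text{root of unity})$, with range $[2\Re(\omega^{(d-1)/2}),\,2]$, and summing the $N-4$ individual optima together with $\beta_*^{\max/\min}$ gives the stated inequalities.

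For tightness at $d=3$, your greedy sweep is a genuinely different route from the paper's, and it has a gap. The paper first checks by brute force that among the special-tile optima there is one with \emph{all boundary variables equal to $1$}; it then decomposes the complementary region into a short list of elementary polyominoes (Lemma~\ref{lem:polyomino_decomposition}: $2$-domino, $3$-domino, L-, T-, plus-shapes) and exhibits, for each shape, an explicit minimizing assignment whose boundary variables are again all $1$. Compatibility across tiles---and with the special tile---is then automatic. Your sweep dispenses with both ingredients, but the claim that ``every new tile introduces at least one previously unassigned site'' via induction on distance does not survive the torus topology: when the level sets of the distance function wrap around and meet on the far side, a tile can have all four of its neighbours already processed, and nothing in your argument prevents each of those neighbours from having already chosen the shared edge as its free variable. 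Put differently, the global constraints $\prod_{(i,j)\in\vertexSet} V_{(i,j)}=\prod_{(i,j)\in\plaquetteSet} P_{(i,j)}=\mathbb{I}$ impose one relation among the tile monomials; whether the last outside tile can still be set to its optimum therefore depends on the product of the two special-tile monomials, which in turn depends on \emph{which} special-tile optimum you started from---a point you never examine (you write ``fix any LDS realizing $\beta_*^{\max}$''). The paper's boundary-equals-$1$ device, combined with the polyomino decomposition, is precisely what decouples the tiles and dissolves this global obstruction.
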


The proof of the first part of the theorem is straightforward. 
We consider the optimal value of the special tile together with the optimal contributions of each individual plaquette and vertex. 
Since optimizing each term separately cannot yield a value smaller than the global maximum, the inequality follows directly.

The second part of the proof involves tessellating the whole grid with elementary tiles (see \Cref{app:local_bound}) 
and constructing optimal strategies for each tile that are mutually compatible. 
In particular, we show that there always exists a strategy achieving the bound in which the boundary variables are all equal to $1$, thereby ensuring compatibility among the tiles. 
Further details are provided in \Cref{app:local_bound}.

We expect that the bounds in \Cref{eq:local_bound} are tight for all prime dimensions $d \geq 2$. 
This expectation is supported by the observation that the number of free variables increases with $d$, making it increasingly likely that a compatible configuration can be found.
Note also that, similar to the quantum bound, the local bound for prime $d \geq 3$ coincides with that of the $d = 2$ case, as given in \Cref{eq:bell_inequality_Z2_toric_code}, despite the use of different substitutions.

Computing the maximal value of the special tile, $\beta_{*}$, is generally nontrivial. 
In the following, we compute this value explicitly for $d=3$.

\subsubsection*{Local bound for $d=3$} \label{subsubsec:local_bound_d=3}

For $d=3$, the optimization in \Cref{eq:special_tile} involves $16$ independent variables, each taking values in $\{1, \omega, \omega^2\}$. 
This yields a search space of $3^{16} \simeq 10^{8}$ configurations.

A brute-force evaluation gives the extremal values
\begin{equation}
    \beta^{\max}_{*} = 12 \cos(\pi/9),
    \quad 
    \beta^{\min}_{*} = -12 \cos(\pi/9).
\end{equation}
The maximum is achieved, for instance, when 
\begin{equation*}
\expval*{A^{(i^*,j^*)}_{0,1}} = \omega^{2}, \quad \expval*{A^{(i^*+1,j^*-1)}_{2,1}} = \omega, 
\end{equation*}
and all other variables are set to $1$. 
Likewise, the minimum occurs when 
\begin{equation*}
\expval*{A^{(i^*,j^*)}_{0,1}} = \expval*{A^{(i^*,j^*)}_{1,1}} = \omega, \quad
\expval*{A^{(i^*,j^*)}_{2,1}} = \omega^2,
\end{equation*}
with all other variables equal to $1$. 
Although other optimal strategies exist, only these two maintain a value of $1$ for all the boundary variables, which is required for compatibility between neighboring tiles.

Consequently, applying \Cref{thm:local_bound}, the local bounds for $d=3$ are
\begin{equation}
    \begin{split}
        \beta^{\max}_L & = 2N - 8 + 12\cos(\pi/9), \\
        \beta^{\min}_L & = -N + 4 - 12\cos(\pi/9),
    \end{split}
\end{equation}
where we used $2\Re(\omega) = -1$.

As an independent verification, we also expressed the difference between the bound and \Cref{eq:special_tile} as a sum of squares. This expression is obtained by relaxing the polynomial optimization problem into a semidefinite program~\cite{gatermann_symmetry_2004,blekherman_semidefinite_2012}.
The code for this construction is available online \footnote{The code supporting this work can be found at the following repository: \url{https://gitlab.com/elo_val/qudit-toric-code-bell-inequality}}. 
The results obtained from this method are consistent with the brute-force computation.

\section{Self-testing the $\mathbb{Z}_3$ toric code} \label{sec:self-testing_the_toric_code}

In this section, we investigate the self-testing of the $\mathbb{Z}_3$ toric code.  
Building on the Bell inequalities introduced in~\Cref{sec:bell_inequality_for_toric_code}, we demonstrate that the ground-state subspace of the toric code can be rigorously certified for local dimension $d=3$.  
Since a description of the qutrit toric code entails states with complex coefficients, it is necessary to adopt an extended definition of self-testing  up to complex conjugation.

\begin{thm}\label{thm:self-testing}
Let $\ket{\psi_{\operatorname{max}}}\in\mathcal{H}$ be a state achieving maximal violation of Eq. \eqref{eq:bell_inequality_general} for $d=3$, where we assume that a partial trace of $\ket{\psi_{\operatorname{max}}}$ over any subset of parties gives a full-rank density matrix.
Then, there exists a unitary $U=\bigotimes_{(i,j)\in\edgeSet}U_{(i,j)}$ such that  $\mathcal{H}=(\mathbb{C}^3)^{\otimes N}\otimes\mathcal{H}'$ and
\begin{equation}
U \ket{\psi_{\max}} = \sum_{k=1}^{4} \left( \alpha_{1,k}\ket{\tau_{k}}\ket{\eta_{1,k}} + \alpha_{2,k}\ket{\tau_{k}^{*}}\ket{\eta_{2,k}} \right),
\end{equation}
where $\alpha_{l,k}\in \mathbb{C}$ are unknown complex coefficients satisfying $\sum_{l=1}^{2}\sum_{k=1}^{4}|\alpha_{l,k}|^{2}=1$.  
The states $\ket{\tau_1}, \dots, \ket{\tau_4} \in \mathbb{C}^{3N}$ form an orthonormal basis for the qutrit toric code, $\ket{\tau_{k}^{*}}$ denotes the complex conjugate of $\ket{\tau_{k}}$, and $\ket{\eta_{l,k}}\in\mathcal{H}'$ are auxiliary states.  
\end{thm}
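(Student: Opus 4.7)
My strategy is to proceed in four conceptual stages: (i) extract exact operator identities on $\ket{\psi_{\max}}$ from the sum-of-squares decomposition in Eq.~\eqref{eq:sum-of-squares}; (ii) combine these identities to derive the qutrit Weyl algebra $XZ=\omega ZX$ on each site; (iii) build local isometries that send the extracted algebra to the standard Pauli operators on an ancillary $\mathbb{C}^3$ per site; and (iv) use the stabilizer conditions together with the complex-conjugation ambiguity of the Weyl representation to identify the image with a superposition of $\ket{\tau_k}$ and $\ket{\tau_k^*}$ tensored with auxiliary states. Saturation of $\beta_Q^{\max}$ forces every positive-semidefinite summand in Eq.~\eqref{eq:sum-of-squares} to annihilate $\ket{\psi_{\max}}$. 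After Naimark dilation each generalized observable $A^{(i)}_{x,k}$ is unitary, so I obtain exact identities $\tilde{V}_{(i,j)}\ket{\psi_{\max}}=\ket{\psi_{\max}}$ for all $(i,j)\in\vertexSet$, $\tilde{P}_{(i,j)}\ket{\psi_{\max}}=\ket{\psi_{\max}}$ for all $(i,j)\in\plaquetteSet$, and $\tilde{E}_{(i^*,j^*)}(x)\ket{\psi_{\max}}=\ket{\psi_{\max}}$ for $x=2$. The full-rank assumption on partial traces promotes each such identity from the state to an operator equation on the local supports.

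\textbf{Extracting the algebra.} From these equations I would recover, at every site $(i,j)$ outside the special tile, the qutrit Weyl relations $(A^{(i,j)}_{0,1})^3=(A^{(i,j)}_{1,1})^3=\mathbb{I}$ and $A^{(i,j)}_{0,1}A^{(i,j)}_{1,1}=\omega\,A^{(i,j)}_{1,1}A^{(i,j)}_{0,1}$, by multiplying a pair of overlapping $\tilde{V}$ and $\tilde{P}$ identities that share the site and then cancelling neighboring factors with a second pair chosen to differ from the first only in the order of the shared operators; the Weyl phase comes out from the difference between the $X^\dagger X$ and $Z^\dagger Z$ orderings encoded in the substitutions. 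At the special site $(i^*,j^*)$ and its partner $(i^*+1,j^*-1)$, the substitutions in Eqs.~\eqref{eq:substitutions}--\eqref{eq:substitutions_2} produce Clifford-rotated observables of the form $(X^{1-x}Z^{\pm x})^k$ weighted by the coefficients $\lambda_k$, and one must use the $\tilde{E}_{(i^*,j^*)}(x)$ identity to pin down the relative phases so that the canonical $X,Z$ generators can be extracted on those two parties as well.

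\textbf{Isometries, subspace identification, and conjugation.} With the Weyl algebra established at every site I would apply the standard qudit swap-isometry construction of Refs.~\cite{kaniewski_maximal_2019,baccari_device-independent_2020}: adjoin a reference qutrit, act with the circuit built from the physical realizations of $X,Z$, and transfer the local qutrit degree of freedom onto the ancilla so that the physical operators act as the standard Paulis there. Set $U=\bigotimes_{(i,j)\in\edgeSet}U_{(i,j)}$. The stabilizer identities of stage~(i) then become the statement that $U\ket{\psi_{\max}}$ lies in the common $+1$ eigenspace of all standard $V_{(i,j)}$ and $P_{(i,j)}$ on the $(\mathbb{C}^3)^{\otimes N}$ factor, i.e.\ in the qutrit toric-code ground subspace, tensored with an auxiliary state. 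The complex-conjugation sector enters because the measurement statistics of a Weyl representation coincide with those of its complex conjugate (the relation $AB=\omega BA$ becoming $AB=\omega^{-1}BA$); decomposing the Hilbert space into the two inequivalent sectors and carrying out the above in each yields the claimed direct sum of $\ket{\tau_k}$ and $\ket{\tau_k^*}$ components.

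\textbf{Main obstacle.} The hardest step I foresee is the algebra at the special tile in stage~(ii): verifying that the four stabilizer identities surrounding $(i^*,j^*)$, together with the $d-2$ additional $\tilde{E}_{(i^*,j^*)}(x)$ identities, exactly fix the Weyl algebra on the two special parties with the correct phases $\lambda_k$ is the delicate content of the proof and is the place where the restriction to odd prime $d$ (and here specifically $d=3$) is used via invertibility of the substitution map. A secondary difficulty is making the choice of complex-conjugation sector \emph{globally} consistent rather than site-by-site, which I would handle by showing that the shared stabilizer identities force the sector at one site to determine the sectors at all of its lattice neighbors, and hence, by connectedness of the torus, everywhere.
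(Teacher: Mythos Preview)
Your plan has the right overall shape---SOS saturation, algebra extraction, local isometries, sector identification---but stage~(ii) contains a genuine gap. You claim that at a non-special site the Weyl relation $A^{(i,j)}_{0,1}A^{(i,j)}_{1,1}=\omega\,A^{(i,j)}_{1,1}A^{(i,j)}_{0,1}$ can be obtained by combining overlapping $\tilde V$ and $\tilde P$ identities. This does not work: a vertex and an adjacent plaquette always share \emph{two} edges, so the comparison of $\tilde V\tilde P\ket{\psi_{\max}}$ with $\tilde P\tilde V\ket{\psi_{\max}}$ only constrains the \emph{product} of two local commutators, never a single one. Moreover, there is no phase encoded in the non-special substitutions $X^k\mapsto A_{0,k}$, $Z^k\mapsto A_{1,k}$; the $\omega$ you want has nowhere to come from. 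In fact one cannot hope to force the exact Weyl relation at all, since replacing $Z$ by $Z^2$ globally still saturates every stabilizer identity---this is precisely the complex-conjugation freedom you mention later, and it already shows that no local derivation of $A_0A_1=\omega A_1A_0$ is possible.

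The mechanism used in the paper is different and is the key missing ingredient in your sketch. At the special site the Fourier-sum substitution of Eq.~\eqref{eq:substitutions} is engineered so that the identity $\{\bar A^{(i^*,j^*)}_{0,1},\bar A^{(i^*,j^*)}_{1,1}\}=-\bar A^{(i^*,j^*)}_{2,2}$ holds as an operator equation (this is the content of Ref.~\cite{kaniewski_maximal_2019}); the stabilizer relations then propagate the weaker property ``$\{A_{0,1},A_{1,1}\}$ is unitary'' from the special site to all sites. That condition, together with $A_{x,1}^3=\mathbb{I}$, feeds into a structure lemma (Proposition~B.2 of Ref.~\cite{kaniewski_maximal_2019}) whose conclusion is not the clean Weyl pair but rather $U A_{0,1}U^\dagger=X\otimes\mathbb{I}$ and $U A_{1,1}U^\dagger=Z\otimes Q_1+Z^{2}\otimes Q_2$ with orthogonal projectors $Q_1+Q_2=\mathbb{I}$. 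The two projectors are exactly the two conjugation sectors, and the remaining work (your ``secondary difficulty'') is to show from the plaquette relations that cross terms $\Pi^1_{(i,j)}\Pi^2_{(k,l)}$ annihilate $\ket{\psi_{\max}}$, forcing the sector choice to be global. Your intuition about global consistency is correct, but you need to replace the direct Weyl-algebra extraction by the anticommutator-unitarity argument and the $Z\otimes Q_1+Z^2\otimes Q_2$ structure lemma to make the proof go through.
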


We now provide a sketch of the proof; full details are given in~\Cref{app:qutrit_self-testing}.  
The central idea is to show that if the state $\ket{\psi_{\max}}$ attains the maximal quantum violation, it must be stabilized by the operators $\tilde{P}_{(i,j)}$, $\tilde{V}_{(i,j)}$, and $\tilde{E}_{(i^*,j^*)}(x)$, each expressible as products of generalized observables.  
Through a sequence of algebraic manipulations, one finds that the operators $\bar{A}_{x,k}^{(i^*,j^*)}$, together with $A_{x,k}^{(i,j)}$ for all $(i,j)\neq(i^*,j^*)$, act as effective Pauli operators when restricted to the support of $\ket{\psi_{\max}}$.  
This correspondence allows us to conclude that $\ket{\psi_{\max}}$ is stabilized by the set of stabilizer operators defining the toric code.  
Consequently, $\ket{\psi_{\max}}$ must lie within the toric-code subspace, up to complex conjugation.

\section{Widening the classical--quantum ratio} \label{sec:multiple_special_sites}

The Bell inequality for the $d$-dimensional toric code introduced in~\Cref{eq:bell_inequality} can be further generalized by incorporating multiple special sites.  
As we show below, this modification leads to an increased classical--quantum ratio $\Lambda$, as defined in~\Cref{eq:ratio}.

Suppose there are $R$ special sites, denoted by $(i_r,j_r)$ with $r \in \{1, \dots, R\}$, and for simplicity, we assume that each lies on a vertical edge.  
The generalized Bell inequality then takes the form
\begin{alignat}{1} 
\nonumber
    &\Bigg[ \sum_{(i,j) \in \vertexSet} \expval*{\tilde{V}_{(i,j)}} 
    + \sum_{(i,j) \in \plaquetteSet} \expval*{\tilde{P}_{(i,j)}} 
    + 2 \sum_{r = 1}^{R} \sum_{x = 2}^{d-1} \expval*{\tilde{E}_{(i_r,j_r)}(x)} \Bigg] \\
    &+ \text{c.c.} \leq \beta.
    \label{eq:bell_inequality_general}
\end{alignat}

To ensure well-defined behavior, each stabilizing operator in the expression must involve no more than one special site.  
In other words, the special sites must be placed sufficiently far apart.  
\Cref{fig:toric-code_tiling}(b) illustrates one possible configuration with multiple special sites, where all stabilizing operators appearing in the Bell expression include at most one of them.

The results of~\Cref{thm:quantum_bound,thm:local_bound,thm:self-testing} naturally extend to the generalized Bell inequality~\eqref{eq:bell_inequality_general}.  
In particular, the quantum bound becomes
\begin{equation}
\beta^{\max}_Q = 2N + (4d - 8)R,
\end{equation}
which follows from a sum-of-squares (SOS) decomposition analogous to the derivation in~\Cref{subsec:quantum_bound} (see also~\Cref{app:quantum_bound} for details).  

For the local bound, we obtain
\begin{alignat}{1}
\beta^{\max}_L &\leq 2(N - 4R) + R\beta_{*}^{\max}, \\\qquad 
\beta^{\min}_L &\geq 2(N - 4R)\Re(\omega^{(d-1)/2}) + R\beta_{*}^{\min},
\end{alignat}
where $\beta^{\max}_{*}$ and $\beta^{\min}_{*}$ denote the optimal values of the expression in~\Cref{eq:special_tile}.  
The reasoning parallels that of the single-special-site case, with the difference that the global optimization now splits into multiple independent ``special tiles'', one per special site.  
Since the special sites are sufficiently separated, these tiles do not overlap and can be optimized independently. Details are given in \Cref{app:local_bound}.  

Combining the results above  yields the following lower bound on the ratio:
\begin{equation} \label{eq:ratio_toric_code}
    \Lambda \geq \frac{2N + (4d-8)R}{2(N - 4R) + R\,\beta_{*}^{\max}}.
\end{equation}
In the specific case of $d=3$, the bounds are tight, and the ratio takes the explicit form
\begin{equation}
    \Lambda = \frac{N+2R}{N+2(3\cos{\pi/9}-2)R}.
\end{equation}
\Cref{fig:ratio_d=3} plots this ratio as a function of the number of special sites for a fixed total number of sites $N = 200$ (corresponding to a $10 \times 10$ grid).
\begin{figure}
    \centering
    \includegraphics[width=0.9\linewidth]{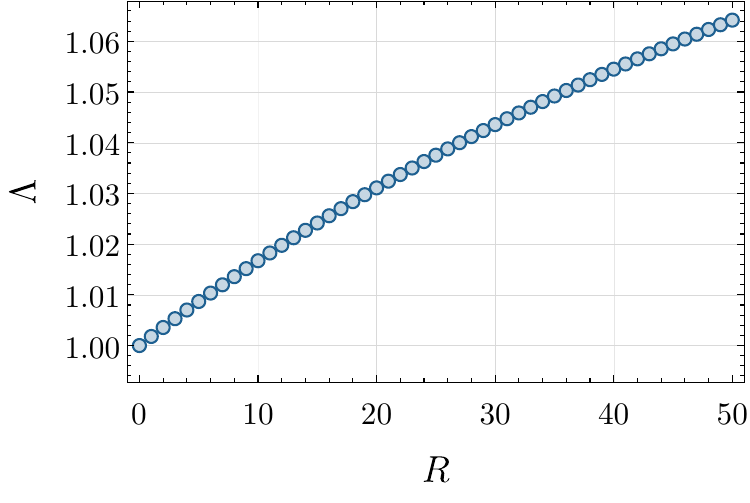}
    \caption{Ratio between the quantum and local bounds, $\Lambda$, as a function of the number of special sites, $R$.  
    The local dimension is $d = 3$, and the total number of sites is fixed to $N = 200$. The ratio increases monotonically with $R$, evidencing that a larger number of special sites enhances the robustness of the inequality against experimental imperfections.}
    \label{fig:ratio_d=3}
\end{figure}
The ratio $\Lambda$ increases with the number of special sites $R$, indicating that introducing additional special sites renders the inequality more robust against experimental imperfections, thereby improving its practical applicability.  
However, it should be noted that each additional special site requires measurements in a larger set of bases, which, on the other hand, increases experimental complexity.

Lastly, the generalized Bell expression in~\Cref{eq:bell_inequality_general} can also be employed to self-test the toric-code's ground-state subspace.  
The procedure follows the same reasoning as in the single-special-site case detailed in~\Cref{app:qutrit_self-testing}.

\section{Conclusions}\label{sec:conclusion}

In this work, we introduced a general framework for developing Bell inequalities tailored to the $\mathbb{Z}_d$ toric code for odd prime values of $d$.  
In particular, we demonstrated that the resulting inequality enables the self-testing of the $\mathbb{Z}_3$ toric code.  
As a byproduct, we also provided, to the best of our knowledge, the first self-testing proof of an entangled qutrit subspace.

Our construction of this inequality is based on carefully selecting a set of stabilizing operators of the $\mathbb{Z}_d$ toric code's ground eigenspace, along with appropriate substitutions: each operator acting on a given site is replaced by a linear combination of generalized observables. 
Each stabilizing operator $s$ then gives rise to a new term $\tilde{s}$ after substitution, and the Bell expression is obtained by summing these new terms. 
The quantum bound can be computed using a sum-of-squares decomposition, while the classical (local) bound can be estimated through enumeration techniques; in the specific case of $d=3$, this classical bound is known to be tight.

Topological quantum states play a central role in the pursuit of fault-tolerant quantum computation and error correction.  
From the standpoint of Bell nonlocality, however, such states have remained largely unexplored to date.  
By deriving Bell inequalities explicitly adapted to toric codes, here, we take a first step toward bridging this gap and enabling systematic investigations of nonlocality in topological systems.  
Moreover, our results introduce the first self-testing certification of a topologically ordered subspace, extending the reach of self-testing techniques to a new and practically relevant class of quantum states.

Naturally, we acknowledge that experimental implementation of these inequalities involves nontrivial challenges.  
The main technical task is the precise measurement of stabilizer operators, which requires sufficiently high-fidelity control to ensure that the observed correlations closely approximate their ideal values.  
Although the total number of measurement settings scales as $2N + 2d - 4$ and certain observables involve up to six parties, recent progress shows that these requirements are well within the capabilities of current platforms.  
A range of experiments have already realized the essential ingredients: preparation of qudit-based topological states, controlled creation and braiding of anyons, measurement of topological entanglement, and multi-qudit stabilizer readout at increasing scale~\cite{niu_demonstrating_2024,bluvstein_logical_2024,acharya_quantum_2025,iqbal_qutrit_2025}.  
These achievements demonstrate that the fundamental building blocks for certifying topological order in a largely device-independent manner are already in place.  
With the rapid advancements in quantum hardware and high-precision quantum simulation, only a modest step separates these existing demonstrations from a full experimental implementation of our self-testing protocol.  
Consequently, we believe that self-testing of the $\mathbb{Z}_3$ toric code is not only plausible but genuinely within reach.

Our approach also opens several promising avenues for future research.  
Since it relies on the stabilizer formalism, it can be straightforwardly extended to other stabilizer codes, such as the surface code or color codes.  
Another important open question concerns the possibility of self-testing not just the ground-state subspace but also specific logical states, potentially leveraging the intrinsic topological features of the code.

Thus, we anticipate that these results will serve as a foundation for further exploration of nonlocality and self-testing in topological quantum codes. In particular, some promising directions for future research include developing more experimentally feasible measurement protocols, investigating how topological features can enhance device-independent certification, and exploring connections with fault-tolerant quantum computation. We expect that these avenues will progressively bring device-independent certification closer to practical implementation in quantum error correction.
We also hope that the present work marks the beginning of a systematic exploration of nonlocality and self-testing in topological quantum codes, bringing device-independent certification closer to the domain of quantum error correction and fault-tolerant quantum computation.

\begin{acknowledgments}
E.V. gratefully acknowledges Tim Seynnaeve for insightful discussions regarding the tiling into elementary tiles.
P.E. and J.T. acknowledge the support received by the Dutch National Growth Fund
(NGF), as part of the Quantum Delta NL programme. 
P.E. acknowledges the support received through the NWO-Quantum Technology programme (Grant No.~NGF.1623.23.006).
P.E. further acknowledges funding by the Carl-Zeiss-Stiftung (CZS Center QPhoton).
J.T. and E.V. acknowledge the support received from the European Union's Horizon Europe research and innovation programme through the ERC StG FINE-TEA-SQUAD (Grant No.~101040729). 
R.S. was supported by the Princeton Quantum Initiative Fellowship.
This publication is part of the `Quantum Inspire - the Dutch Quantum Computer in the Cloud' project (with project number [NWA.1292.19.194]) of the NWA research program `Research on Routes by Consortia (ORC)', which is funded by the Netherlands Organization for Scientific Research (NWO).
\end{acknowledgments}

\begin{widetext}
\appendix

\section{Notation} \label{app:glossary}
In what follows, we provide a summary of the symbols frequently used throughout the manuscript.

\begin{small}
\begin{description}
    \item[$A^{(i,j)}_{x,k}$] Random variable associated to the site $(i,j)$, with input $x$ and Fourier mode $k$. Defined in \Cref{eq:expval_generalized_observables}.
    See \Cref{subsec:local_bound} for properties in the local setting.
    See \Cref{subsec:quantum_bound} for properties in the quantum setting.
    \item[$\bar{A}^{(i^*,j^*)}_{x,k}$] Linear combination of random variables. Defined in \Cref{eq:substitutions}. The constructed random variable is associated to the site $(i^*,j^*)$, with input $x$ and Fourier mode $k$.
    \item[$E_{(i^*,j^*)}(x)$] Stabilizing operator associated to a special site $(i^*,j^*)$. Defined in \Cref{eq:extra_operators}.
    \item[$\tilde{E}_{(i^*,j^*)}$] Expression of the random variables. The expression corresponds to the additional stabilizing operator $E_{(i^*,j^*)}$, where the Pauli operators have been substituted by random variables. Explicitely defined in \Cref{eq:E_tilde}.
    \item[$\edgeSet$] The set of all edges in the lattice. This also corresponds to the set of all sites. Defined in \Cref{subsec:toric_code}.
    \item[$(i^*,j^*)$ or $(i_r,j_r)$] Index of a special site. The special site is assumed to be on a vertical edge. When there is only one special site the notation $(i^*,j^*)$ is preferred.
    \item[$P_{(i,j)}$] Stabilizing operator associated to the vertex $(i,j)$. Defined in \Cref{eq:plaquette_and_vertex_operators}.
    \item[$\tilde{P}_{(i,j)}$] Expression of the random variables. The expression corresponds to the plaquette stabilizing operator $P_{(i,j)}$, where the Pauli operators have been substituted by random variables. Explicitely defined in \Cref{eq:P_tilde_V_tilde,eq:P_tilde}.
    \item[$\plaquetteSet$] The set of all plaquettes in the lattice. Defined in \Cref{subsec:toric_code}.
    \item[$\bar{\plaquetteSet}$] The set of all plaquettes far from a special site, i.e. $\mathcal{P} \setminus \{(i_r\pm 1,j_r) : r=1,\dots R \}$. Defined in \Cref{sec:bell_inequality_for_toric_code}.
    \item[$\mathcal{S}$] Set of stabilizing operators. Defined in \Cref{eq:custom_stabilizing_operators}.
    \item[$V_{(i,j)}$] Stabilizing operator associated to the vertex $(i,j)$. Defined in \Cref{eq:plaquette_and_vertex_operators}.
    \item[$\tilde{V}_{(i,j)}$] Expression of the random variables. The expression corresponds to the vertex stabilizing operator $V_{(i,j)}$, where the Pauli operators have been substituted by random variables. Explicitely defined in \Cref{eq:P_tilde_V_tilde,eq:V_tilde}.
    \item[$\vertexSet$] The set of all vertices. Defined in \Cref{subsec:toric_code}.
    \item[$\bar{\vertexSet}$] The set of all vertices far from a special site, i.e. $\mathcal{V} \setminus \{(i_r,j_r\pm 1) : r=1,\dots R \}$. Defined in \Cref{sec:bell_inequality_for_toric_code}.
\end{description}
\end{small}

\section{Complex coefficients} \label{app:complex_coefficients}
We recall here the complex coefficients used in \Cref{eq:substitutions}.
The coefficients were first derived in~\cite{kaniewski_maximal_2019}.
\begin{equation}
    \lambda_{k} = \left[ \epsilon_d \left( \frac{k}{d} \right) \right]^{-1} \omega^{-g(k,d)/48},
\end{equation}

where $\omega = \exp(2i\pi/d)$ is the $d$th root of unity, 
\begin{equation}
    \varepsilon_d \equiv 
    \begin{cases}
        1, & \text{if } d \equiv 1 \mod{4}, \\
        i, & \text{if } d \equiv 3 \mod{4} ,
    \end{cases}
\end{equation}
$\left( \frac{k}{d} \right)$ is the Legendre symbol (equals $1$ if $k$ is a quadratic residue modulo d and $-1$ otherwise), and 
\begin{equation}
g(k,d) = 
\begin{cases}
k[k^2 - d(d+6) + 3], & \text{if } k \equiv 0 \mod{2} \text{ and } (k + d + 1)/2 \equiv 0 \mod{2}, \\
k[k^2 - d(d - 6) + 3], & \text{if } k \equiv 0 \mod{2} \text{ and } (k + d + 1)/2 \equiv 1 \mod{2}, \\
k(k^2 + 3) + 2d^2(-5k + 3), & \text{if } k \equiv 1 \mod{4}, \\
k(k^2 + 3) + 2d^2(k + 3), & \text{if } k \equiv 3 \mod{4}.
\end{cases}
\end{equation}
The coefficient satisfies $\lambda_{k}^* = \lambda_{d-k}$ and $|\lambda_{k}| = 1$ for all $k \in \{1,\dots,d-1\}$.

\section{Explicit Bell expression}\label{app:explicit_bell_expression}
In what follows, we explicitly write the Bell expression.
The derivation is made in full generality for any number of outcomes $d$ prime and multiple special sites $R$.

Suppose there are $R$ special sites, denoted $(i_r,j_r) : r \in \{1,\dots,R \}$ (each of them lies on a vertical edge).
The special sites are sufficiently far from each other such that there is no vertex or plaquette containing more than one special site.
In this case, the general Bell inequality is given by \Cref{eq:bell_inequality_general}. 
For convenience, we rewrite the inequality below:
\begin{equation}
    \left[ \sum_{(i,j) \in \vertexSet} \expval*{\tilde{V}_{(i,j)}} + \sum_{(i,j) \in \plaquetteSet} \expval*{\tilde{P}_{(i,j)}} + 2 \sum_{r = 1}^{R}  \sum_{x = 2}^{d-1} \expval*{\tilde{E}_{(i_r,j_r)}(x)} \right] + \text{c.c.} \leq \beta.
\end{equation}
In what follows, we will write explicitly each term as function of the random variables $A_{x,k}$.
It is convenient to define the set of vertices that do not contain any special site $\bar{\vertexSet} \equiv \vertexSet \setminus \{(i_r,j_r\pm1) : r=1,\dots,R\}$ and similarly for plaquettes $\bar{\plaquetteSet} \equiv \plaquetteSet \setminus \{(i_r\pm1,j_r) : r = 1,\dots,R\}$.
First, let's explicitly write the terms that do not include any special site:
\begin{equation} \label{eq:P_tilde_V_tilde}
    \begin{split}
        \expval*{\tilde{V}_{(i,j)}} = & \expval*{A^{(i-1,j)}_{0,d-1} A^{(i,j-1)}_{0,d-1} A^{(i,j+1)}_{0,1} A^{(i+1,j)}_{0,1}}, \qquad \forall (i,j) \in \bar{\vertexSet},\\
        \expval*{\tilde{P}_{(i,j)}} = & \expval*{A^{(i-1,j)}_{1,1} A^{(i,j-1)}_{1,d-1} A^{(i,j+1)}_{1,1} A^{(i+1,j)}_{1,d-1}}, \qquad \forall (i,j) \in \bar{\plaquetteSet}. \\
    \end{split}
\end{equation}
Then the vertices below and above a special site $(i_r,j_r)$:
\begin{equation} \label{eq:V_tilde}
    \begin{split}
        \expval*{\tilde{V}_{(i_r,j_r-1)}} = & \expval*{A^{(i_r-1,j_r-1)}_{0,d-1} A^{(i_r,j_r-2)}_{0,d-1} \bar{A}^{(i_r,j_r)}_{0,1} A^{(i_r+1,j_r-1)}_{0,1}} = \frac{1}{\sqrt{d} \lambda_1} \sum_{y = 0}^{d-1} \expval*{A^{(i_r-1,j_r-1)}_{0,d-1} A^{(i_r,j_r-2)}_{0,d-1} A^{(i_r,j_r)}_{y,1} A^{(i_r+1,j_r-1)}_{0,1}},  \\
        \expval*{\tilde{V}_{(i_r,j_r+1)}} = & \expval*{A^{(i_r-1,j_r+1)}_{0,d-1} \bar{A}^{(i_r,j_r)}_{0,d-1} A^{(i_r,j_r+2)}_{0,1} A^{(i_r+1,j_r+1)}_{0,1}} = \frac{1}{\sqrt{d} \lambda_{d-1}} \sum_{y = 0}^{d-1} \expval*{A^{(i_r-1,j_r+1)}_{0,d-1} A^{(i_r,j_r)}_{y,d-1} A^{(i_r,j_r+2)}_{0,1} A^{(i_r+1,j_r+1)}_{0,1}}.
    \end{split}
\end{equation}
Similarly, the plaquettes on the left and on the right of a special site $(i_r,j_r)$:
\begin{equation} \label{eq:P_tilde}
    \begin{split}
        \expval*{\tilde{P}_{(i_r-1,j_r)}} & = \expval*{A^{(i_r-2,j_r)}_{1,1} A^{(i_r-1,j_r-1)}_{1,d-1} A^{(i_r-1,j_r+1)}_{1,1} \bar{A}^{(i_r,j_r)}_{1,d-1}} \\
        & = \frac{\omega^{-2(d-1)}}{\sqrt{d} \lambda_{d-1}} \sum_{y=0}^{d-1} \omega^{-(d-1)y} \expval*{A^{(i_r-2,j_r)}_{1,1} A^{(i_r-1,j_r-1)}_{1,d-1} A^{(i_r-1,j_r+1)}_{1,1} A^{(i_r,j_r)}_{y,d-1}}, \\
        \expval*{\tilde{P}_{(i_r+1,j_r)}} & = \expval*{\bar{A}^{(i_r,j_r)}_{1,1} A^{(i_r+1,j_r-1)}_{1,d-1} A^{(i_r+1,j_r+1)}_{1,1} A^{(i_r+2,j_r)}_{1,d-1}}\\
        & = \frac{\omega^{-2}}{\sqrt{d}\lambda_1} \sum_{y=0}^{d-1} \omega^{-y} \expval*{A^{(i_r,j_r)}_{y,1} A^{(i_r+1,j_r-1)}_{1,d-1} A^{(i_r+1,j_r+1)}_{1,1} A^{(i_r+2,j_r)}_{1,d-1}}.\\
    \end{split}
\end{equation}
Finally, the extra terms associated to the special site $(i_r,j_r)$ take the form
\begin{equation} \label{eq:E_tilde}
    \begin{split}
        \expval*{\tilde{E}(x)_{(i_r,j_r)}} & = \expval*{ A^{(i_r-1,j_r-1)}_{0,x-1} A^{(i_r,j_r-2)}_{0,x-1} \bar{A}^{(i_r,j_r)}_{x,1} A^{(i_r+1,j_r-1)}_{x,1} A^{(i_r+1,j_r+1)}_{1,x} A^{(i_r+2,j_r)}_{1,d-x} } \\
        & = \frac{\omega^{-x(x+1)}}{\sqrt{d}\lambda_1} \sum_{y=0}^{d-1} \omega^{-xy} \expval*{ A^{(i_r-1,j_r-1)}_{0,x-1} A^{(i_r,j_r-2)}_{0,x-1} A^{(i_r,j_r)}_{y,1} A^{(i_r+1,j_r-1)}_{x,1} A^{(i_r+1,j_r+1)}_{1,x} A^{(i_r+2,j_r)}_{1,d-x} },
    \end{split}
\end{equation}
for $x \in \{2,\dots,d-1\}$.

\section{Local bound}\label{app:local_bound}

In this section, we provide the proof of \Cref{thm:local_bound}. 
To this end, we first enounce and prove a lemma that will help us to prove the theorem.
We show that any \textit{tile of arbitrary shape} (also known as polyomino) of size greater than 1 can be decomposed into the five elementary tiles given in \Cref{fig:polyomino_decomposition}.
This result will help us to find an optimal strategy that minimizes the plaquette and the vertex terms in the Bell expression.
For a smoother initial exposition, the formal statement and the proof of \Cref{lem:polyomino_decomposition} may be skipped on a first reading and revisited later.
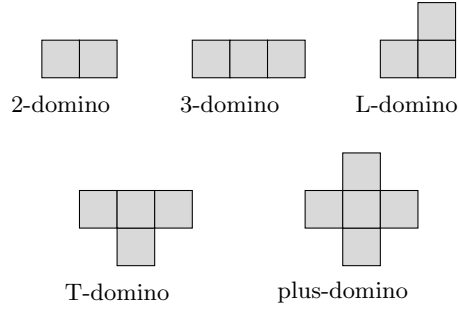
\begin{figure}[h]
\centering
\begin{tikzpicture}[scale=0.5]
\sq{0}{0}\sq{1}{0}
\node at (0.5,-0.7) {2-domino};

\begin{scope}[xshift=4cm]
\sq{0}{0}\sq{1}{0}\sq{2}{0}
\node at (1,-0.7) {3-domino};
\end{scope}

\begin{scope}[xshift=9cm]
\sq{0}{0}\sq{1}{0}\sq{1}{1}
\node at (0.7,-0.7) {L-domino};
\end{scope}

\begin{scope}[xshift=1cm,yshift=-5cm]
\sq{1}{0}\sq{0}{1}\sq{1}{1}\sq{2}{1}
\node at (1,-0.7) {T-domino};
\end{scope}

\begin{scope}[xshift=7cm,yshift=-5cm]
\sq{1}{0}\sq{0}{1}\sq{1}{1}\sq{2}{1}\sq{1}{2}
\node at (1,-0.7) {plus-domino};
\end{scope}
\end{tikzpicture}
\caption{\label{fig:polyomino_decomposition} The five elementary polyominoes used in decomposition.}
\end{figure}

\begin{lem}\label{lem:polyomino_decomposition}
Let $P$ be a \textbf{connected polyomino}, that is, a finite union of unit grid squares
such that every square can be reached from any other by a path of edge-adjacent squares.
Every connected polyomino $P$ of size $|P| \ge 2$ can be decomposed into the five
elementary tiles: the $2$-domino, $3$-domino, L-domino, T-domino, and plus-domino.
\end{lem}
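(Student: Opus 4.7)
The plan is to prove this by strong induction on $|P|$, combined with a local case analysis at an extremal square.

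For the base cases $|P|\in\{2,3,4,5\}$, I would enumerate the (respectively $1,2,5,12$) connected polyominoes of each size and exhibit explicit decompositions. Sizes $2$ and $3$ are themselves elementary tiles. Among the five tetrominoes, the T-tetromino is elementary and the others (I, L, S, and the $2\times2$ block) each split into two $2$-dominoes. Among the twelve pentominoes, the X-pentomino is the plus-domino, and the remaining eleven each split as a $2$-domino together with an I- or L-tromino.

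For the inductive step with $|P|\geq 6$, the goal is to exhibit an elementary tile $\tau\subseteq P$ such that every connected component of $P\setminus\tau$ has size $0$ or at least $2$; the inductive hypothesis is then applied componentwise. To construct $\tau$, I would pick the lexicographically smallest square $s=(i_0,j_0)\in P$, so that by extremality the only candidates for neighbors of $s$ in $P$ are the upper square $u=(i_0,j_0+1)$ and the right square $r=(i_0+1,j_0)$. The default move is to peel the $2$-domino joining $s$ to its unique neighbor (if $s$ is a leaf of the adjacency graph), or the L-tromino $\{s,u,r\}$ (if $s$ has both neighbors). This default succeeds whenever the peeled tile consists of successive leaves and its removal leaves a connected remainder of size $|P|-|\tau|\geq 2$. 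When the default peel would disconnect $P$ into pieces, some of size one, I would \emph{escalate} the tile along the offending cut vertex $u$ (resp.\ $r$) so as to \emph{absorb} the would-be isolated squares: the peeled tile then grows through an I- or L-tromino, a T-tetromino, and finally the plus-pentomino when the cut vertex has a locally plus-shaped neighborhood.

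The main obstacle I anticipate is the finite but tedious verification that the five elementary tiles are jointly sufficient for this escalation in \emph{every} local configuration near $s$. This reduces to enumerating the neighborhoods of radius two around $s$, which is a finite collection of cases distinguished by which of $u,r$, their own neighbors, and the second-neighbors of those lie in $P$; in each case one exhibits an elementary tile whose removal leaves no isolated square. The plus-pentomino is indispensable precisely for the extremal configurations in which a cross-shaped appendage is attached to the bulk through a single vertex: there, no smaller elementary tile can be peeled without creating a size-$1$ component. I expect this enumeration, together with checking that the escalated peel does not interact badly with the global connectivity of $P$ (which follows because the tile is built around the lexicographic extremum and therefore ``sticks out'' of the bulk), to consume most of the technical effort of the proof.
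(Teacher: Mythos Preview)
Your inductive strategy is sound and close in spirit to the paper's, but the specific peeling mechanism you describe has a gap, and the paper avoids it with a device you did not use: a spanning tree.

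The paper also argues by induction, removing an elementary tile so that every remaining component has size at least two. Instead of anchoring at the lexicographically smallest square, however, it chooses a spanning tree $T$ of the adjacency graph, takes a leaf $v_{1}$ of $T$, and performs a case analysis on the tree-degree of its unique tree-neighbour $v_{2}$. Since $\deg_{T}(v_{2})\in\{1,2,3,4\}$ and each further tree-neighbour of $v_{2}$ is either itself a leaf or not, this yields exactly fifteen local cases. The spanning tree delivers connectivity of the remaining pieces for free: after deleting the chosen tile, what is left is a union of subtrees of $T$, each of which spans a connected sub-polyomino of size at least two.

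Your escalation rule, by contrast, does not always land on one of the five elementary tiles. Take $P=\{(0,0),(0,1),(0,2),(1,0),(2,0),(3,0)\}$ with $s=(0,0)$: your default L-tromino $\{s,u,r\}=\{(0,0),(0,1),(1,0)\}$ isolates $(0,2)$, and absorbing that square yields the J-tetromino $\{(0,0),(0,1),(0,2),(1,0)\}$, which is not in your list. The correct peel here is the straight tromino $\{(0,0),(0,1),(0,2)\}$ (or the $2$-domino $\{(0,0),(1,0)\}$): a \emph{different} small tile, not a larger one. So the case analysis you anticipate is genuine, but it is not monotone in the way ``escalation'' suggests; one must sometimes back off to another shape, and without the spanning tree one must also verify connectivity of the remainder directly rather than just the absence of singletons. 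All of this is finite and can in principle be carried out from the lex-smallest corner, but the spanning-tree trick is precisely what collapses it into the short mechanical enumeration the paper presents.
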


The proof originated from discussions with Tim Seynnaeve\orcidlink{0000-0002-8958-6671}, to whom we express our thanks.

\begin{proof}
    We proceed by induction.  
We will show that from any polyomino $P$ with $|P|\ge 2$, one can always remove an elementary tile in such a way that the remaining polyomino is either empty, a connected polyomino $P'$ with $|P'|\ge 2$, or a collection of connected polyominoes, each of size at least 2.
 
Let $P$ be a connected polyomino with $|P| \geq 2$ and let $G$ be the adjacency graph of $P$: vertices correspond to unit squares and edges join squares sharing a side. 
Since $G$ is a finite connected graph, it contains a spanning tree $T$, and $T$ has at least one leaf. 
Let $v_1$ be a leaf of $T$, and let $v_2$ be its unique neighbour in $T$.
Note that since $P$ is connected and $|P|\ge 2$, $v_2$ necessary exists.
The local structure of the polyomino near $v_2$ determines the first tile that can be removed.  
A straightforward case analysis shows that there are 15 possible local configurations:

\textbf{Case A: $\deg_T(v_2)=1$.} 
Only $v_1,v_2$. 
Remove from $T$ the vertices $\{v_1,v_2\}$ corresponding to a 2-domino. Remaining empty graph. 

\textbf{Case B: $\deg_T(v_2)=2$.} 
The vertex $v_2$ is connected to one other vertex $v_3$.
\begin{itemize} 
    \item B1: Vertex $v_3$ is a leaf. Remove the 3-domino (or L-domino) $\{v_1,v_2,v_3\}$. Remaining: empty. 
    \item B2: Vertex $v_3$ is connected to at least one other vertex $v_3'$. Remove 2-domino $\{v_1,v_2\}$. Remaining: connected polyomino $P'$ with size $|P'| \geq 2$. 
\end{itemize} 

\textbf{Case C: $\deg_T(v_2)=3$.} 
The vertex $v_2$ is connected to two other vertices $v_3$ and $v_4$. 
\begin{itemize} 
    \item C1: $v_3$ and $v_4$ are leaves. Remove T-domino $\{v_1,v_2,v_3,v_4\}$. Remaining: empty 
    \item C2: $v_3$ is connected to at least one other vertex $v_3'$ and $v_4$ is a leave. Remove domino $\{v_1,v_2,v_4\}$. Remaining connected polyomino $P'$ with size $|P'| \geq 2$. 
    \item C2: $v_4$ is connected to at least one other vertex $v_4'$ and $v_3$ is a leave. Remove domino $\{v_1,v_2,v_3\}$. Remaining connected polyomino $P'$ with size $|P'| \geq 2$. 
    \item C4: $v_3$ is connected to at least one other vertex $v_3'$ and $v_4$ is connected to at least one other vertex $v_4'$. Remove the vertices $\{v_1,v_2\}$ corresponding to a domino. This will necessary give two disjoint tree-graphs $T'$ and $T''$. These two trees correspond to the polyominoes $P'$ and $P''$.
    $P'$ is connected and $|P'| \geq 2$. Similarly, $P''$ is connected and $|P''| \geq 2$.
\end{itemize} 

\textbf{Case D: $\deg_T(v_2)=4$.} 
The vertex $v_2$ is connected to three other vertices $v_3$ and $v_4$ and $v_5$. 
\begin{itemize} 
    \item D1: $v_3,v_4,v_5$ are leaves. Remove plus-domino $\{v_1,v_2,v_3,v_4,v_5\}$. Remaining: empty. 
    \item D2: $v_3$ is connected to another vertex $v_3'$. Vertices $v_4$ and $v_5$ are leaves. Remove T-domino $\{v_1,v_2,v_4,v_5\}$. Remaining: connected polyomino $P'$ with size $|P'| \geq = 2$. 
    \item D3: $v_4$ is connected to another vertex $v_4'$. Vertices $v_3$ and $v_5$ are leaves. Remove T-domino $\{v_1,v_2,v_3,v_5\}$. Remaining: connected polyomino $P'$ with size $|P'| \geq = 2$. 
    \item D4: $v_5$ is connected to another vertex $v_5'$. Vertices $v_3$ and $v_4$ are leaves. Remove T-domino $\{v_1,v_2,v_3,v_4\}$. Remaining: connected polyomino $P'$ with size $|P'| \geq = 2$. 
    \item D5: $v_3$ is connected to at least one other vertex $v_3'$ and $v_4$ is connected to at least one other vertex $v_4'$. Vertex $v_5$ is a leaf. Remove vertices $\{v_1,v_2,v_5\}$ corresponding to a 3-domino (or L-domino). Remaining: two disjoint tree-graphs $T'$ and $T''$. These two trees correspond to the polyominoes $P'$ and $P''$. $P'$ is connected and $|P'| \geq 2$. Similarly, $P''$ is connected and $|P''| \geq 2$. 
    \item D6: $v_3$ is connected to at least one other vertex $v_3'$ and $v_5$ is connected to at least one other vertex $v_5'$. Vertex $v_4$ is a leaf. Remove vertices $\{v_1,v_2,v_4\}$ corresponding to a 3-domino (or L-domino). Remaining: two disjoint tree-graphs $T'$ and $T''$. These two trees correspond to the polyominoes $P'$ and $P''$. $P'$ is connected and $|P'| \geq 2$. Similarly, $P''$ is connected and $|P''| \geq 2$. 
    \item D7: $v_4$ is connected to at least one other vertex $v_4'$ and $v_5$ is connected to at least one other vertex $v_5'$. Vertex $v_4$ is a leaf. Remove vertices $\{v_1,v_2,v_3\}$ corresponding to a 3-domino (or L-domino). Remaining: two disjoint tree-graphs $T'$ and $T''$. These two trees correspond to the polyominoes $P'$ and $P''$. $P'$ is connected and $|P'| \geq 2$. Similarly, $P''$ is connected and $|P''| \geq 2$. 
    \item D8: $v_3, v_4, v_5$ are connected to at least one other vertex $v_3',v_4'$ and $v_5'$ respectively. Remove vertices $\{v_1,v_2\}$ corresponding to a domino. Remaining: three disjoint tree-graphs $T'$, $T''$ and $T'''$. These three graphs correspond to three polyomino $P', P'', P'''$. Each of them is connected and has size $\geq 2$. 
\end{itemize}

In each configuration, removing the identified tile leaves connected polyominoes of size at least 2. 
By induction, each remaining polyomino can be decomposed into elementary tiles. 
Hence, any connected polyomino with $|P|\ge 2$ can be fully decomposed.
\end{proof}

We now proceed with the proof of \Cref{thm:local_bound}.
For the sake of generality, we consider the case of multiple special sites. 
The theorem, in its most general form, is stated as follows:

\begin{thm} \label{thm:local_bound_general}
    For the Bell expression in \Cref{eq:bell_inequality_general}, the maximum and minimum over all local correlation satisfy
    \begin{equation} \label{eq:local_bound_general}
    \begin{split}
        \beta^{\max}_L & \leq 2(N - 4R) + R\beta^{\max}_{*}, \\
        \beta^{\min}_L & \geq 2(N-4R)\Re(\omega^{\frac{d-1}{2}}) + R\beta^{\min}_{*},
    \end{split}
    \end{equation}
    where $\beta^{\max}_{*}$ (respectively $\beta^{\min}_{*}$) denotes the maximum (respectively the minimum) value attained by the expression $\expval*{T_{(i_*,j_*)}} + \mathrm{c.c.}$, given in \Cref{eq:special_tile}, over all possible local deterministic strategies.
    
    Furthermore, in the case of local dimension $d=3$, these bounds are tight, provided that every plaquette (respectively, vertex) not containing a special site is connected to at least one other plaquette (respectively, vertex) also not containing a special site.
\end{thm}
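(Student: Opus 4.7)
The plan is to treat the inequalities in \eqref{eq:local_bound_general} first and then establish tightness for $d=3$ through explicit local deterministic strategies (LDS). For the inequalities, I would decompose the Bell expression \eqref{eq:bell_inequality_general} into the $R$ special-tile contributions $\expval{T_{(i_r,j_r)}} + \text{c.c.}$ plus the $N - 4R$ independent non-special terms indexed by $\bar{\vertexSet}$ and $\bar{\plaquetteSet}$. Under any LDS, each non-special term takes the form $2\Re(z)$ with $z$ a product of four values in $\{1,\omega,\ldots,\omega^{d-1}\}$ (with signs set by the lattice orientation), so every such term lies in $[\,2\Re(\omega^{(d-1)/2}),\,2\,]$, with the maximum attained at $z=1$ and the minimum at the $d$-th root of unity closest to $-1$. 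Since the joint extrema of a sum are bounded by the sum of individual extrema, combining this with the definitions of $\beta_*^{\max}$ and $\beta_*^{\min}$ on the special tiles immediately yields \eqref{eq:local_bound_general}.

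For tightness at $d=3$, I would construct LDS saturating both bounds by gluing the special-tile optima to compatible assignments on the non-special edges. The key observation, already made in Sec.~\ref{subsubsec:local_bound_d=3}, is that the extremal strategies displayed there for $\beta_*^{\max}$ and $\beta_*^{\min}$ both assign the value $1$ to every boundary variable, that is, every edge shared with a stabilizer outside the tile. Adopting such a strategy on each of the $R$ special tiles reduces the problem to assigning values on the remaining edges so that every non-special vertex and plaquette term is simultaneously extremal, subject to the boundary-$1$ constraint. For the maximum this is immediate: setting all remaining edges to $1$ forces $z = 1$ at each non-special stabilizer, yielding value $2$ per term and hence the total $2(N - 4R) + R\beta_*^{\max}$.

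The minimum is the more delicate case, and this is where the connectivity hypothesis enters. An isolated non-special vertex whose four edges were all boundary-pinned to $1$ would be stuck at value $2$, precluding the minimum. Under the hypothesis, the non-special vertices form a disjoint union of connected polyominoes of size $\geq 2$, and independently so do the non-special plaquettes; by \Cref{lem:polyomino_decomposition}, each such polyomino decomposes into the five elementary tiles. It then suffices to exhibit, for each of the $2$-, $3$-, L-, T-, and plus-dominoes, an assignment of the interior edge variables (those shared between two tile members) that achieves product $\omega$ or $\omega^2$ at every constituent stabilizer while keeping the external boundary edges equal to $1$. The main obstacle is precisely this case analysis: for each elementary tile, the interior degrees of freedom must suffice to meet every constraint simultaneously, with the sign conventions of \eqref{eq:plaquette_and_vertex_operators} correctly accounted for. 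Once this is verified tile by tile, stitching the solutions together via \Cref{lem:polyomino_decomposition} produces a global LDS that saturates the minimum.
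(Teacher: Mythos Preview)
Your proposal is correct and follows essentially the same approach as the paper: the inequalities in \eqref{eq:local_bound_general} come from bounding each non-special term individually by $2$ and $2\Re(\omega^{(d-1)/2})$, and tightness at $d=3$ is obtained by combining the boundary-$1$ special-tile optima with a polyomino decomposition (via \Cref{lem:polyomino_decomposition}) of the non-special vertices and plaquettes into elementary tiles, each equipped with an explicit minimizing assignment that is $1$ on its own boundary. The only step you leave as a ``case analysis'' is precisely what the paper carries out explicitly by displaying the five elementary-tile strategies.
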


For simplicity, we will split the proof in two parts.
First we will prove \Cref{eq:local_bound_general} for all prime dimensions $d \geq 3$. 
And then we will restrict ourselves to the case $d = 3$, where we will prove that the bounds are tight.

\begin{proof} 
\textbf{Part 1.} 
    Every term in the expression is optimized individually over all possible LDS.
    As shown in \Cref{fig:toric-code_tiling}, the optimization can be decomposed into three categories of expressions.
    (i) All terms involving a special site, also know as special tile term and defined in \Cref{eq:special_tile}: $\expval*{T_{(i_r,j_r)}} + \text{c.c.}, \forall r \in \{1,\dots, R\}$.
    (ii) Vertex terms that do not contain a special site:
    \begin{equation}
        \expval*{\tilde{V}_{(i,j)}} + \text{c.c.} = \expval*{ A^{(i-1,j)}_{0,1} }^{d-1} \expval*{ A^{(i,j-1)}_{0,1} }^{d-1} \expval*{ A^{(i,j+1)}_{0,1} } \expval*{ A^{(i+1,j)}_{0,1}} + \text{c.c.}, \qquad \forall (i,j) \in \bar{\vertexSet} \; .
    \end{equation}
    (iii) Plaquette terms that do not contain a special site:
    \begin{equation}
        \expval*{\tilde{P}_{(i,j)}} + \text{c.c.} = \expval*{ A^{(i-1,j)}_{1,1} } \expval*{ A^{(i,j-1)}_{1,1} }^{d-1} \expval*{ A^{(i,j+1)}_{1,1} } \expval*{ A^{(i+1,j)}_{1,1}}^{d-1} + \text{c.c.},\qquad \forall \; (i,j) \in \bar{\plaquetteSet} \; .
    \end{equation}
    The sets $\bar{\plaquetteSet}$ and $\bar{\vertexSet}$ denote all plaquettes and vertices that do not contain a special site.
        
    In the case of LDS optimization, each variable $\{ \expval*{A^{(i,j)}_{x,1}} \}$ takes value in $\{1,\omega, \omega^2, \dots, \omega^{d-1}\}$.
    The maximal value of a single vertex term not containing a special site is:
    \begin{equation}
        \max_{\text{LDS}} \expval*{\tilde{V}_{(i,j)}} + \text{c.c.} = 2\max_{\text{LDS}} \Re(\expval*{\tilde{V}_{(i,j)}}) = 2
    \end{equation}
    and the minimal value is 
    \begin{equation}
        \min_{\text{LDS}} \expval*{\tilde{V}_{(i,j)}} + \text{c.c.} = 2 \min_{\text{LDS}} \Re(\expval*{\tilde{V}_{(i,j)}}) = 2 \Re(\omega^{\frac{d-1}{2}}).
    \end{equation}
    Similar results hold for plaquettes, i.e. $\max \expval*{\tilde{P}_{(i,j)}} = 2$ and $\min \expval*{\tilde{P}_{(i,j)}} = 2\Re(\omega^{\frac{d-1}{2}})$ for all $(i,j) \in \bar{\plaquetteSet}$.
    The number of plaquettes and vertices that contain no special site is $|\bar{\plaquetteSet}| + |\bar{\vertexSet}| = N-4R$, where $R$ is the number of special sites.
    By construction, $\beta_*^{\max}$ and $\beta_*^{\min}$ are the optimal values of a special tile.
    Thus the local bounds satisfy \Cref{eq:local_bound_general}.

\textbf{Part 2.}
We focus now on proving the second part of the theorem. 
From now on, we assume that the local dimension is $d=3$.
We want to show that there exist strategies that reach the upper bound and the lower bound given in \Cref{eq:local_bound_general}, i.e.
\begin{equation}\label{eq:local_bound_general_d=3}
    \beta_L^{\max} = 2N-8R + R\beta_*^{\max}, \quad \text{and} \quad \beta_L^{\min} = -N+4R + R\beta_*^{\min}.
\end{equation}
Note that in the case $d=3$, we have $\Re(\omega) = \Re(\omega^2) = -\frac{1}{2}$. 
Since there are only three types of variables $\expval*{A_{0,1}^{(i,j)}}$, $\expval*{A_{1,1}^{(i,j)}}$ and $\expval*{A_{2,1}^{(i,j)}}$, they can be visualized on three grids as shown in \Cref{fig:variables_visualization}.
Each dot is a variable.
\begin{figure}[h]
    \centering
    \includegraphics[width=0.9\linewidth]{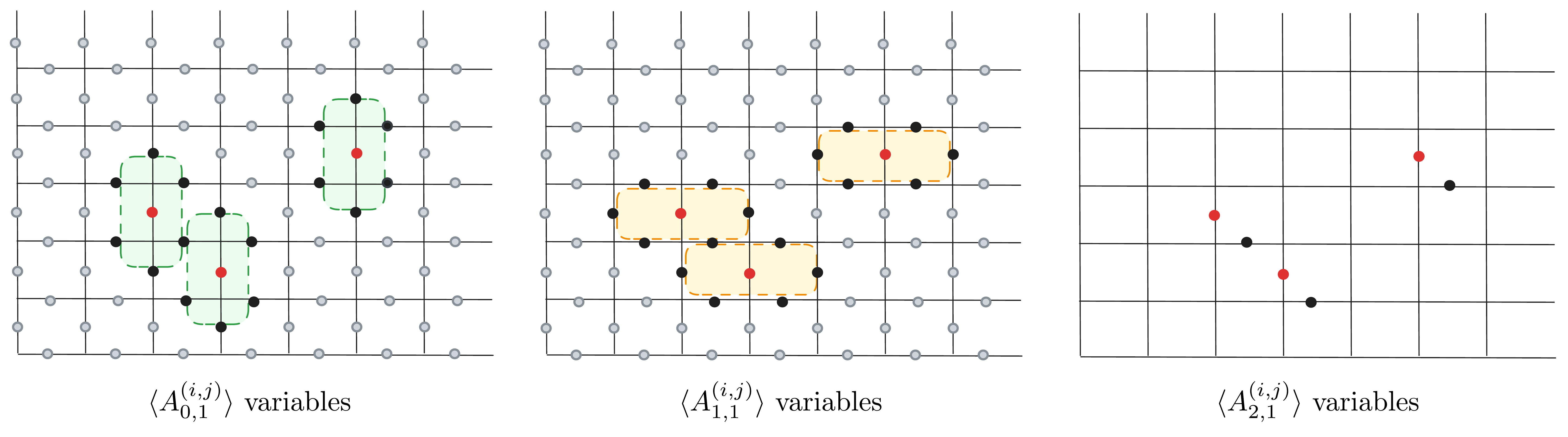}
    \caption{Visualization of the variables appearing in the Bell expression given in \Cref{eq:bell_inequality_general} in the specific case of $d=3$.
    The red dots are variables associated to special sites. The black dots are variables appearing in the expression of the special tile.}
    \label{fig:variables_visualization}
\end{figure}
For the example, we assume that $R=3$, i.e. there are three special sites (in red).
The special tile, defined in \Cref{eq:special_tile}, is thus the combination of a yellow-tile, a green tile and the corresponding $\expval*{A_{2,1}^{(i,j)}}$ variables.

The idea is to assign a value in $\{1,\omega, \omega^2\}$ to each variable such that we get the bounds in \Cref{eq:local_bound_general_d=3} when evaluating the Bell expression given by \Cref{eq:bell_inequality_general}.
In \Cref{subsubsec:local_bound_d=3}, we explain how to optimize the special tile in the case $d=3$ and we give an explicit strategy.
We recall them here:
\begin{equation}
\begin{split}
\expval*{A^{(i_r,j_r)}_{0,1}} = \omega^2, 
\quad \expval*{A^{(i_r,j_r)}_{1,1}} = 1, \quad
\expval*{A^{(i_r,j_r)}_{2,1}} = 1, \quad
\expval*{A^{(i_r+1,j_r-1)}_{2,1}} = \omega, \quad \text{(maximizing strategy)}\\
\expval*{A^{(i_r,j_r)}_{0,1}} = \omega, 
\quad \expval*{A^{(i_r,j_r)}_{1,1}} = \omega, \quad
\expval*{A^{(i_r,j_r)}_{2,1}} = \omega^2, \quad
\expval*{A^{(i_r+1,j_r-1)}_{2,1}} = 1, \quad \text{(minimizing strategy)}
\end{split}
\end{equation}
and the twelve other variables are set to $1$ for both strategies, i.e. $\expval*{A_{0,1}^{(i_r,j_r\pm2)}} = \expval*{A_{0,1}^{(i_r-1,j_r\pm1)}} = \expval*{A_{0,1}^{(i_r+1,j_r\pm1)}} = 1$ (variables on the boundary of the green tile) and $\expval*{A_{1,1}^{(i_r\pm2,j_r)}} = \expval*{A_{1,1}^{(i_r\pm1,j_r-1)}} = \expval*{A_{1,1}^{(i_r\pm1,j_r+1)}} = 1$ (variables on the boundary of the yellow tile). 

To each variable in a special tile (black and red dots in \Cref{fig:variables_visualization}), we assign the value given by the optimal strategy.
Even when special tiles touch each other, these assignments are compatible, because all the shared variables (variables at the boundaries) are equal to 1.
We are thus left with showing that there exists local strategies that satisfy
\begin{equation} \label{eq:optimize_rest}
\begin{split}
    \max_{\text{LDS}} \left[ \sum_{(i,j) \in \bar{\plaquetteSet}} \expval*{\tilde{P}_{(i,j)}} + \sum_{(i,j) \in \bar{\vertexSet}} \expval*{\tilde{V}_{(i,j)}} + \text{c.c.} \right] = 2N-8R, \\
    \min_{\text{LDS}} \left[ \sum_{(i,j) \in \bar{\plaquetteSet}} \expval*{\tilde{P}_{(i,j)}} + \sum_{(i,j) \in \bar{\vertexSet}} \expval*{\tilde{V}_{(i,j)}} + \text{c.c.} \right] = -N+4R ,
\end{split}
\end{equation}
with $\expval*{A_{0,1}^{(i,j)}} = \expval*{A_{1,1}^{(i,j)}} = 1$ for all sites at the boundary of a special tile.
The sums spans over all plaquettes (resp. vertices) that do not contain a special site, denoted $\bar{\plaquetteSet}$ (resp. $\bar{\vertexSet}$).
Visually in \Cref{fig:variables_visualization}, this is corresponds to finding an assignment for the free variables (gray dots) under the constraint that some variables (black dots on the boundary) are equal to 1.

Since there are no shared variables between plaquette terms and vertex terms, they can be optimized individually.
In what follows, we will focus only on the plaquette terms.
A similar derivation can be made for the vertex terms.
The two optimization problems are similar, thus we expect that the bounds on plaquettes only, corresponds to half the bounds in \Cref{eq:optimize_rest}.
In other word, we want to to show that there are assignments of the $\expval*{A_{1,1}^{(i,j)}}$ variables such that 
\begin{equation}\label{eq:plaquettes_optimization}
    \max_{\text{LDS}} \left[ \sum_{(i,j) \in \bar{\plaquetteSet}} \expval*{\tilde{P}_{(i,j)}} + \text{c.c.} \right] = N-4R, \qquad  
    \min_{\text{LDS}} \left[ \sum_{(i,j) \in \bar{\plaquetteSet}} \expval*{\tilde{P}_{(i,j)}} + \text{c.c.} \right] = -\frac{N}{2} + 2R,
\end{equation}
with $\expval*{A_{1,1}^{(i,j)}} = 1$ for all sites at the boundary of a special tile.

For simplicity, we call \textit{complementary tile} of plaquettes the set of plaquettes not containing a special site. 
The complementary tile of plaquettes can be visualized on the left of \Cref{fig:complementary_tile} (purple area).
When building the Bell expression in \Cref{subsec:bell_expression}, we assumed that each plaquette and each vertex contains only one special site.
In other words, green tiles and yellow tiles cannot overlap (but they can touch).
Moreover, the theorem states that the special sites must be placed such that every plaquette in the complementary tile is connected to at least another plaquette in the complementary tile.
The middle of \Cref{fig:complementary_tile} shows some forbidden configurations of the special sites.
In other words, the complementary tile is formed by tiles of arbitrary shape that have size at least 2.
\begin{figure}[h]
    \centering
    \includegraphics[width=0.9\linewidth]{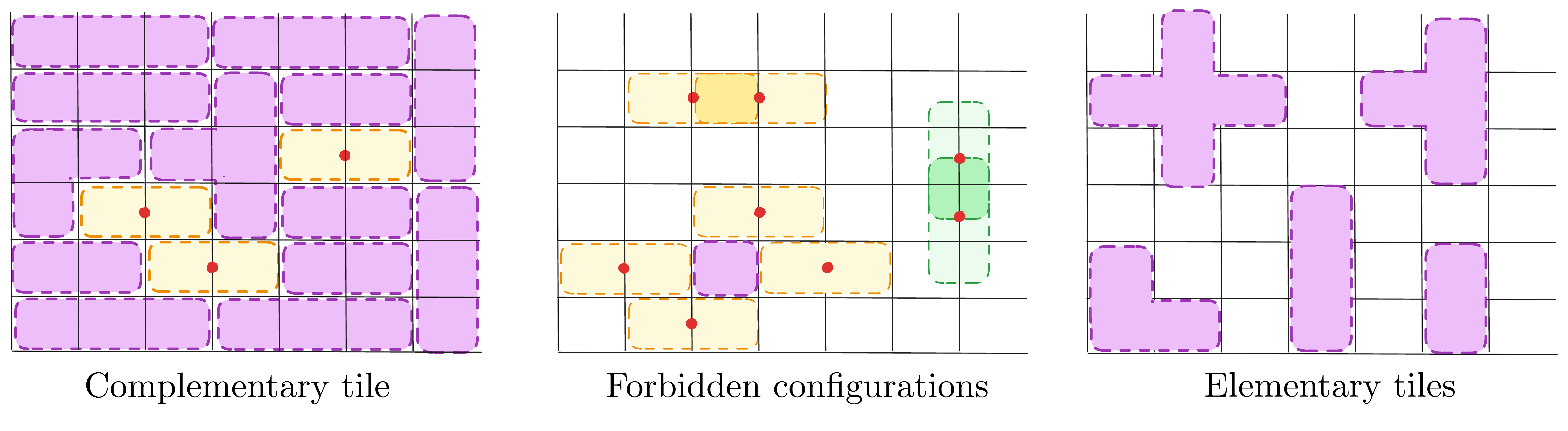}
    \caption{On the left: possible tiling of the complementary tile (purple area). In the middle: examples of forbidden configurations. The special sites (in red) must be placed in a way such that these configurations do not appear. On the right: elementary tiles used to decompose the complementary tile.}
    \label{fig:complementary_tile}
\end{figure}
Under this condition, the complementary tile can be decomposed into the five elementary tiles shown on the right of \Cref{fig:complementary_tile}. 
The proof is given in \Cref{lem:polyomino_decomposition}.
An example of such decomposition can be seen on the left of \Cref{fig:complementary_tile}.

For each elementary tile $t$, there exists strategies such that $\min_{\text{LDS}} t = -|t|$ and $\max_{\text{LDS}} t = 2|t|$, where $|t|$ is the size of the elementary tile.
The minimizing strategies are shown in \Cref{fig:minimizing_strategies}.
As a reminder, the expression of a plaquette is $\expval*{\tilde{P}_{(i,j)}} = \expval*{ A^{(i-1,j)}_{1,1} } \expval*{ A^{(i,j-1)}_{1,1} }^{2} \expval*{ A^{(i,j+1)}_{1,1} } \expval*{ A^{(i+1,j)}_{1,1}}^{2}$.
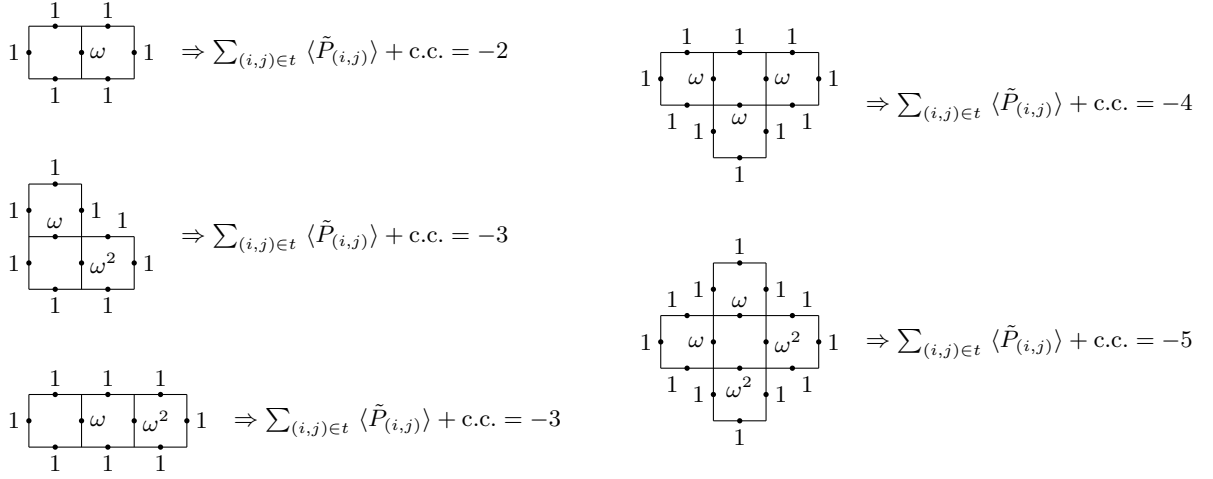
\begin{figure}[h]
    \centering
        \begin{tikzpicture}[scale=0.7]
            \begin{scope}[xshift=0cm,yshift=0cm]
                \coordinate (a0) at (0,0);
                \coordinate (a1) at (0,1);
                \coordinate (b0) at (1,0);
                \coordinate (b1) at (1,1);
                \coordinate (c0) at (2,0);
                \coordinate (c1) at (2,1);
                \coordinate (A0) at ($(a0)!0.5!(b0)$);
                \coordinate (B0) at ($(b0)!0.5!(c0)$);
                \coordinate (A1) at ($(a0)!0.5!(a1)$);
                \coordinate (B1) at ($(b0)!0.5!(b1)$);
                \coordinate (C1) at ($(c0)!0.5!(c1)$);
                \coordinate (A2) at ($(a1)!0.5!(b1)$);
                \coordinate (B2) at ($(b1)!0.5!(c1)$);
                \draw (a0)--(b0)--(c0);
                \draw (a1)--(b1)--(c1);
                \draw (a0)--(a1);
                \draw (b0)--(b1);
                \draw (c0)--(c1);
                \foreach \p in {A0,B0,A1,B1,C1,A2,B2} \fill (\p) circle (0.05);
                \node[below] at (A0) {1};
                \node[below] at (B0) {1};
                \node[left] at (A1) {1};
                \node[right] at (B1) {$\omega$};
                \node[right] at (C1) {1};
                \node[above] at (A2) {1};
                \node[above] at (B2) {1};
                \node at (6,0.5) {$\Rightarrow \sum_{(i,j) \in t} \expval*{\tilde{P}_{(i,j)}} + \text{c.c.}=-2$};
            \end{scope}
            \begin{scope}[xshift=0cm,yshift=-7cm]
                \coordinate (a0) at (0,0);
                \coordinate (a1) at (0,1);
                \coordinate (b0) at (1,0);
                \coordinate (b1) at (1,1);
                \coordinate (c0) at (2,0);
                \coordinate (c1) at (2,1);
                \coordinate (d0) at (3,0);
                \coordinate (d1) at (3,1);
                \coordinate (A0) at ($(a0)!0.5!(b0)$);
                \coordinate (B0) at ($(b0)!0.5!(c0)$);
                \coordinate (C0) at ($(c0)!0.5!(d0)$);
                \coordinate (A1) at ($(a0)!0.5!(a1)$);
                \coordinate (B1) at ($(b0)!0.5!(b1)$);
                \coordinate (C1) at ($(c0)!0.5!(c1)$);
                \coordinate (D1) at ($(d0)!0.5!(d1)$);
                \coordinate (A2) at ($(a1)!0.5!(b1)$);
                \coordinate (B2) at ($(b1)!0.5!(c1)$);
                \coordinate (C2) at ($(c1)!0.5!(d1)$);
                \draw (a0)--(b0)--(c0)--(d0);
                \draw (a1)--(b1)--(c1)--(d1);
                \draw (a0)--(a1);
                \draw (b0)--(b1);
                \draw (c0)--(c1);
                \draw (d0)--(d1);
                \foreach \p in {A0,B0,C0,A1,B1,C1,D1,A2,B2,C2} \fill (\p) circle (0.05);
                \node[below] at (A0) {1};
                \node[below] at (B0) {1};
                \node[below] at (C0) {1};
                \node[left] at (A1) {1};
                \node[right] at (B1) {$\omega$};
                \node[right] at (C1) {$\omega^2$};
                \node[right] at (D1) {1};
                \node[above] at (A2) {1};
                \node[above] at (B2) {1};
                \node[above] at (C2) {1};
                \node at (7,0.5) {$\Rightarrow \sum_{(i,j) \in t} \expval*{\tilde{P}_{(i,j)}} + \text{c.c.}=-3$};
            \end{scope}
            \begin{scope}[xshift=0cm,yshift=-4cm]
                \coordinate (a0) at (0,0);
                \coordinate (a1) at (0,1);
                \coordinate (a2) at (0,2);
                \coordinate (b0) at (1,0);
                \coordinate (b1) at (1,1);
                \coordinate (b2) at (1,2);
                \coordinate (c0) at (2,0);
                \coordinate (c1) at (2,1);
                \coordinate (A0) at ($(a0)!0.5!(b0)$);
                \coordinate (B0) at ($(b0)!0.5!(c0)$);
                \coordinate (A1) at ($(a0)!0.5!(a1)$);
                \coordinate (B1) at ($(b0)!0.5!(b1)$);
                \coordinate (C1) at ($(c0)!0.5!(c1)$);
                \coordinate (A2) at ($(a1)!0.5!(b1)$);
                \coordinate (B2) at ($(b1)!0.5!(c1)$);
                \coordinate (A3) at ($(a1)!0.5!(a2)$);
                \coordinate (B3) at ($(b1)!0.5!(b2)$);
                \coordinate (A4) at ($(a2)!0.5!(b2)$);
                \draw (a0)--(b0)--(c0);
                \draw (a1)--(b1)--(c1);
                \draw (a0)--(a1)--(a2);
                \draw (b0)--(b1)--(b2);
                \draw (c0)--(c1);
                \draw (a2)--(b2);
                \foreach \p in {A0,B0,A1,B1,C1,A2,B2, A3,B3,A4} \fill (\p) circle (0.05);
                \node[below] at (A0) {1};
                \node[below] at (B0) {1};
                \node[left] at (A1) {1};
                \node[right] at (B1) {$\omega^2$};
                \node[right] at (C1) {1};
                \node[above] at (A2) {$\omega$};
                \node[above right] at (B2) {1};
                \node[left] at (A3) {1};
                \node[right] at (B3) {1};
                \node[above] at (A4) {1};
                \node at (6,1) {$\Rightarrow \sum_{(i,j) \in t} \expval*{\tilde{P}_{(i,j)}} + \text{c.c.}=-3$};
            \end{scope}
            \begin{scope}[xshift=12cm,yshift=-1.5cm]
                \coordinate (a1) at (0,1);
                \coordinate (a2) at (0,2);
                \coordinate (b0) at (1,0);
                \coordinate (b1) at (1,1);
                \coordinate (b2) at (1,2);
                \coordinate (c0) at (2,0);
                \coordinate (c1) at (2,1);
                \coordinate (c2) at (2,2);
                \coordinate (d1) at (3,1);
                \coordinate (d2) at (3,2);
                \draw (b0)--(c0);
                \draw (a1)--(b1)--(c1)--(d1);
                \draw (a2)--(b2)--(c2)--(d2);
                \draw (a1)--(a2);
                \draw (b0)--(b1)--(b2);
                \draw (c0)--(c1)--(c2);
                \draw (d1)--(d2);
                \coordinate (bc0) at ($(b0)!0.5!(c0)$);
                \coordinate (ab1) at ($(a1)!0.5!(b1)$);
                \coordinate (bc1) at ($(b1)!0.5!(c1)$);
                \coordinate (cd1) at ($(c1)!0.5!(d1)$);
                \coordinate (ab2) at ($(a2)!0.5!(b2)$);
                \coordinate (bc2) at ($(b2)!0.5!(c2)$);
                \coordinate (cd2) at ($(c2)!0.5!(d2)$);
                \coordinate (a12) at ($(a1)!0.5!(a2)$);
                \coordinate (b01) at ($(b0)!0.5!(b1)$);
                \coordinate (b12) at ($(b1)!0.5!(b2)$);
                \coordinate (c01) at ($(c0)!0.5!(c1)$);
                \coordinate (c12) at ($(c1)!0.5!(c2)$);
                \coordinate (d12) at ($(d1)!0.5!(d2)$);
                \foreach \p in {bc0,ab1,bc1,cd1,ab2,bc2,cd2,a12,b01,b12,c01,c12,d12} \fill (\p) circle (0.05);
                \node[below] at (bc0) {1};
                \node[left] at (b01) {1};
                \node[right] at (c01) {1};
                \node[below left] at (ab1) {1};
                \node[below] at (bc1) {$\omega$};
                \node[below right] at (cd1) {1};
                \node[left] at (a12) {1};
                \node[left] at (b12)  {$\omega$};
                \node[right] at (c12) {$\omega$};
                \node[right] at (d12) {1};
                \node[above] at (ab2) {1};
                \node[above] at (bc2) {1};
                \node[above] at (cd2) {1};
                \node at (7,1) {$\Rightarrow \sum_{(i,j) \in t} \expval*{\tilde{P}_{(i,j)}} + \text{c.c.}=-4$};
            \end{scope}
            \begin{scope}[xshift=12cm,yshift=-6.5cm]
                \coordinate (a1) at (0,1);
                \coordinate (a2) at (0,2);
                \coordinate (b0) at (1,0);
                \coordinate (b1) at (1,1);
                \coordinate (b2) at (1,2);
                \coordinate (b3) at (1,3);
                \coordinate (c0) at (2,0);
                \coordinate (c1) at (2,1);
                \coordinate (c2) at (2,2);
                \coordinate (c3) at (2,3);
                \coordinate (d1) at (3,1);
                \coordinate (d2) at (3,2);
                \draw (b0)--(c0);
                \draw (b0)--(b1)--(b2)--(b3);
                \draw (c0)--(c1)--(c2)--(c3);
                \draw (b3)--(c3);
                \draw (a1)--(b1)--(c1)--(d1);
                \draw (a2)--(b2)--(c2)--(d2);
                \draw (a1)--(a2);
                \draw (d1)--(d2);
                \coordinate (bc0) at ($(b0)!0.5!(c0)$);
                \coordinate (ab1) at ($(a1)!0.5!(b1)$);
                \coordinate (bc1) at ($(b1)!0.5!(c1)$);
                \coordinate (cd1) at ($(c1)!0.5!(d1)$);
                \coordinate (ab2) at ($(a2)!0.5!(b2)$);
                \coordinate (bc2) at ($(b2)!0.5!(c2)$);
                \coordinate (cd2) at ($(c2)!0.5!(d2)$);
                \coordinate (bc3) at ($(b3)!0.5!(c3)$);
                \coordinate (a12) at ($(a1)!0.5!(a2)$);
                \coordinate (b01) at ($(b0)!0.5!(b1)$);
                \coordinate (b12) at ($(b1)!0.5!(b2)$);
                \coordinate (b23) at ($(b2)!0.5!(b3)$);
                \coordinate (c01) at ($(c0)!0.5!(c1)$);
                \coordinate (c12) at ($(c1)!0.5!(c2)$);
                \coordinate (c23) at ($(c2)!0.5!(c3)$);
                \coordinate (d12) at ($(d1)!0.5!(d2)$);
                \foreach \p in {bc0,ab1,bc1,cd1,ab2,bc2,cd2,bc3,a12,b01,b12,b23,c01,c12,c23,d12} \fill (\p) circle (0.05);
                \node[below] at (bc0) {1};
                \node[left] at (b01) {1};
                \node[right] at (c01) {1};
                \node[below left] at (ab1) {1};
                \node[below] at (bc1) {$\omega^2$};
                \node[below right] at (cd1) {1};
                \node[left] at (a12) {1};
                \node[left] at (b12)  {$\omega$};
                \node[right] at (c12) {$\omega^2$};
                \node[right] at (d12) {1};
                \node[above left] at (ab2) {1};
                \node[above] at (bc2) {$\omega$};
                \node[above right] at (cd2) {1};
                \node[left] at (b23) {1};
                \node[right] at (c23) {1};
                \node[above] at (bc3) {1};
                \node at (7,1.5) {$\Rightarrow \sum_{(i,j) \in t} \expval*{\tilde{P}_{(i,j)}} + \text{c.c.}=-5$};
        \end{scope}
        \end{tikzpicture}
    \caption{Local deterministic strategy that minimizes every elementary tile. The minimal value of the Bell expression restricted to the tile is equal to the number of plaquettes in the tile.}
    \label{fig:minimizing_strategies}
\end{figure}

The maximizing strategies corresponds to setting all the variables in the elementary tile $t$ to $1$.
In both cases, the optimal strategies are all compatible with each other (i.e. all the variables at the boundaries are equal to 1).
Putting together all the elementary tiles composing the complementary tile, we get:
\begin{equation}
    \min_{\text{LDS}} \sum_{(i,j) \in \bar{\plaquetteSet}} \expval*{\tilde{P}_{(i,j)}} + \text{c.c.} = -|\bar{\plaquetteSet}|, \quad \text{and} \quad \max_{\text{LDS}} \sum_{(i,j) \in \bar{\plaquetteSet}} \expval*{\tilde{P}_{(i,j)}} + \text{c.c.} = 2|\bar{\plaquetteSet}|.
\end{equation}
Since the complementary tile contains $|\bar{\plaquetteSet}| = \frac{N}{2}-2R$ plaquettes, the strategies given above satisfy \Cref{eq:plaquettes_optimization}.
By repeating this procedure for the vertices, we construct a strategy that reaches the maximal and minimal bounds given in \Cref{thm:local_bound_general}.
\end{proof}

\FloatBarrier
\section{Quantum bound} \label{app:quantum_bound}
In this section, we give the proof of \Cref{thm:quantum_bound}.
For the sake of generality, we state the theorem in the case of multiple special sites.
\begin{thm} \label{thm:quantum_bound_general}
For the Bell expression in \Cref{eq:bell_inequality_general}, the supremum over all quantum correlations satisfies
\begin{equation}
    \beta^{\max}_Q = 2N+(4d-8)R,
\end{equation}
where $N$ is the number of parties, $d$ is the number of outcomes and $R$ is the number of special sites in the Bell expression.
\end{thm}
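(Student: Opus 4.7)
The plan is to extend the sum-of-squares (SOS) decomposition already exhibited in \Cref{eq:sum-of-squares} for the single special-site case to the multi-site Bell expression in \Cref{eq:bell_inequality_general}. By Naimark's theorem we may assume without loss of generality that all measurements are projective, so that every generalized observable $A^{(i,j)}_{x,k}$ is unitary, and in particular $A^{(i,j)\,\dagger}_{x,k}A^{(i,j)}_{x,k}=\mathbb{I}$. The candidate identity I would establish is
\begin{equation}
2N+(4d-8)R-\bigl[\tilde{\mathcal{B}}+\tilde{\mathcal{B}}^{\dagger}\bigr]
= \sum_{(i,j)\in\bar{\vertexSet}}\bigl|\mathbb{I}-\tilde V^{\pdagger}_{(i,j)}\bigr|^{2}+\sum_{(i,j)\in\bar{\plaquetteSet}}\bigl|\mathbb{I}-\tilde P^{\pdagger}_{(i,j)}\bigr|^{2}+\sum_{r=1}^{R}\mathcal{Q}_{r},
\end{equation}
where $\tilde{\mathcal{B}}$ denotes the operator whose expectation value is the Bell expression, and $\mathcal{Q}_{r}$ is the per-site block
\begin{equation}
\mathcal{Q}_{r}=\bigl|\mathbb{I}-\tilde V^{\pdagger}_{(i_{r},j_{r}-1)}\bigr|^{2}+\bigl|\mathbb{I}-\tilde V^{\pdagger}_{(i_{r},j_{r}+1)}\bigr|^{2}+\bigl|\mathbb{I}-\tilde P^{\pdagger}_{(i_{r}-1,j_{r})}\bigr|^{2}+\bigl|\mathbb{I}-\tilde P^{\pdagger}_{(i_{r}+1,j_{r})}\bigr|^{2}+2\sum_{x=2}^{d-1}\bigl|\mathbb{I}-\tilde E^{\pdagger}_{(i_{r},j_{r})}(x)\bigr|^{2}.
\end{equation}
Since each squared term is manifestly positive semidefinite, this decomposition immediately yields the upper bound $\beta^{\max}_{Q}\leq 2N+(4d-8)R$.

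The bulk of the work goes into verifying the identity. Expanding each $|\mathbb{I}-\tilde s^{\dagger}|^{2}=\mathbb{I}-\tilde s-\tilde s^{\dagger}+\tilde s \tilde s^{\dagger}$, the cross terms reproduce $-\tilde{\mathcal{B}}-\tilde{\mathcal{B}}^{\dagger}$ with the exact coefficients appearing in \Cref{eq:bell_inequality_general} (factor $2$ for each $\tilde E$, factor $1$ for each $\tilde V,\tilde P$). The counting of the constant contribution proceeds as follows: the $N-4R$ bulk vertex/plaquette terms contribute $2(N-4R)$, the four operators adjacent to each special site contribute $8R$, and the $(d-2)$ operators $\tilde E(x)$ per special site, weighted by $2$, contribute $4(d-2)R$; the total is $2N+(4d-8)R$, matching the claimed bound. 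The condition that neighboring special sites share no stabilizer guarantees that the $\mathcal{Q}_{r}$ blocks have disjoint support with each other and with the bulk terms, so the decomposition factorizes tile by tile exactly as in the single-site proof.

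The main obstacle — also the reason the identity is nontrivial — is showing that $\tilde s^{\dagger}\tilde s=\mathbb{I}$ for every $\tilde s\in\{\tilde V,\tilde P,\tilde E\}$. For terms away from special sites this is immediate, because the substitutions $X^{k}\to A_{0,k}$ and $Z^{k}\to A_{1,k}$ produce products of unitary observables. For terms touching a special site, however, one must handle the linear combinations $\bar A^{(i^*,j^*)}_{x,k}$ defined in \Cref{eq:substitutions}, which are a priori neither unitary nor Hermitian; here the specific choice of the complex coefficients $\lambda_{k}$ of \Cref{app:complex_coefficients}, together with the extra phases $\omega^{-kx(x+1)}$ and the normalization $1/\sqrt{d}$, is essential. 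I would argue isometry of the products $\tilde V_{(i^{*},j^{*}\pm 1)}$, $\tilde P_{(i^{*}\pm 1,j^{*})}$ and $\tilde E_{(i^{*},j^{*})}(x)$ by pulling the bulk unitaries outside and reducing the claim to a Fourier/Gauss-sum identity at the special site, which is exactly the content of the Kaniewski–Šupić construction invoked in \cite{kaniewski_maximal_2019} and assumed valid whenever $d$ is an odd prime.

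Finally, saturation is obtained by exhibiting a quantum realization that annihilates every squared term simultaneously. Taking the state to be any vector in the $\mathbb{Z}_{d}$ toric-code ground-state manifold and the measurements to be the generalized observables of \Cref{eq:inverse_substitutions} (applied at each special site), each $\tilde s^{\dagger}$ acts as a stabilizer of the ground subspace and therefore as the identity on the state, so every $|\mathbb{I}-\tilde s^{\dagger}|^{2}$ has zero expectation. This confirms that the SOS bound is tight and establishes $\beta^{\max}_{Q}=2N+(4d-8)R$.
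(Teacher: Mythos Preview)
Your overall strategy---an SOS decomposition term by term, one block per special site plus bulk terms---is the same as the paper's. However, the step you flag as ``the main obstacle'' is resolved incorrectly, and this is not a minor detail: it is precisely the place where the argument has content.

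You assert that $\tilde s\,\tilde s^{\dagger}=\mathbb{I}$ (equivalently, that each $\tilde V_{(i_r,j_r\pm1)}$, $\tilde P_{(i_r\pm1,j_r)}$, $\tilde E_{(i_r,j_r)}(x)$ is individually an isometry), and propose to reduce this to a ``Fourier/Gauss-sum identity at the special site''. That claim is false. After stripping off the unitary bulk factors, what remains at the special site is $\bar A^{(i_r,j_r)}_{x,1}$, and a direct computation gives
\[
\bigl(\bar A^{(i_r,j_r)}_{x,1}\bigr)^{\dagger}\bar A^{(i_r,j_r)}_{x,1}
=\frac{1}{d}\sum_{y,y'=0}^{d-1}\omega^{-x(y-y')}A_{y',1}^{\dagger}A_{y,1},
\]
which is \emph{not} the identity for a single $x$: the off-diagonal terms $A_{y',1}^{\dagger}A_{y,1}$ with $y\neq y'$ do not cancel. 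No Gauss-sum identity rescues this, because the phase $\omega^{-kx(x+1)}$ and the normalization $1/\sqrt d$ only ensure $|\lambda_k|=1$, not unitarity of $\bar A_{x,k}$. Consequently your constant-counting (``the four operators adjacent to each special site contribute $8R$'') is arithmetically correct only by accident, and the identity you wrote down does not hold as an operator equation.

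What actually makes the SOS work is a \emph{collective} identity: summing over all $d$ inputs,
\[
\sum_{x=0}^{d-1}\bigl(\bar A^{(i_r,j_r)}_{x,1}\bigr)^{\dagger}\bar A^{(i_r,j_r)}_{x,1}=d\,\mathbb{I},
\]
which follows from the discrete Fourier orthogonality $\sum_x\omega^{x(y-y')}=d\,\delta_{y,y'}$. The block $\mathcal Q_r$ is engineered so that the quadratic pieces assemble into exactly $2\sum_{x=0}^{d-1}(\bar A_{x,1})^{\dagger}\bar A_{x,1}=2d\,\mathbb{I}$: the two adjacent $\tilde V$'s supply the $x=0$ contribution twice, the two adjacent $\tilde P$'s supply $x=1$ twice, and each $\tilde E(x)$ (carrying coefficient $2$ in the Bell expression) supplies $x\in\{2,\dots,d-1\}$ twice. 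This is why the extra operators $E(x)$ and their weight $2$ are not optional---without them the quadratic terms would not close to a multiple of the identity. A second, related subtlety is that one cannot use $|\mathbb{I}-\tilde s^{\dagger}|^{2}$ uniformly as you do: the dagger must be placed (or omitted) on each special-site term so that the resulting quadratic piece is always $(\bar A_{x,1})^{\dagger}\bar A_{x,1}$ rather than a mix of $(\bar A)^{\dagger}\bar A$ and $\bar A(\bar A)^{\dagger}$, since $\bar A_{x,1}$ is not normal. The paper's decomposition in \Cref{eq:general_sum-of-squares} makes exactly this choice.
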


The first part of the proof focuses on defining generalized quantum observables an some of their properties.
These properties will then be used to construct a sum-of-squares for the Bell expression in \Cref{eq:bell_inequality}.

\subsection{Properties of generalized observables}
As explained in the main text (see \Cref{subsec:quantum_bound}), we can assume without loss of generality that, in the quantum case, projective measurements are performed.
This implies that the random variables $A_{x,k}$ can be associated to unitary operators, called generalized observables, which satisfy $A_{x,k} = (A_{x,1})^k$.

Using the property $|\lambda_k| = 1$ (see \Cref{app:complex_coefficients}), we have:
\begin{equation} \label{eq:observable_property_1}
    \sum_{x=0}^{d-1} \bar{A}_{x,k}^{\dagger}\bar{A}_{x,k} = \frac{1}{d} \sum_{y,y'=0}^{d-1} \; \underbrace{\sum_{x=0}^{d-1} \omega^{kx(y-y')}} _{d\delta_{y,y'}} \; A_{y',k}^{\dagger} A_{y,k} = d\mathbb{I} \;.
\end{equation}
Moreover, using the properties $A_{x,k}^{\dagger} = A_{x,d-k}$ (see \Cref{subsec:quantum_bound}) and $\lambda_k^* = \lambda_{d-k}$ (see \Cref{app:complex_coefficients}), we get
\begin{equation} \label{eq:observable_property_2}
    \bar{A}_{x,k}^{\dagger} = \frac{(\omega^{-kx(x+1)})^*}{\sqrt{d} \lambda_k^*} \sum_{y=0}^{d-1} (\omega^{-kxy})^* A_{y,k}^{\dagger} = \bar{A}_{x,d-k} \;.
\end{equation}
This implies that $\bar{A}_{x,d-k} \bar{A}_{x,d-k}^{\dagger} = \bar{A}_{x,k}^{\dagger}\bar{A}_{x,k}$.

\subsection{Sum-of-squares}
With the help of \Cref{eq:observable_property_1,eq:observable_property_2}, it is possible to express the bell expression as a SOS.
Suppose there are $R$ patches, whose central site are at position $\{ (i_r,j_r) : r = 1,\dots,R \}$ (each of them lies on a vertical edge).
The Bell operator is 
\begin{equation}\label{eq:bell_operator}
    \mathcal{B} = \left[ \sum_{(i,j) \in \vertexSet} \tilde{V}_{(i,j)} + \sum_{(i,j) \in \plaquetteSet} \tilde{P}_{(i,j)} + 2 \sum_{r = 1}^{R}  \sum_{x = 2}^{d-1} \tilde{E}_{(i_r,j_r)}(x) \right] + \text{h.c.} .
\end{equation}
For convenience, we define the set of vertices that are not in a patch $\bar{\vertexSet} \equiv \vertexSet \setminus \{(i_r,j_r\pm1) : r=1,\dots,R\}$ and similarly for plaquettes $\bar{\plaquetteSet} \equiv \plaquetteSet \setminus \{(i_r\pm1,j_r) : r = 1,\dots,R\}$.
The number of such vertices/plaquettes is $|\bar{\vertexSet}| = |\bar{\plaquetteSet}| = N/2-2R$.

Now we can write the following sum-of-squares:
\begin{equation} \label{eq:general_sum-of-squares}
    \begin{split}
        \sum_{r=1}^R \left[ \; |\mathbb{I} - \tilde{V}_{(i_r,j_r-1)}|^2 + |\mathbb{I} - \tilde{V}_{(i_r,j_r+1)}^{\dagger}|^2 + |\mathbb{I}-\tilde{P}_{(i_r-1,j_r)}^{\dagger}|^2 + |\mathbb{I}-\tilde{P}_{(i_r+1,j_r)}|^2 + 2 \sum_{x=2}^{d-1} |\mathbb{I}-\tilde{E}_{(i_r,j_r)}(x)|^2 \right]\\ 
        + \sum_{(i,j) \in \bar{\vertexSet}} |\mathbb{I} - \tilde{V}_{(i,j)}|^2 + \sum_{(i,j) \in \bar{\plaquetteSet}} |\mathbb{I} - \tilde{P}_(i,j)|^2,
    \end{split}
\end{equation}
where $|M|^2 = M^{\dagger} M$.
Let's expand the expression:
\begin{equation}
\begin{split}
\sum_{r = 1}^R \bigg[ & (\mathbb{I} - \tilde{V}_{(i_r,j_r-1)})^{\dagger}(\mathbb{I} - \tilde{V}_{(i_r,j_r-1)})
    + (\mathbb{I} - \tilde{V}_{(i_r,j_r+1)}^{\dagger})^{\dagger}(\mathbb{I} - \tilde{V}_{(i_r,j_r+1)}^{\dagger}) \\
    & + (\mathbb{I} - \tilde{P}_{(i_r-1,j_r)}^{\dagger})^{\dagger} (\mathbb{I} - \tilde{P}_{(i_r-1,j_r)}^{\dagger})
    + (\mathbb{I} - \tilde{P}_{(i_r+1,j_r)})^{\dagger}(\mathbb{I} - \tilde{P}_{(i_r+1,j_r)}) \\
    & + 2 \sum_{x=2}^{d-1} (\mathbb{I}-\tilde{E}_{(i_{r},j_{r})}(x))^{\dagger}(\mathbb{I}-\tilde{E}_{(i_{r},j_{r})}(x)) \bigg] \\
    & + \sum_{{i,j} \in \bar{\vertexSet} } (\mathbb{I}-\tilde{V}_{(i,j)})^{\dagger}(\mathbb{I}-\tilde{V}_{(i,j)}) 
    + \sum_{{i,j} \in \bar{\plaquetteSet} } (\mathbb{I}-\tilde{P}_{(i,j)})^{\dagger}(\mathbb{I}-\tilde{P}_{(i,j)}) \\
    = \sum_{r=1}^R \bigg[ & \left( \mathbb{I} -  \tilde{V}_{(i_r,j_r-1)} - \tilde{V}_{(i_r,j_r-1)}^{\dagger} + [\bar{A}^{(i_r,j_r)}_{0,1}]^{\dagger} \bar{A}^{(i_r,j_r)}_{0,1} \right) + \left( \mathbb{I} -  \tilde{V}_{(i_r,j_r+1)} - \tilde{V}_{(i_r,j_r+1)}^{\dagger} + [\bar{A}^{(i_r,j_r)}_{0,1}]^{\dagger} \bar{A}^{(i_r,j_r)}_{0,1} \right) \\
    & + \left( \mathbb{I} -  \tilde{P}_{(i_r-1,j_r)} - \tilde{P}_{(i_r-1,j_r)}^{\dagger} + [\bar{A}^{(i_r,j_r)}_{0,1}]^{\dagger} \bar{A}^{(i_r,j_r)}_{0,1} \right) + \left( \mathbb{I} -  \tilde{P}_{(i_r+1,j_r)} - \tilde{P}_{(i_r+1,j_r)}^{\dagger} + [\bar{A}^{(i_r,j_r)}_{0,1}]^{\dagger} \bar{A}^{(i_r,j_r)}_{0,1} \right) \\
    & +2 \sum_{x=2}^{d-1} \left( \mathbb{I} - \tilde{E}_{(i_r,j_r)}(x) - (\tilde{E}_{(i_r,j_r)}(x))^{\dagger} + [\bar{A}^{(i_r,j_r)}_{x,1}]^{\dagger} \bar{A}^{(i_r,j_r)}_{x,1} \right) \bigg]\\
    & + \sum_{(i,j) \in \bar{\vertexSet}} (2 \mathbb{I} - \tilde{V}_{(i,j)} - \tilde{V}_{(i,j)}^{\dagger}) + \sum_{(i,j) \in \bar{\plaquetteSet}} (2 \mathbb{I} - \tilde{P}_{(i,j)} - \tilde{P}_{(i,j)}^{\dagger}) \\
    = \sum_{r=1}^R \bigg[ & 4 \mathbb{I} + 2(d-2)\mathbb{I} + 2 \underbrace{\sum_{x = 0}^{d-1} [\bar{A}^{(i_r,j_r)}_{x,1}]^{\dagger} \bar{A}^{(i_r,j_r)}_{x,1}}_{d \mathbb{I}} \bigg] + 4(N/2-2R)\mathbb{I}
    - \left[ \left( \sum_{(i,j) \in \vertexSet} \tilde{V}_{(i,j)} + \sum_{(i,j) \in \plaquetteSet} \tilde{P}_{(i,j)} - 2 \sum_{r=1}^R \sum_{x=2}^{d-1} \tilde{E}_{i_r,j_r}(x) \right) + \text{h.c.} \right] \\
    = & [2N + (4d-8)R] \mathbb{I} - \mathcal{B},
\end{split}
\end{equation}
where we used the following identities:
\begin{itemize}
    \item $\tilde{V}_{(i,j)}^{\dagger}\tilde{V}_{(i,j)} = \mathbb{I}, \quad \forall_{(i,j) \in \bar{\vertexSet}},\quad$ and $\quad\tilde{P}_{(i,j)}^{\dagger}\tilde{P}_{(i,j)} = \mathbb{I}, \quad \forall_{(i,j) \in \bar{\plaquetteSet}}$
    \item $\tilde{V}_{(i_r,j_r+1)}\tilde{V}_{(i_r,j_r+1)}^{\dagger} =  \bar{A}^{(i_r,j_r)}_{0,d-1} [\bar{A}^{(i_r,j_r)}_{0,d-1}]^{\dagger} = [\bar{A}^{(i_r,j_r)}_{0,1}]^{\dagger} \bar{A}^{(i_r,j_r)}_{0,1} = \tilde{V}_{(i_r,j_r-1)}^{\dagger}\tilde{V}_{(i_r,j_r-1)} \quad \forall_{r \in \{1,\dots,R\}}$
    \item $\tilde{P}_{(i_r-1,j_r)} \tilde{P}_{(i_r-1,j_r)}^{\dagger} = \bar{A}^{(i_r,j_r)}_{1,d-1}[\bar{A}^{(i_r,j_r)}_{1,d-1}]^{\dagger} = [\bar{A}^{(i_r,j_r)}_{1,1}]^{\dagger}\bar{A}^{(i_r,j_r)}_{1,1} = \tilde{P}_{(i_r+1,j_r)}^{\dagger}\tilde{P}_{(i_r+1,j_r)} \quad \forall_{r \in \{1,\dots,R\}}$
    \item $[\tilde{E}_{(i_r,j_r)}(x)]^{\dagger} \tilde{E}_{(i_r,j_r)}(x)  = [\bar{A}^{(i_r,j_r)}_{x,1}]^{\dagger} \bar{A}^{(i_r,j_r)}_{x,1} \quad \forall_{x \in \{2,\dots,d-1\}}$
\end{itemize}
These identities are a consequence from the fact that the generalized observables $A_{x,k}$ are assumed to be unitary.

From the SOS expression, one can see that the bound is saturated when $\expval*{\tilde{V}_{(i,j)}} = \expval*{\tilde{P}_{(i,j)}} = \expval*{\tilde{E}(x)_{(i,j)}} = 1$.
This is the case for any state in the ground-state sub-space of the toric code and measurements defined as 
\begin{equation} 
\begin{split}
    & A_{x,k}^{(i^*,j^*)} = \frac{\lambda_k}{\sqrt{d}} \sum_{y=0}^{d-1} \omega^{kxy}\omega^{ky(y+1)} (X^{1-y}Z^{y})^k, \\
    & A^{(i^*+1,j^*-1)}_{x,k} = (X^{1-x}Z^{-x})^k, \\
    & A^{(i,j)}_{0,k} = X^k \text{ and } A^{(i,j)}_{1,k} = Z^{k} .
\end{split}
\end{equation}
Note that this corresponds to the inverse map of \Cref{eq:substitutions}.

\subsection{Qutrit toric code self-testing} \label{app:qutrit_self-testing}
In this section we give the proof of \Cref{thm:self-testing}.
To this end, we first need to introduce a result by~\cite[Proposition B.2.]{kaniewski_maximal_2019} stated below as a lemma.
\begin{lem}\label{lem: self-testing}
Let $B_{0}, B_{1} \in \mathcal{B}(\mathcal{H}_{B} )$ be unitary operators satisfying $B^{3}_{0} = B^{3}_{1} = \mathbb{I}$. If the anticommutator $\{B_{0}, B_{1}\}$ is unitary, then $\mathcal{H}_{B} \equiv \mathcal{H}_{B'} \otimes \mathcal{H}_{B''}$ for $\mathcal{H}_{B'} \equiv \mathbb{C}^{3}$ and there exists a unitary $U :\mathcal{H}_{B} \rightarrow \mathcal{H}_{B'} \otimes \mathcal{H}_{B''}$ such that
\begin{equation}
UB_{0}U^{\dagger}=X\otimes \mathbb{I},\quad UB_{1}U^{\dagger} = Z\otimes Q_{1}+Z^{2}\otimes Q_{2},
\end{equation}
where $Q_{1},Q_{2}\in\mathcal{B}(\mathcal{H}_{B''})$ are orthogonal projectors satisfying $Q_{1}+Q_{2}=\mathbb{I}$.
\end{lem}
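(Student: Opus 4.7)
The plan is to extract structural information about the pair $(B_0,B_1)$ by analysing the group commutator $W:=B_0 B_1 B_0^{-1}B_1^{-1}$, and then to invoke a Stone--von Neumann-type theorem for the qutrit Heisenberg group to get the tensor decomposition.

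First I would translate the hypothesis into an algebraic identity involving $W$ alone. Using the defining relation $B_0 B_1 = W B_1 B_0$, the anticommutator rewrites as $\{B_0,B_1\}=(W+\mathbb{I})B_1 B_0$, and unitarity of $B_1 B_0$ reduces the hypothesis to unitarity of $W+\mathbb{I}$. Expanding $(W+\mathbb{I})(W+\mathbb{I})^{\dagger}=\mathbb{I}$ and using unitarity of $W$ yields $W+W^{-1}=-\mathbb{I}$, which after multiplying by $W$ becomes the operator identity $W^{2}+W+\mathbb{I}=0$. This is strictly stronger than $W^{3}=\mathbb{I}$: it forces the unitary $W$ to have spectrum contained in $\{\omega,\omega^{2}\}$ (no $+1$ eigenvalue), and since $W$ is normal it admits an orthogonal spectral decomposition $W=\omega^{2}\widetilde{Q}_{1}+\omega \widetilde{Q}_{2}$ with $\widetilde{Q}_{1}+\widetilde{Q}_{2}=\mathbb{I}$ and $\widetilde{Q}_{1}\widetilde{Q}_{2}=0$.

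The central step is to show that $W$ is in the commutant of $\{B_0,B_1\}$, so that $\widetilde{Q}_{1},\widetilde{Q}_{2}$ also commute with $B_0,B_1$ and $\mathcal{H}_{B}$ splits orthogonally as $\widetilde{Q}_{1}\mathcal{H}_{B}\oplus \widetilde{Q}_{2}\mathcal{H}_{B}$ into pieces each invariant under $B_0,B_1$. Granted this, the tautology $B_0 B_1 = W B_1 B_0$ restricts to the genuine Weyl relation $B_0 B_1=\omega^{2}B_1 B_0$ on $\widetilde{Q}_{1}\mathcal{H}_{B}$ and $B_0 B_1=\omega B_1 B_0$ on $\widetilde{Q}_{2}\mathcal{H}_{B}$. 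A standard qutrit Stone--von Neumann argument applies on each piece: a pair of order-$3$ unitaries obeying a fixed Weyl commutation relation $B_0 B_1=\omega^{\mp 1} B_1 B_0$ is a projective representation of $\mathbb{Z}_3\times\mathbb{Z}_3$ with a fixed cocycle, whose only irreducible is the three-dimensional Pauli representation $(X,Z^{\pm 1})$; hence there exist unitaries $U_{i}:\widetilde{Q}_{i}\mathcal{H}_{B}\to \mathbb{C}^{3}\otimes\mathcal{H}_{i}''$ with $U_{i}B_{0}U_{i}^{\dagger}=X\otimes \mathbb{I}$ and $U_{i}B_{1}U_{i}^{\dagger}=Z^{c_{i}}\otimes\mathbb{I}$ where $c_{1}=1$, $c_{2}=2$. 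Setting $U=U_{1}\oplus U_{2}$, $\mathcal{H}_{B''}=\mathcal{H}_{1}''\oplus\mathcal{H}_{2}''$, and taking $Q_{1},Q_{2}$ to be the projectors onto these two summands of $\mathcal{H}_{B''}$, one reads off the desired form $UB_{0}U^{\dagger}=X\otimes\mathbb{I}$ and $UB_{1}U^{\dagger}=Z\otimes Q_{1}+Z^{2}\otimes Q_{2}$.

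The main obstacle is the centrality of $W$. Purely algebraic manipulations are slippery here: rewriting $[B_{0},W]$ using $B_{0}B_{1}=WB_{1}B_{0}$ yields expressions whose vanishing is logically equivalent to centrality, so no new information is produced, and at the level of the abstract group $\mathbb{Z}_{3}\ast\mathbb{Z}_{3}/\langle W^{3}=1\rangle$ the commutator is provably not central. My plan is therefore to exploit the strictly stronger \emph{operator} identity $W^{2}+W+\mathbb{I}=0$ together with unitarity of $B_{0},B_{1}$. Concretely, I would form the positive semidefinite operator $\Delta:=(B_{0}W-WB_{0})^{\dagger}(B_{0}W-WB_{0})$, expand it using $W^{-1}=-\mathbb{I}-W$ and $B_{0}^{-1}=B_{0}^{2}$, and reduce the result, via the quadratic relation together with $B_{0}^{3}=B_{1}^{3}=\mathbb{I}$, to zero; this forces $[B_{0},W]=0$, and the $B_{0}\leftrightarrow B_{1}$ symmetry of the hypotheses (together with $W^{-1}=[B_{1},B_{0}]$) then gives $[B_{1},W]=0$. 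This polynomial-identity-plus-unitarity argument is where the Hilbert-space character of the problem enters, and it is the technical heart of the lemma.
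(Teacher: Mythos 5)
First, note that the paper does not prove this lemma at all: it is imported verbatim (up to a unitary change of the target operators) from Ref.~\cite[Proposition~B.2]{kaniewski_maximal_2019}, so there is no in-paper proof to compare against. Judged on its own, your architecture is the right one and two of its three stages are sound: writing $\{B_0,B_1\}=(W+\mathbb{I})B_1B_0$ with $W=B_0B_1B_0^{-1}B_1^{-1}$ correctly converts the hypothesis into $W+W^{-1}=-\mathbb{I}$, hence $W^2+W+\mathbb{I}=0$ and $\operatorname{spec}(W)\subseteq\{\omega,\omega^2\}$; and once $W$ is known to commute with $B_0$ and $B_1$, the restriction to each spectral subspace obeys a genuine Weyl relation $B_0B_1=\omega^{\pm1}B_1B_0$, and the Stone--von Neumann argument for the order-$27$ Heisenberg group delivers exactly the claimed normal form. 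The genuine gap is the step you yourself flag as the technical heart: centrality of $W$ is never established. The computation you propose does not close as described. Expanding $\Delta=(B_0W-WB_0)^{\dagger}(B_0W-WB_0)$ gives $2\mathbb{I}-K-K^{\dagger}$ with $K=W^{-1}B_0^{-1}WB_0$, and substituting $W^{-1}=-\mathbb{I}-W$ and using $W+W^{-1}=-\mathbb{I}$ only reduces the required identity $K+K^{\dagger}=2\mathbb{I}$ to $N+N^{\dagger}=-\mathbb{I}$ for $N=WB_0^{-1}WB_0$ --- another statement that is \emph{equivalent} to $[W,B_0]=0$ rather than a consequence of the relations you have allowed yourself. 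Indeed the relations $B_0^3=B_1^3=W^3=\mathbb{I}$ alone define a quotient of the infinite triangle group $\Delta(3,3,3)$, in which the commutator is not central, so any proof must also invoke the \emph{conjugates} of the relation $W+W^{-1}=-\mathbb{I}$ by $B_0$ and $B_1$; your sketch never does.

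The gap is repairable by an argument of exactly the flavor you intend. Set $U=B_0B_1B_0^{-1}$ and $V=B_1$; both have order $3$ and $UV^{-1}=W$, so the hypothesis reads $UV^{2}+VU^{2}=-\mathbb{I}$. Conjugating this identity by $U$ gives the second relation $U^{2}V+V^{2}U=-\mathbb{I}$. Now put $M=UVU^{2}V^{2}$ and eliminate $VU^{2}=-\mathbb{I}-UV^{2}$ to get $M=-UV^{2}-U^{2}V$, whence
\begin{equation*}
M+M^{\dagger}=-\bigl(UV^{2}+VU^{2}\bigr)-\bigl(U^{2}V+V^{2}U\bigr)=2\,\mathbb{I}.
\end{equation*}
Since $M$ is unitary, $M+M^{\dagger}=2\mathbb{I}$ forces $M=\mathbb{I}$ (Cauchy--Schwarz saturation in $\langle\psi|M|\psi\rangle$), i.e.\ $UV=V^{-2}U^{-2}=VU$. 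Thus $B_0B_1B_0^{-1}$ commutes with $B_1$, so $W$ commutes with $B_1$; the symmetric choice $U=B_1B_0B_1^{-1}$, $V=B_0$ (for which $UV^{-1}=W^{-1}$) gives $[W,B_0]=0$. With centrality in hand, the remainder of your proof goes through, modulo the cosmetic point that one must align the $\mathbb{C}^{3}$ factors of the two spectral blocks so that $B_0$ becomes $X\otimes\mathbb{I}$ globally, and that one of $Q_1,Q_2$ may be zero. As it stands, however, your submission defers precisely the step that carries the mathematical content of the lemma, so it is not yet a proof.
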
 
Note that the exact form of the operators is different than the one in~\cite[Proposition B.2.]{kaniewski_maximal_2019}, however, as these are unitarily equivalent to each other, we pick the ones more suitable for our purposes.

We now proceed with the proof of \Cref{thm:self-testing}. For the sake of generality, we consider the case of multiple special
sites. 
The theorem, in its most general form, is stated as follows:

\begin{thm}\label{thm:self-testing_general}
Let $\ket{\psi_{\operatorname{max}}}\in\mathcal{H}$ be a state achieving maximal violation of Eq. \eqref{eq:bell_inequality_general} for $d=3$, where we assume that a partial trace of $\ket{\psi_{\operatorname{max}}}$ over any subset of parties gives a full-rank density matrix. Then there exists a unitary $U=\bigotimes_{(i,j)\in\edgeSet}U_{(i,j)}$ such that $\mathcal{H}=\mathcal{H}'\otimes\mathcal{H}''$ and
\begin{equation}
U \ket{\psi_{\max}} =\sum_{k=1}^{4} \alpha_{1,k}\ket{\tau_{k}}\ket{\eta_{1,k}}+\alpha_{2,k}\ket{\tau_{k}^{*}}\ket{\eta_{2,k}},
\end{equation}
where $\alpha_{l,k}\in \mathbb{C}$ are unknown coefficient satisfying $\sum_{l=1}^{2}\sum_{k=1}^{4}|\alpha_{l,k}|^{2}=1$, $\{\ket{\tau{_k}}\}_{k=1}^{4}\in\mathcal{H}'$ is an orthonormal basis of a qutrit toric code, $\ket{\tau_{k}^{*}}$ is a complex conjugation of $\ket{\tau_{k}}$, and $\ket{\eta_{l,k}}\in\mathcal{H}''$ are axillary states. 
\end{thm}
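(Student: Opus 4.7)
The plan is to exploit the sum-of-squares (SOS) decompositions of \Cref{eq:sum-of-squares,eq:general_sum-of-squares} to turn maximal violation into a system of stabilizer equations on $\ket{\psi_{\max}}$, and then apply \Cref{lem: self-testing} site-by-site to recognize the local measurement operators as qutrit Paulis, up to local unitaries and the complex-conjugation ambiguity encoded by the auxiliary projectors $Q_1,Q_2$. Invoking Naimark's dilation, I take every $A^{(i,j)}_{x,1}$ to be a unitary of order three. Since the SOS presents $\beta_Q^{\max}\mathbb{I}-\mathcal{B}$ as a sum of manifestly positive operators, each squared term must annihilate $\ket{\psi_{\max}}$, yielding the ``dressed'' stabilizer relations $\tilde{V}_{(i,j)}\ket{\psi_{\max}}=\ket{\psi_{\max}}$, $\tilde{P}_{(i,j)}\ket{\psi_{\max}}=\ket{\psi_{\max}}$, and $\tilde{E}_{(i_r,j_r)}(x)\ket{\psi_{\max}}=\ket{\psi_{\max}}$ as the starting point.

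The central algebraic step is to extract Heisenberg--Weyl commutation relations from these equations. For each non-special site $(i,j)$, I compose a vertex stabilizer $\tilde{V}$ with an adjacent plaquette stabilizer $\tilde{P}$ whose supports share only the qudit at $(i,j)$; using the stabilizer relations to cancel all non-$(i,j)$ factors reduces this product to a relation of the form $A^{(i,j)}_{1,1}A^{(i,j)}_{0,1}\ket{\psi_{\max}}=\omega^{\pm 1}A^{(i,j)}_{0,1}A^{(i,j)}_{1,1}\ket{\psi_{\max}}$, and the full-rank hypothesis on all reduced density matrices promotes this state-level identity to a genuine operator identity on the local Hilbert space. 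At the special site $(i^*,j^*)$ and its neighbor $(i^*+1,j^*-1)$, the analogous step additionally uses the $\tilde{E}(x)$ relations to pin the remaining observables $\bar{A}^{(i^*,j^*)}_{x,1}$ down to the generalized Paulis $(X^{1-x}Z^x)^{k}$ dictated by the inverse Fourier relation of \Cref{eq:inverse_substitutions}. Setting $B_0:=A^{(i,j)}_{0,1}$ and $B_1:=A^{(i,j)}_{1,1}$ (or the analogous pair at the special site), both operators satisfy $B_0^3=B_1^3=\mathbb{I}$ and the derived commutation relation makes $\{B_0,B_1\}$ unitary, so \Cref{lem: self-testing} furnishes a local unitary $U_{(i,j)}$ and a splitting $\mathcal{H}_{(i,j)}=\mathbb{C}^3\otimes\mathcal{H}''_{(i,j)}$ with $U_{(i,j)}B_0U_{(i,j)}^{\dagger}=X\otimes\mathbb{I}$ and $U_{(i,j)}B_1U_{(i,j)}^{\dagger}=Z\otimes Q_1^{(i,j)}+Z^2\otimes Q_2^{(i,j)}$; since $Z^2=Z^{-1}=Z^*$ in dimension three, the projectors directly label the standard versus complex-conjugate Pauli representations.

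Assembling $U=\bigotimes_{(i,j)\in\edgeSet}U_{(i,j)}$ and diagonalizing the commuting projector family $\{Q_\ell^{(i,j)}\}$ partitions the auxiliary Hilbert space into branches labelled by the choice of representation at each site. The plaquette stabilizers, which are the only generators carrying $Z$-dependence, correlate these branches: a careful argument shows that only the globally uniform branches survive, so on the support of the transformed state every site implements either the standard $(X,Z)$ or the complex-conjugate $(X,Z^*)$ representation consistently. In each of the two surviving branches, the stabilizer equations reduce to the standard (respectively complex-conjugated) qutrit toric-code stabilizer conditions, confining the state to $\operatorname{span}\{\ket{\tau_k}\}_{k=1}^{4}$ or $\operatorname{span}\{\ket{\tau_k^*}\}_{k=1}^{4}$, and summing over the two branches gives the asserted decomposition. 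The main obstacle lies in the special-site analysis: accurately propagating the $\tilde{E}(x)$ relations through the non-trivial $\lambda_k$ coefficients from \Cref{app:complex_coefficients} to pin down all three measurement bases at $(i^*,j^*)$, and confirming that the branch-consistency argument extends uniformly across both generic sites and the generalized observables of the special tile.
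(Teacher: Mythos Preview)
Your overall architecture---SOS $\Rightarrow$ stabilizer relations $\Rightarrow$ \Cref{lem: self-testing} at every site $\Rightarrow$ branch consistency---matches the paper, but the way you propose to feed \Cref{lem: self-testing} has a real gap. You claim to extract the one-site relation $A^{(i,j)}_{1,1}A^{(i,j)}_{0,1}\ket{\psi_{\max}}=\omega^{\pm1}A^{(i,j)}_{0,1}A^{(i,j)}_{1,1}\ket{\psi_{\max}}$ by composing a vertex and a plaquette stabilizer whose supports share only the edge $(i,j)$. In the toric code no such pair exists: any $V$ and $P$ that overlap share exactly \emph{two} edges, and the dagger pattern in \Cref{eq:plaquette_and_vertex_operators} is precisely arranged so that the two $\omega$-phases cancel. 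Hence $\tilde V\tilde P\ket{\psi_{\max}}=\tilde P\tilde V\ket{\psi_{\max}}$ carries no information beyond the individual stabilizer relations, and you cannot isolate a single-site $\omega$-commutation (indeed, the complex-conjugate branch with $\omega\leftrightarrow\omega^{2}$ satisfies the same two-site constraint). Without this, you have no argument that $\{A^{(i,j)}_{0,1},A^{(i,j)}_{1,1}\}$ is unitary, and \Cref{lem: self-testing} does not apply.

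The paper resolves this by reversing your order: the \emph{special site} is the seed, not an afterthought. The Fourier-transformed observables satisfy the purely algebraic identity $\{\bar A^{(i_r,j_r)}_{0,1},\bar A^{(i_r,j_r)}_{1,1}\}=-\bar A^{(i_r,j_r)}_{2,2}$ (inherited from \cite[Eq.~(114)]{kaniewski_maximal_2019}), and one shows $\bar A^{(i_r,j_r)}_{2,2}$ is unitary from the $\tilde E(2)$ stabilizer relation. This gives the anticommutator unitarity at $(i_r,j_r)$ directly. Then, for a neighbouring site such as $(i_r{-}1,j_r{-}1)$, one uses two stabilizer relations to rewrite $A^{(i_r-1,j_r-1)}_{0,1}\ket{\psi_{\max}}$ and $A^{(i_r-1,j_r-1)}_{1,2}\ket{\psi_{\max}}$ as products of unitaries on \emph{other} sites times the special-site operators; substituting into the anticommutator expresses it as a unitary times $\{\bar A^{(i_r,j_r)}_{0,1},\bar A^{(i_r,j_r)}_{1,1}\}$, hence unitary. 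This propagates outward across the lattice. Your branch-consistency sketch is morally right but also needs to be made concrete: the paper derives $\Pi^{1}_{(a)}\Pi^{2}_{(b)}\ket{\tilde\psi_{\max}}=0$ for neighbouring $a,b$ by conjugating a transformed $\tilde P$ relation with a transformed $\tilde V$ and comparing, which is the mechanism you should spell out.
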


\begin{proof}
Let us begin by outlining the proof plan. The general strategy is to show that from the maximal violation of the inequality \eqref{eq:bell_inequality_general} it follows that the observables satisfy the conditions of Lemma \ref{lem: self-testing}. 
This then allows us to transform the operators making up the Bell functional into stabilizing operators of a qutrit toric code, which then proves its self-testing. 
For clarity, the proof is divided into four parts: \textit{Part I} and \textit{II} are dedicated to proving conditions necessary for the use of Lemma \ref{lem: self-testing}: in \textit{Part I}, we show that all observables satisfy $A^{3}=\mathbb{I}$ and in \textit{Part II} we prove that all anticommutators of observables acting on the same party are unitary. 

This then allows us to use Lemma \ref{lem: self-testing} to transform the observables into the Pauli operators times some auxiliary operators. 
However, the direct application of Lemma \ref{lem: self-testing} will not give us the desirable stabilizing operators, due to the presence of projectors $Q_{1}$ and $Q_{2}$. 
To solve this issue, in \textit{Part III} we show that any terms containing the tensor product of at least one $Q_{1}$ and one $Q_{2}$ can be disregarded as they do not give any contribution to the maximal violation of the inequality. 
Lastly, in \textit{Part IV} we show how the transformed operators allow us to prove the self-testing of a qutrit toric code.

\textit{Part I.}
Since $\ket{\psi_{\text{max}}}$ achieves maximal violation, the expected value of the sum-of-squares (SOS) in \Cref{eq:general_sum-of-squares} evaluated over the state $\ket{\psi_{\max}}$ gives $0$.
As each term in the SOS is positive, for the whole sum to be $0$, each individual term has to be $0$. Moreover, for any matrix $M$ and a state $\ket{\psi}$ we have $\bra{\psi}|M|^{2}\ket{\psi}=0 \Rightarrow M\ket{\psi}=0$, therefore from the SOS \eqref{eq:general_sum-of-squares} we have
\begin{equation}
\begin{aligned}
 \tilde{V}_{(i_r,j_r-1)} \ket{\psi_{\text{max}}} &= \ket{\psi_{\text{max}}},\qquad \tilde{V}_{(i_r,j_r+1)}^{\dagger} \ket{\psi_{\text{max}}} = \ket{\psi_{\text{max}}},\qquad
 \tilde{P}_{(i_r-1,j_r)}^{\dagger} \ket{\psi_{\text{max}}} = \ket{\psi_{\text{max}}},\\ 
 &\tilde{P}_{(i_r+1,j_r)} \ket{\psi_{\text{max}}} = \ket{\psi_{\text{max}}}, \qquad \tilde{E}_{(i_r,j_r)}(2) \ket{\psi_{\text{max}}} = \ket{\psi_{\text{max}}}, 
\end{aligned}
\end{equation}
for all $r\in\{1,\ldots,R\}$, and
\begin{equation}
\forall_{(i,j) \in \bar{\vertexSet}} \quad \tilde{V}_{(i,j)} \ket{\psi_{\text{max}}} = \ket{\psi_{\text{max}}}, \qquad
\forall_{(i,j) \in \bar{\plaquetteSet}} \quad \tilde{P}_{(i,j)} \ket{\psi_{\text{max}}} = \ket{\psi_{\text{max}}} .
\end{equation}
Written explicitly, this reads 
\begin{equation}
    \begin{aligned}
        &\forall_{r \in \{1,\dots,R\}}, \quad &A^{(i_r-1,j_r-1)}_{0,2} A^{(i_r,j_r-2)}_{0,2} \bar{A}^{(i_r,j_r)}_{0,1} A^{(i_r+1,j_r-1)}_{0,1} \ket{\psi_{\text{max}}} =& \ket{\psi_{\text{max}}},\\
        &\forall_{r \in \{1,\dots,R\}}, \quad &\left( A^{(i_r-1,j_r+1)}_{0,2} \bar{A}^{(i_r,j_r)}_{0,2} A^{(i_r,j_r+2)}_{0,1} A^{(i_r+1,j_r+1)}_{0,1} \right)^{\dagger} \ket{\psi_{\text{max}}} =& \ket{\psi_{\text{max}}},\\
        &\forall_{r \in \{1,\dots,R\}}, \quad &\left( A^{(i_r-2,j_r)}_{1,1} A^{(i_r-1,j_r-1)}_{1,2} A^{(i_r-1,j_r+1)}_{1,1} \bar{A}^{(i_r,j_r)}_{1,2} \right)^{\dagger} \ket{\psi_{\text{max}}} =& \ket{\psi_{\text{max}}},\\
        &\forall_{r \in \{1,\dots,R\}}, \quad &\bar{A}^{(i_r,j_r)}_{1,1} A^{(i_r+1,j_r-1)}_{1,2} A^{(i_r+1,j_r+1)}_{1,1} A^{(i_r+2,j_r)}_{1,2} \ket{\psi_{\text{max}}} =& \ket{\psi_{\text{max}}},\\
        &\forall_{r \in \{1,\dots,R\}}, \quad &A^{(i_r-1,j_r-1)}_{0,1} A^{(i_r,j_r-2)}_{0,1} \bar{A}^{(i_r,j_r)}_{2,1} A^{(i_r+1,j_r-1)}_{2,1} A^{(i+1,j+1)}_{1,2} A^{(i+2,j)}_{1,1} \ket{\psi_{\text{max}}} =& \ket{\psi_{\text{max}}}, \\
        &\forall_{(i,j) \in \bar{\vertexSet}}, \quad &A^{(i-1,j)}_{0,2} A^{(i,j-1)}_{0,2} A^{(i,j+1)}_{0,1} A^{(i+1,j)}_{0,1} \ket{\psi_{\text{max}}} =& \ket{\psi_{\text{max}}}, \\
        &\forall_{(i,j) \in \bar{\plaquetteSet}}, \quad &A^{(i-1,j)}_{1,1} A^{(i,j-1)}_{1,2} A^{(i,j+1)}_{1,1} A^{(i+1,j)}_{1,2} \ket{\psi_{\text{max}}} =& \ket{\psi_{\text{max}}}.
    \end{aligned}
\end{equation}
As was shown in \Cref{eq:observable_property_2}, we have that $A_{x,k}^{\dagger} = A_{x,d-k}$ and $\bar{A}_{x,k}^{\dagger} = \bar{A}_{x,d-k}$, which allows us to rewrite the above as 
\begin{equation} \label{eq:stab_observables}
    \begin{aligned}
        &\forall_{r \in \{1,\dots,R\}}, \quad &A^{(i_r-1,j_r-1)}_{0,2} A^{(i_r,j_r-2)}_{0,2} \bar{A}^{(i_r,j_r)}_{0,1} A^{(i_r+1,j_r-1)}_{0,1} \ket{\psi_{\text{max}}} =& \ket{\psi_{\text{max}}},\\
        &\forall_{r \in \{1,\dots,R\}}, \quad &A^{(i_r-1,j_r+1)}_{0,1} \bar{A}^{(i_r,j_r)}_{0,1} A^{(i_r,j_r+2)}_{0,2} A^{(i_r+1,j_r+1)}_{0,2} \ket{\psi_{\text{max}}} =& \ket{\psi_{\text{max}}},\\
        &\forall_{r \in \{1,\dots,R\}}, \quad &A^{(i_r-2,j_r)}_{1,2} A^{(i_r-1,j_r-1)}_{1,1} A^{(i_r-1,j_r+1)}_{1,2} \bar{A}^{(i_r,j_r)}_{1,1} \ket{\psi_{\text{max}}} =& \ket{\psi_{\text{max}}},\\
        &\forall_{r \in \{1,\dots,R\}}, \quad &\bar{A}^{(i_r,j_r)}_{1,1} A^{(i_r+1,j_r-1)}_{1,2} A^{(i_r+1,j_r+1)}_{1,1} A^{(i_r+2,j_r)}_{1,2} \ket{\psi_{\text{max}}} =& \ket{\psi_{\text{max}}},\\
        &\forall_{r \in \{1,\dots,R\}}, \quad &A^{(i_r-1,j_r-1)}_{0,1} A^{(i_r,j_r-2)}_{0,1} \bar{A}^{(i_r,j_r)}_{2,1} A^{(i_r+1,j_r-1)}_{2,1} A^{(i+1,j+1)}_{1,2} A^{(i+2,j)}_{1,1} \ket{\psi_{\text{max}}} =& \ket{\psi_{\text{max}}}, \\
        &\forall_{(i,j) \in \bar{\vertexSet}}, \quad &A^{(i-1,j)}_{0,2} A^{(i,j-1)}_{0,2} A^{(i,j+1)}_{0,1} A^{(i+1,j)}_{0,1} \ket{\psi_{\text{max}}} =& \ket{\psi_{\text{max}}}, \\
        &\forall_{(i,j) \in \bar{\plaquetteSet}}, \quad &A^{(i-1,j)}_{1,1} A^{(i,j-1)}_{1,2} A^{(i,j+1)}_{1,1} A^{(i+1,j)}_{1,2} \ket{\psi_{\text{max}}} =& \ket{\psi_{\text{max}}} .
    \end{aligned}
\end{equation}
Our first goal is to show that the operators $A^{(i,j)}_{x,k} \; \forall_{(i,j) \neq (i_r,j_r)}$ and $\bar{A}^{(i_r,j_r)}_{x,k}$ can be transformed with a single unitary to generalized Pauli matrices. 
To this end, we make use of Lemma \ref{lem: self-testing}, and so we first need to show that
\begin{equation}
\forall_{(i,j) \neq (i_r,j_r)} \; \left(A^{(i,j)}_{x,1}\right)^{d}=\mathbb{I},\qquad \qquad \forall_{r\in\{1,\ldots,R\}}\;\left(\bar{A}^{(i_r,j_r)}_{x,1}\right)^{d} = \mathbb{I}.
\end{equation}
First, as explained in \Cref{subsec:quantum_bound}, we can assume that the measurements are PVM.
We impliedly have that $(A^{(i,j)}_{x,1})^{d} = \mathbb{I}$ is satisfied.
Therefore, for this condition to be satisfied by all operators, we have to show that $(\bar{A}^{(i,j)}_{x,1})^{d} = \mathbb{I}$. 

Let us apply $d$ times the operator $O \equiv A^{(i_r-1,j_r-1)}_{0,2} A^{(i_r,j_r-2)}_{0,2} \bar{A}^{(i_r,j_r)}_{0,1} A^{(i_r+1,j_r-1)}_{0,1}$ on state $\ket{\psi_{\text{max}}}$.
On one hand we have $O^d = (\bar{A}^{(i_r,j_r)}_{0,1})^d$ and on the other hand the first equation from \Cref{eq:stab_observables} applied recursively gives $O^d \ket{\psi_{\text{max}}} = \ket{\psi_{\text{max}}}$.
Thus,
\begin{equation}
\forall_{r\in\{1,\ldots,R\}},\quad\left(\bar{A}^{(i_r,j_r)}_{0,1}\right)^{d}\ket{\psi_{\max}} = \ket{\psi_{\max}},
\end{equation}
which shows that $\left(\bar{A}^{(i_r,j_r)}_{0,1}\right)^{d}=\mathbb{I}$ on the support of $\ket{\psi_{\max}}$, and since we assume that the reduced matrix of $\ket{\psi_{\max}}$ is full rank, this implies $\left(\bar{A}^{(i_r,j_r)}_{0,1}\right)^{d}=\mathbb{I}$. 
The same procedure can be performed for any input $x\in\{0,1,2\}$ and thus we have 
\begin{equation}\label{eq:A_bar_d}
\left(\bar{A}^{(i_r,j_r)}_{x,1}\right)^{d}=\mathbb{I}, \quad \forall_{x \in \{0,1,2\}}.
\end{equation}

\textit{Part II.}
Here we prove a second condition required to use Lemma \ref{lem: self-testing}, i.e., that
\begin{equation}
\left\{A^{(i,j)}_{0,1},A^{(i,j)}_{1,1}\right\}^{\dagger}\left\{A^{(i,j)}_{0,1},A^{(i,j)}_{1,1}\right\}=\left\{A^{(i,j)}_{0,1},A^{(i,j)}_{1,1}\right\}\left\{A^{(i,j)}_{0,1},A^{(i,j)}_{1,1}\right\}^{\dagger}=\mathbb{I},
\end{equation}
holds for all $(i,j)\in\edgeSet$. To start, from construction (see Ref.~\cite[Eq. (114)]{kaniewski_maximal_2019}), we have that
\begin{equation}\label{eq: commutation_bar}
    \left\{\bar{A}^{(i_r,j_r)}_{0,1},\bar{A}^{(i_r,j_r)}_{1,1}\right\} = -\bar{A}^{(i_r,j_r)}_{2,2}.
\end{equation}
Using the same idea as for the proof of \Cref{eq:A_bar_d} it can be easily shown that $\bar{A}^{(i_r,j_r)}_{2,2}$ is unitary, then the anticommutator is unitary as well.
Therefore, by virtue of Lemma \ref{lem: self-testing} there exist a unitary $U^{(i_r,j_r)}$ such that
\begin{equation}
U^{(i_r,j_r)}\bar{A}_{0,1}^{(i_r,j_r)}\left(U^{(i_r,j_r)}\right)^{\dagger} = X\otimes \mathbb{I},\quad U^{(i_{r},j_{r})}\bar{A}_{1,1}^{(i_r,j_r)}\left(U^{(i_r,j_r)}\right)^{\dagger} = Z\otimes Q_{1}^{(i_{r},j_{r})}+Z^{2}\otimes Q_{2}^{(i_{r},j_{r})},
\end{equation}
for all $r\in\{1,\ldots,R\}$, where $U^{(i_r,j_r)}\in \mathcal{B}(\mathcal{H}_{(i_{r},j_{r})})$, $Z\in \mathcal{B}(\mathcal{H}'_{(i_{r},j_{r})})$, $Q_{1}^{(i_{r},j_{r})},Q_{2}^{(i_{r},j_{r})}\in \mathcal{B}(\mathcal{H}''_{(i_{r},j_{r})})$, and $\mathcal{H}_{(i_{r},j_{r})}=\mathcal{H}'_{(i_{r},j_{r})}\otimes\mathcal{H}''_{(i_{r},j_{r})}$

Next, we need to show that $\left\{ A_{0,1}^{(i,j)},A_{1,1}^{(i,j)}\right\}$ is unitary for all $(i,j)\neq (i_r,j_r)$. To start, let us consider the case $(i,j)=(i_r-1,j_r-1)$. 
We can make use of the first and third relation from \Cref{eq:stab_observables}, which, again using the projective measurement assumptions, can be expressed as
\begin{equation}
\begin{aligned}
 A^{(i_r-1,j_r-1)}_{0,1} \ket{\psi_{\max}}&= A^{(i_r,j_r-2)}_{0,2} \bar{A}^{(i_r,j_r)}_{0,1} A^{(i_r+1,j_r-1)}_{0,1}\ket{\psi_{\max}},\\
 A^{(i_r-1,j_r-1)}_{1,2} \ket{\psi_{\max}} &= A^{(i_r-2,j_r)}_{1,2} A^{(i_r-1,j_r+1)}_{1,2} \bar{A}^{(i_r,j_r)}_{1,1}\ket{\psi_{\max}}.
\end{aligned} 
\end{equation}
Using these relations, we can express the anticommutator of the two observables on the left-hand side as
\begin{equation}
\begin{split}
   \left\{A_{1,2}^{(i_r-1,j_r-1)}, A_{0,1}^{(i_r-1,j_r-1)}\right\}\ket{\psi_{\max}}=\Big(&A_{1,2}^{(i_r-1,j_r-1)}  A_{0,1}^{(i_r-1,j_r-1)}+ A_{0,1}^{(i_r-1,j_r-1)}A_{1,2}^{(i_r-1,j_r-1)}\Big) \ket{\psi_{\max}}\\
    =\Big(&
    A^{(i_r,j_r-2)}_{0,2} \bar{A}^{(i_r,j_r)}_{0,1} A^{(i_r+1,j_r-1)}_{0,1}A^{(i_r-2,j_r)}_{1,2} A^{(i_r-1,j_r+1)}_{1,2} \bar{A}^{(i_r,j_r)}_{1,1}\\
    & + A^{(i_r-2,j_r)}_{1,2} A^{(i_r-1,j_r+1)}_{1,2} \bar{A}^{(i_r,j_r)}_{1,1}A^{(i_r,j_r-2)}_{0,2} \bar{A}^{(i_r,j_r)}_{0,1} A^{(i_r+1,j_r-1)}_{0,1}\Big) \ket{\psi_{\max}}. 
\end{split}
\end{equation}
Taking to the front every observable that commutes with every other observable in the above expression equals
\begin{equation}
\begin{aligned}
A^{(i_r,j_r-2)}_{0,2} A^{(i_r+1,j_r-1)}_{0,1}A^{(i_r-2,j_r)}_{1,2} A^{(i_r-1,j_r+1)}_{1,2}\Big\{\bar{A}^{(i_r,j_r)}_{0,1},\bar{A}^{(i_r,j_r)}_{1,1}&\Big\}\ket{\psi_{\max}}.
\end{aligned}
\end{equation}
The first four operators are unitary by assumption, and we have already proven that $\Big\{\bar{A}^{(i_r,j_r)}_{0,1},\bar{A}^{(i_r,j_r)}_{1,1}\Big\}$ is unitary, therefore $\Big\{ A_{0,1}^{(i_{r}-1,j_{r}-1)},A_{1,1}^{(i_{r}-1,j_{r}-1)}\Big\}$ is unitary on the support of $\ket{\psi_{\max}}$. Lastly, from our assumption that the reduced matrix of $\ket{\psi_{\max}}$ is full rank follows that $\Big\{ A_{0,1}^{(i_{r}-1,j_{r}-1)},A_{1,1}^{(i_{r}-1,j_{r}-1)}\Big\}$ is unitary.

This approach can then be repeated for all of the neighbours of $(i_{r},j_{r})$, i.e., for $(i_{r}\pm 1,j_{r}\pm 1)$. Moreover, we can next use the fact that $\Big\{A^{(i_r \pm 1,j_r \pm 1)}_{0,1},A^{(i_r \pm 1,j_r \pm 1)}_{1,1}\Big\}$ are unitary to prove the unitarity of the anticomutator for the neighbours of $(i_{r}\pm1,j_{r}\pm1)$ and so on. Through this procedure, we can show that $\Big\{A^{(i,j)}_{0,1},A^{(i,j)}_{1,1}\Big\}$ is unitary for all $(i,j)\neq (i_{r},j_{r})$. 

\textit{Part III.} From two previous parts, we can conclude that by virtue of Lemma \ref{lem: self-testing}, for each party $(i,j)$ there exists a unitary $U^{(i,j)}$ such that
\begin{equation}
\begin{aligned}
U^{(i,j)}A_{0,1}^{(i,j)}\left(U^{(i,j)}\right)^{\dagger} = X\otimes \mathbb{I},\quad U^{(i,j)}A_{1,1}^{(i,j)}\left(U^{(i,j)}\right)^{\dagger} = Z\otimes Q_{1}^{(i,j)}+Z^{2}\otimes Q_{2}^{(i,j)}, \qquad \textrm{for all }(i,j)\neq(i_{r},j_{r}),\\ 
U^{(i_r,j_r)}\bar{A}_{0,1}^{(i_r,j_r)}\left(U^{(i_r,j_r)}\right)^{\dagger} = X\otimes \mathbb{I},\quad U^{(i_{r},j_{r})}\bar{A}_{1,1}^{(i_r,j_r)}\left(U^{(i_r,j_r)}\right)^{\dagger} = Z\otimes Q_{1}^{(i_{r},j_{r})}+Z^{2}\otimes Q_{2}^{(i_{r},j_{r})}.\\ 
\end{aligned}
\end{equation}
Moreover, for each Hilbert space $\mathcal{H}_{(i,j)}$ associated with the party $(i,j)$ Lemma \ref{lem: self-testing} implies that $\mathcal{H}_{(i,j)}=\mathbb{C}^{3}\otimes \mathbb{C}^{t}$ where $\dim (\mathcal{H}_{(i,j)})=3t$. Then, in the equation above, the space $\mathbb{C}^{3}$ is acted on with the generalized Pauli operators, whereas $\mathbb{C}^{t}$ is acted on with the identity operator $\mathbb{I}$ and projectors $Q_{1}^{(i,j)}, Q_{2}^{(i,j)}$.

To make the notation more readable, we combine the "nontrivial" ($\mathbb{C}^{3}$) and "trivial" ($\mathbb{C}^{t}$) Hilbert spaces
\begin{equation}
\mathcal{H}'=\bigotimes_{(i,j)\in\edgeSet}\mathcal{H}'_{(i,j)}=(\mathbb{C}^{3})^{\otimes N}, \qquad  \mathcal{H}''=\bigotimes_{(i,j)\in\edgeSet}\mathcal{H}''_{(i,j)}=\bigotimes_{(i,j)\in\edgeSet}\mathbb{C}^{t_{(i,j)}}.
\end{equation}
Using this notation let us consider operators $X_{(i,j)},Z_{(i,j)}\in \mathcal{B}\left(\mathcal{H}\right)$ that act on the subspace $\mathcal{H}_{(i,j)}$ with operators $X$ and $Z$ respectively and with identity on every other subspace composing $\mathcal{H}$. Similarly, the projector $\Pi_{(i,j)}^{k}\in \mathcal{B}(\mathcal{H})$ acts on $\mathcal{H}_{(i,j)}''$ with projector $Q^{(i,j)}_{k}$ and with $\mathbb{I}$ on every other subspace of $\mathcal{H}$.

Let us act with the unitary $\bigotimes_{(i,j)}U^{(i,j)}$ on operators and states in Eq. \eqref{eq:stab_observables} (excluding $E_{(i_{r},j_{r})}(x)$ as they are inconsequential for the rest of the proof). This transformation yields the following stabilizing relations
\begin{equation}\label{eq:stabilizing_with_omega}
\begin{aligned} 
\forall_{(i,j) \in \vertexSet}\qquad X_{(i-1,j)}^{2} X_{(i,j-1)}^{2} X_{(i,j+1)} X_{(i+1,j)}\otimes\mathbb{I} \ket{\tilde{\psi}_{\max}}&= \ket{\tilde{\psi}_{\max}},\\
\forall_{(i,j) \in \plaquetteSet}\qquad \Omega_{(i-1,j)} \Omega_{(i,j-1)}^{\dagger} \Omega_{(i,j+1)} \Omega_{(i+1,j)}^{\dagger}\ket{\tilde{\psi}_{\max}}&= \ket{\tilde{\psi}_{\max}},
\end{aligned}
\end{equation}
where $\Omega_{(i,j)}=\Big(Z_{(i,j)}\Pi_{(i,j)}^{1}+Z_{(i,j)}^{2}\Pi_{(i,j)}^{2}\Big)$, $\mathbb{I}\in\mathcal{B}\left(\mathcal{H}''\right)$, and
\begin{equation}
\ket{\tilde{\psi}_{\textrm{max}}}= \bigotimes_{(i,j)}U^{(i,j)}\ket{\psi_{\textrm{max}}}.
\end{equation}

In the next step we show that some terms in the expansion of $\Omega_{(i-1,j)} \Omega_{(i,j-1)}^{\dagger} \Omega_{(i,j+1)} \Omega_{(i+1,j)}^{\dagger}$ nullify $\ket{\tilde{\psi}_{\max}}$, and so can be removed from the stabilizing relations. To this end, let us consider the following 
\begin{equation}\label{eq:projector_derivation}
\begin{aligned} 
\Omega_{(i-1,j)} \Omega_{(i,j-1)}^{\dagger}& \Omega_{(i,j+1)} \Omega_{(i+1,j)}^{\dagger}\ket{\tilde{\psi}_{\max}}\\
&= \Omega_{(i-1,j)} \Omega_{(i,j-1)}^{\dagger} \Omega_{(i,j+1)} \Omega_{(i+1,j)}^{\dagger}\;X_{(i,j+1)}^{2} X_{(i+1,j)}^{2} X_{(i+1,j+2)} X_{(i+2,j+1)}\ket{\tilde{\psi}_{\max}}\\
&=X_{(i,j+1)}^{2} X_{(i+1,j)}^{2} X_{(i+1,j+2)} X_{(i+2,j+1)}\;\Omega_{(i-1,j)} \Omega_{(i,j-1)}^{\dagger} \tilde{\Omega}_{(i,j+1)} \tilde{\Omega}_{(i+1,j)}^{\dagger}\ket{\tilde{\psi}_{\max}},
\end{aligned}
\end{equation}
where $\tilde{\Omega}_{k,l}=\omega^{2} Z_{(k,l)}\Pi_{(k,l)}^{1}+\omega Z_{(k,l)}^{2}\Pi^{2}_{(k,l)}$. Since 
\begin{equation}\label{eq:stabilizing_one_omega}
\Omega_{(i-1,j)} \Omega_{(i,j-1)}^{\dagger} \Omega_{(i,j+1)} \Omega_{(i+1,j)}^{\dagger}\ket{\tilde{\psi}_{\max}}=\ket{\tilde{\psi}_{\max}},
\end{equation}
we have that
\begin{equation}
\begin{aligned} 
&X_{(i,j+1)}^{2} X_{(i+1,j)}^{2} X_{(i+1,j+2)} X_{(i+2,j+1)}\;\Omega_{(i-1,j)} \Omega_{(i,j-1)}^{\dagger} \tilde{\Omega}_{(i,j+1)} \tilde{\Omega}_{(i+1,j)}^{\dagger}\ket{\tilde{\psi}_{\max}}=\ket{\tilde{\psi}_{\max}},
\end{aligned}
\end{equation}
which after multiplying by $\left(X_{(i,j+1)}^{2} X_{(i+1,j)}^{2} X_{(i+1,j+2)} X_{(i+2,j+1)}\right)^{\dagger}$ from the left side yields
\begin{equation}
\begin{aligned} 
&\Omega_{(i-1,j)} \Omega_{(i,j-1)}^{\dagger} \tilde{\Omega}_{(i,j+1)} \tilde{\Omega}_{(i+1,j)}^{\dagger}\ket{\tilde{\psi}_{\max}}=\left(X_{(i,j+1)}^{2} X_{(i+1,j)}^{2} X_{(i+1,j+2)} X_{(i+2,j+1)}\right)^{\dagger}\ket{\tilde{\psi}_{\max}}=\ket{\tilde{\psi}_{\max}}.
\end{aligned}
\end{equation}
Then we can compare the left-hand side of the above and Eq. \eqref{eq:stabilizing_one_omega}
\begin{equation}
\begin{aligned} 
&\Omega_{(i-1,j)} \Omega_{(i,j-1)}^{\dagger} \tilde{\Omega}_{(i,j+1)} \tilde{\Omega}_{(i+1,j)}^{\dagger}\ket{\tilde{\psi}_{\max}}=\Omega_{(i-1,j)} \Omega_{(i,j-1)}^{\dagger} \Omega_{(i,j+1)} \Omega_{(i+1,j)}^{\dagger}\ket{\tilde{\psi}_{\max}}.
\end{aligned}
\end{equation}
Finally, after multiplying the above by $\Omega_{(i-1,j)}^{\dagger} \Omega_{(i,j-1)}$ and expanding $\Omega_{(i,j+1)} \Omega_{(i+1,j)}^{\dagger},\tilde{\Omega}_{(i,j+1)}, \tilde{\Omega}_{(i+1,j)}^{\dagger}$ we arrive at
\begin{equation}
\begin{aligned} 
&\left(\omega^{2} Z_{(i,j+1)} \Pi^{1}_{(i,j+1)}+\omega Z_{(i,j+1)}^{2} \Pi^{2}_{(i,j+1)}\right)\left(\omega^{2} Z_{(i+1,j)} \Pi^{1}_{(i+1,j)}+\omega Z_{(i+1,j)}^{2}\Pi^{2}_{(i+1,j)}\right)^{\dagger}\ket{\tilde{\psi}_{\max}}\\
&=\left( Z_{(i,j+1)}\Pi^{1}_{(i,j+1)}+Z_{(i,j+1)}^{2}\Pi^{2}_{(i,j+1)}\right)\left( Z_{(i+1,j)}\Pi^{1}_{(i+1,j)}+ Z_{(i+1,j)}^{2}\Pi^{2}_{(i+1,j)}\right)^{\dagger}\ket{\tilde{\psi}_{\max}}.
\end{aligned}
\end{equation}
After a bit of rewriting, this simplifies to
\begin{equation}
\begin{aligned} 
\left((1-\omega) Z_{(i,j+1)}Z_{(i+1,j)}\Pi^{1}_{(i,j+1)}\Pi^{2}_{(i+1,j)}+(1-\omega^{2}) Z_{(i,j+1)}^{2}Z_{(i+1,j)}^{2}\Pi^{2}_{(i,j+1)}\Pi^{1}_{(i+1,j)}\right)\ket{\tilde{\psi}_{\max}}=0.
\end{aligned}
\end{equation}
Since $ \Pi^{1}_{(i,j+1)} \Pi^{2}_{(i+1,j)}$ and $ \Pi^{2}_{(i,j+1)}\Pi^{1}_{(i+1,j)}$ are projectors onto orthogonal spaces, for the equation above to be satisfied, both operators in the sum have to nullify $\ket{\tilde{\psi}_{\max}}$. In fact, an even stronger condition holds - since $Z_{(i,j+1)}Z_{(i+1,j)}$ is full rank, the projectors themselves nullify the state
\begin{equation}\label{eq:nullification_1}
\Pi^{1}_{(i,j+1)} \Pi^{2}_{(i+1,j)}\ket{\tilde{\psi}_{\max}}=0,\quad\Pi^{2}_{(i,j+1)}\Pi^{1}_{(i+1,j)}\ket{\tilde{\psi}_{\max}}=0.
\end{equation}
Moreover, if by using the operator $X_{(i-2,j+1)}^{2} X_{(i-1,j)}^{2} X_{(i-1,j+2)} X_{(i,j+1)}$ in \eqref{eq:projector_derivation}, we can derive the following relations
\begin{equation}\label{eq:nullification_2}
  \Pi^{1}_{(i,j+1)}\Pi^{2}_{(i-1,j)} \ket{\tilde{\psi}_{\max}}=0,\quad \Pi^{2}_{(i,j+1)}\Pi^{1}_{(i-1,j)} \ket{\tilde{\psi}_{\max}}=0,
\end{equation}
for all $(i,j)\in\plaquetteSet$. These four relations have a very interesting consequence, namely 
\begin{equation}\label{eq:projector_elimination}
\prod_{(k,l)\in\edgeSet}\left(\Pi^{1}_{(k,l)}+\Pi^{2}_{(k,l)}\right)\ket{\tilde{\psi}_{\max}}=\left(\prod_{(k,l)\in\edgeSet}\Pi^{1}_{(k,l)}+\prod_{(k,l)\in\edgeSet}\Pi^{2}_{(k,l)}\right)\ket{\tilde{\psi}_{\max}}.
\end{equation}
To show that this is the case, let us focus on an arbitrary operator $\Pi_{(i,j)}^{m}$ and consider a product
\begin{equation}
\prod_{(k,l)\in \edgeSet_{(i,j)}} \left(\Pi_{(k,l)}^{1}+\Pi_{(k,l)}^{2}\right),
\end{equation}
where $\edgeSet_{(i,j)}=\{(i,j),(i,j-2),(i,j+2),(i-2,j),(i+2,j)\}$ is the set consisting of $(i,j)$ and it's four neighbours. Then, by the virtue of Eq. \eqref{eq:nullification_1} and \eqref{eq:nullification_2}, the only term consisting of $\Pi_{(i,j)}^{k}$ that does not nulify the state is made up of other $k$ projectors
\begin{equation}
\prod_{(k,l)\in \edgeSet_{(i,j)}} \left(\Pi_{(k,l)}^{1}+\Pi_{(k,l)}^{2}\right)= \left(\prod_{(k,l)\in \edgeSet_{(i,j)}^{1}} \Pi_{(k,l)}^{1}+\prod_{(k,l)\in \edgeSet_{(i,j)}^{1}} \Pi_{(k,l)}^{2}\right).
\end{equation}
Next, we can extend that logic to the neighbours of neighbours of $(i,j)$ and so on. By repeating this procedure enough times such that we cover the whole set $\edgeSet$, we prove Eq. \eqref{eq:projector_elimination}.

Let us now come bact to the stabilizing relations \eqref{eq:stabilizing_with_omega} - since for all $(k,l)\in\edgeSet$ $Q_{1}^{(k,l)}+Q_{2}^{(k,l)}=\mathbb{I}$, we can rewrite the stabilizing conditions as
\begin{equation}
\begin{aligned} 
 X_{(i-1,j)}^{2} X_{(i,j-1)}^{2} X_{(i,j+1)} X_{(i+1,j)}\prod_{(k,l)\in\edgeSet}\left(\Pi^{1}_{(k,l)}+\Pi^{2}_{(k,l)}\right) \ket{\tilde{\psi}_{\max}}&= \ket{\tilde{\psi}_{\max}},\\
\Omega_{(i-1,j)} \Omega_{(i,j-1)}^{\dagger} \Omega_{(i,j+1)} \Omega_{(i+1,j)}^{\dagger}\prod_{(k,l)\in\edgeSet}\left(\Pi^{1}_{(k,l)}+\Pi^{2}_{(k,l)}\right)\ket{\tilde{\psi}_{\max}}&= \ket{\tilde{\psi}_{\max}}.
\end{aligned}
\end{equation}
Then, using Eq. \eqref{eq:projector_elimination} we can express the stabilizing relations as
\begin{equation}\label{eq:stabilizing_projectors}
\begin{aligned} 
 X_{(i-1,j)}^{2} X_{(i,j-1)}^{2} X_{(i,j+1)} X_{(i+1,j)}\left(\prod_{(k,l)\in\edgeSet}\Pi^{1}_{(k,l)} +\prod_{(k,l)\in\edgeSet}\Pi^{2}_{(k,l)}\right)\ket{\tilde{\psi}_{\max}}&= \ket{\tilde{\psi}_{\max}},\\
\left(Z_{(i-1,j)} Z_{(i,j-1)}^{2} Z_{(i,j+1)} Z_{(i+1,j)}^{2}\prod_{(k,l)\in\edgeSet}\Pi^{1}_{(k,l)} + Z_{(i-1,j)}^{2} Z_{(i,j-1)} Z_{(i,j+1)}^{2} Z_{(i+1,j)}\prod_{(k,l)\in\edgeSet}\Pi^{2}_{(k,l)}\right)\ket{\tilde{\psi}_{\max}}&= \ket{\tilde{\psi}_{\max}}.
\end{aligned}
\end{equation}

\textit{Part IV.}
To self-test the state $\ket{\tilde{\psi}_{\max}}$ let us consider its Schmidt decomposition 
\begin{equation}
\ket{\tilde{\psi}_{\max}} = \sum_{k=1}^{D} \alpha_{k}\ket{\phi_{k}}\ket{\eta_{k}},
\end{equation}
where $\alpha_{k}\geqslant 0$, $\ket{\phi_{k}}\in\mathcal{H}'$, $\ket{\eta_{k}}\in \mathcal{H}''$, and $D=\min (\dim(\mathcal{H}'),\dim(\mathcal{H}''))$. Let $V_{m}\subset \mathcal{H''}$ be a subspace associated to the projector $\prod_{(k,l)\in\edgeSet}\Pi^{m}_{(k,l)}$. Then, from from the stabilizing relations \eqref{eq:stabilizing_projectors} it follows that for each $k$ the state $\ket{\eta_{k}}$ is either in $V_{1}$ or $V_{2}$. To reflect that let us define $\ket{\eta_{m,k}}\in V_{m}$ to be a state $\ket{\eta_{k}}$ from $V_{m}$. This allows us to rewrite the Schmidt decomposition as 
\begin{equation}
\ket{\tilde{\psi}_{\max}} = \sum_{m=1}^{2}\sum_{k=1}^{D_{m}}\alpha_{m,k}\ket{\phi_{m,k}}\ket{\eta_{m,k}},
\end{equation}
where $D_{1}+D_{2}=D$, and $\alpha_{m,k}$ and $\ket{\phi_{m,k}}$ have an extra index $m$ that signifies to which $V_{m}$ the corresponding state $\ket{\eta_{m,k}}$ belongs. 

Substituting the above to Eq. \eqref{eq:stabilizing_projectors} yileds
\begin{equation}
\begin{aligned}
X_{(i-1,j)}^{2} X_{(i,j-1)}^{2} X_{(i,j+1)} X_{(i+1,j)} \sum_{k=1}^{D_{1}}\alpha_{1,k} \ket{\phi_{1,k}}\ket{\eta_{1,k}}&=\sum_{k=1}^{D_{1}}\alpha_{1,k} \ket{\phi_{1,k}}\ket{\eta_{1,k}},\\
Z_{(i-1,j)} Z_{(i,j-1)}^{2} Z_{(i,j+1)} Z_{(i+1,j)}^{2}\sum_{k=1}^{D_{1}}\alpha_{1,k} \ket{\phi_{1,k}}\ket{\eta_{1,k}}&=\sum_{k=1}^{D_{1}}\alpha_{1,k} \ket{\phi_{1,k}}\ket{\eta_{1,k}},\\
X_{(i-1,j)}^{2} X_{(i,j-1)}^{2} X_{(i,j+1)} X_{(i+1,j)} \sum_{k=1}^{D_{2}}\alpha_{2,k} \ket{\phi_{2,k}}\ket{\eta_{2,k}}&=\sum_{k=1}^{D_{2}}\alpha_{2,k} \ket{\phi_{2,k}}\ket{\eta_{2,k}},\\
Z_{(i-1,j)}^{2} Z_{(i,j-1)} Z_{(i,j+1)}^{2} Z_{(i+1,j)}\sum_{k=1}^{D_{2}}\alpha_{2,k} \ket{\phi_{2,k}}\ket{\eta_{2,k}}&=\sum_{k=1}^{D_{2}}\alpha_{2,k} \ket{\phi_{2,k}}\ket{\eta_{2,k}},
\end{aligned}
\end{equation}
where we used the fact that $\prod_{(k',l')\in\edgeSet}\Pi_{(k',l')}^{m}\ket{\eta_{m,k}}$ for all $k$ and $m$. Projecting the above relations onto the states $\ket{\eta_{m,k}}$ we show that 
\begin{equation}
\begin{aligned}
X_{(i-1,j)}^{2} X_{(i,j-1)}^{2} X_{(i,j+1)} X_{(i+1,j)} \ket{\phi_{1,k}}&=\ket{\phi_{1,k}},\qquad Z_{(i-1,j)} Z_{(i,j-1)}^{2} Z_{(i,j+1)} Z_{(i+1,j)}^{2}\ket{\phi_{1,k}}= \ket{\phi_{1,k}},\\
X_{(i-1,j)}^{2} X_{(i,j-1)}^{2} X_{(i,j+1)} X_{(i+1,j)} \ket{\phi_{2,k}}&=\ket{\phi_{2,k}},\qquad Z_{(i-1,j)} Z_{(i,j-1)}^{2} Z_{(i,j+1)} Z_{(i+1,j)}^{2}\ket{\phi_{2,k}}= \ket{\phi_{2,k}},\\
\end{aligned}
\end{equation}
for all $k$. However, this implies that $\ket{\phi_{1,k}}$ are states from a toric code and $\ket{\phi_{2,k}}$ are states from a complex conjugation of a toric code. Then, denoting the toric code basis as $\{\ket{\tau_{k}}\}_{k=0}^{3}$ we get
\begin{equation}
\bigotimes_{(i,j)\in\edgeSet}U_{(i,j)} \ket{\psi_{\operatorname{max}}} =  \sum_{k=1}^{4} \alpha_{1,k}\ket{\tau_{k}}\ket{\eta_{1,k}}+\alpha_{2,k}\ket{\tau_{k}^{*}}\ket{\eta_{2,k}},
\end{equation}
which ends the proof.
\end{proof}

\end{widetext}

\bibliography{toric-bell}%

\end{document}